\titlespacing{\section}{0pt}{2ex}{1ex}
\titlespacing{\subsection}{0pt}{1ex}{0ex}
\titlespacing{\subsubsection}{0pt}{0.5ex}{0ex}
\definecolor{linkcol}{HTML}{b51d3e}
\setlist{nosep}
\pgfplotsset{compat=1.18}
\newcommand{\SolveMstarSafe}[9]{%
\begingroup
  \pgfmathsetmacro{\mL}{-2.0}
  \pgfmathsetmacro{\mR}{ 4.0}
  \pgfmathsetmacro{\fL}{ LHSsym(\mL,#2,#3,#4,#5,#6,#7) - #8 }
  \pgfmathsetmacro{\fR}{ LHSsym(\mR,#2,#3,#4,#5,#6,#7) - #8 }

  \foreach \k in {1,...,12}{%
    \pgfmathparse{(\fL<0) ? -1 : 1}\let\sL\pgfmathresult
    \pgfmathparse{(\fR<0) ? -1 : 1}\let\sR\pgfmathresult
    \ifnum\sL=\sR
      \pgfmathsetmacro{\mL}{\mL - 2}
      \pgfmathsetmacro{\mR}{\mR + 2}
      \pgfmathsetmacro{\fL}{ LHSsym(\mL,#2,#3,#4,#5,#6,#7) - #8 }
      \pgfmathsetmacro{\fR}{ LHSsym(\mR,#2,#3,#4,#5,#6,#7) - #8 }
    \else
      \breakforeach
    \fi
  }%

  \foreach \iter in {1,...,#9}{%
    \pgfmathsetmacro{\mM}{ (\mL+\mR)/2 }
    \pgfmathsetmacro{\fM}{ LHSsym(\mM,#2,#3,#4,#5,#6,#7) - #8 }
    \pgfmathparse{(\fL*\fM<0) ? 1 : 0}%
    \ifnum\pgfmathresult=1
      \xdef\mR{\mM}
      \xdef\fR{\fM}
    \else
      \xdef\mL{\mM}
      \xdef\fL{\fM}
    \fi
  }%
  \xdef#1{\mM}%
\endgroup
}
\definecolor{computeColor}{RGB}{33,113,181}   
\definecolor{egressColor}{RGB}{200,54,53}     
\definecolor{stdColor}{RGB}{49,163,84}        
\definecolor{baselineColor}{RGB}{117,107,177} 
\definecolor{actionColor}{RGB}{253,174,97}    
\definecolor{stateColor}{RGB}{44,162,95}      
\newtheorem{theorem}{Theorem}[section]
\newtheorem{proposition}{Proposition}[section]
\newtheorem{lemma}{Lemma}[section]
\renewcommand{\Pr}{\mathbb{P}}
\begin{document}

\title{Incentivizing High Quality Entrants When Creators Are Strategic}
\author{Felicia Nguyen\thanks{Department of Marketing. \href{mailto:pnguy38@emory.edu}{pnguy38@emory.edu}.}\\Emory University}
\date{\today}
\maketitle
\thispagestyle{empty}
\begin{abstract}
\onehalfspacing
We study how a platform should design early exposure and rewards when creators strategically choose quality before release. A short testing window with a pass/fail bar induces a pass probability, the slope of which is the key sufficient statistic for incentives. We derive three main results. First, a closed-form ``implementability bounty'' can perfectly align creator and platform objectives, correcting for incomplete revenue sharing. Second, front-loading guaranteed impressions is the most effective way to strengthen incentives for a given attention budget. Third, when impression and cash budgets are constrained, the optimal policy follows an equal-marginal-value rule based on the prize spread and certain exposure. We map realistic ranking engines (e.g., Thompson sampling) into the model's parameters and provide telemetry-based estimators. The framework is simple to operationalize and offers a direct, managerially interpretable solution for platforms to solve the creator cold-start problem and cultivate high-quality supply.
\end{abstract}

\newpage
\setcounter{page}{1}
\section{Introduction}\label{sec:introduction}

Digital platforms increasingly confront a design problem that is as practical as it is conceptual: how to discover promising new creators while simultaneously giving those creators sufficient confidence that effort invested before launch will be noticed and rewarded \citep{bhargava2022creator, peres2024creator}. When early exposure is sparse or poorly timed, when selection at the promotion threshold is either too forgiving or too stringent, and when monetary incentives are spread thinly across low-information outcomes, the marginal return to pre-entry quality falls and a familiar cold-start externality emerges. This paper proposes a simple, implementable solution built on two instruments that platform designers already recognize: a guaranteed testing window that is paced early in the lifecycle, and a one-time, outcome-contingent bounty that is paid if a transparent graduation threshold is crossed. We show that these levers can be tuned to raise pre-entry quality, to concentrate scarce attention and cash where they buy the most incentive, and, under mild conditions, to replicate the first-best benchmark in which the platform could dictate investment directly.

Our starting point is the growing economics and marketing literature on information design and attention \citep{ke2022information}. A central insight of that work is that platforms can shape behavior by choosing what signals to release and how to structure the environment in which agents process those signals. In the canonical persuasion model, a designer commits to a signaling scheme that induces desired actions; in the rational inattention tradition, decision makers optimize subject to information costs and attention constraints, which makes the designer’s choice of \emph{what} to highlight and \emph{when} a primitive of the problem rather than an afterthought \citep{KamenicaGentzkow2011,BergemannMorris2019}. The empirical and theoretical evidence that limited attention and selective disclosure affect choices is now well established, including formal links between attention costs and discrete choice, and behavioral refinements that emphasize salience and the shrouding of non-salient attributes \citep{MatejkaMcKay2015,BordaloGennaioliShleifer2013}. In marketing, for example, \cite{shi2023design} posit through a Bayesian learning game that partial information disclosure in online platforms' reputation system can make sellers invest more in quality and maximize platform's profit. These ideas matter for platforms because early metrics and promotion rules are precisely the signals that creators and algorithms attend to, when designed thoughtfully they align incentives, and when designed poorly they mute them \citep{GabaixLaibson2006,GentzkowShapiro2010}.

A second strand relevant to our question is the economics and marketing of digital platforms, media, and advertising. The auction and ranking mechanisms that intermediate content and attention have been characterized in models of sponsored search, position auctions, and two-sided competition, with a particular emphasis on how platform rules shape the distribution of exposure and surplus across sides \citep{EdelmanOstrovskySchwarz2007,AtheyEllison2011}. The structure of platform markets and the possibility of bias or preferential integration have been analyzed in two-sided market frameworks and in models that make the platform’s editorial position explicit \citep{RochetTirole2006,Armstrong2006,DeCorniereTaylor2014}. Complementary work in marketing has documented how privacy, creative, and targeting choices affect the reach and persuasiveness of digital advertising, and how product and content ranking decisions shape discovery and conversion \citep{GoldfarbTucker2011,GhoseYang2009, ren2024advertising}. These contributions establish, both theoretically and empirically, that small design changes in ranking, pricing, and disclosure can move large amounts of traffic and revenue; they also make clear that platforms must manage scarce resources, impressions and cash, under institutional and fairness constraints \citep{YaoMela2011,LambrechtTucker2013,GordonZettelmeyerBhargava2019}.

A third, complementary strand studies exploration and learning. Classical analyses of bandit problems and strategic experimentation show that myopic allocation rules are often inefficient when information has option value; dynamic mechanisms can restore efficiency but are challenging to implement at scale \citep{Gittins1979,BoltonHarris1999}. More recent work considers environments in which a planner induces experimentation by sequential agents, and dynamic mechanisms that align private and social marginal contributions over time \citep{KremerMansourPerry2014,BergemannValimaki2010}. In a similar spirit to our paper, but under different setting, \citet{immorlica2020incentivizing} show that selective data disclosure in recommendation systems can incentivize exploration. . These ideas are directly applicable to platform discovery: the platform is the planner; new creators are the alternatives to be explored; and the allocation of early exposure is the instrument that creates the option value of learning. What is missing, and what this paper provides, is a tractable bridge between those normative insights and a set of parameters that can be shipped in a product stack and governed in a budgeting process \citep{AralWalker2011}.

The contribution is methodological and substantive. On the methodological side, we specify a transparent environment that distills the design problem to four primitives that map one-for-one to platform design choices: a guaranteed testing window, a posted graduation threshold, a continuation value that summarizes post-graduation exposure under the ranking engine, and a hit-based bounty that pays once if and only if the threshold is crossed. The model is intentionally parsimonious, binary engagement outcomes, a single index of pre-entry quality, and a reduced-form continuation scale, so that comparative statics can be derived in closed form and then read as instructions for the timing and sizing of early slots and targeted transfers. The focus on supply side incentives is not meant to diminish user side information design; it is meant to complement it. A platform can improve creators’ investment incentives without changing the user interface at all, simply by adjusting how much early exposure is guaranteed, where the graduation bar sits, how valuable graduation is, and whether there is a small payment at the moment when selection is most diagnostic.

On the substantive side, the paper delivers three design principles and a budgeting rule. First, timing matters: for a fixed number of guaranteed opportunities, the earliest feasible pacing weakly dominates alternatives. When early slots are front-loaded, creators face a larger certain return to effort without any change in the informativeness of the graduation event. Second, targeting matters: because the pass-or-fail event at the graduation bar is most informative near the middle of the cohort distribution, the bar should be placed where pass rates are neither vanishingly small nor trivially large, and small outcome-contingent payments should be concentrated on that event rather than spread across low-information outcomes. Third, alignment is feasible: a simple formula pins down a one-time bounty that eliminates the wedge between the social marginal value of pre-entry quality and the private marginal return implied by the platform’s revenue share, testing window, and continuation scale. Finally, the budgeting rule is to treat impressions and cash as tradable resources and to equalize the shadow-price-adjusted marginal value of an extra early impression and an extra expected dollar of payout. This “balanced exploration” prescription converts a theory problem into a dashboard problem: estimate the marginal lift in constrained welfare per early slot and per expected payout dollar, and then adjust the two instruments until each last unit is worth exactly its shadow price.

Our approach complements and extends the literatures summarized above in three ways. Relative to information design and attention, we focus on how signals and thresholds aimed at creators, not consumers, shape investment incentives, and we show how to use a posted bar and a one-time payment to enact the persuasion logic in a setting where the platform cannot observe pre-entry effort directly \citep{KamenicaGentzkow2011,BergemannMorris2019}. Relative to the platform and advertising literature, we treat early impressions as an incentive instrument rather than merely as an experimental resource, and we formalize the trade-off between attention and cash in a way that is compatible with campaign budgets and governance constraints \citep{EdelmanOstrovskySchwarz2007,DeCorniereTaylor2014}. Relative to exploration and dynamic mechanism design, we replace rich but fragile mechanisms with two parameters that have negligible implementation cost: front-load a small incubation window and pay a small bounty when the clearly defined bar is crossed; under familiar regularity conditions this pair attains the normative benchmark while remaining explainable to creators and auditable by policy teams \citep{KremerMansourPerry2014,BergemannValimaki2010}.

The empirical spirit of recent marketing work also informs the paper’s design emphasis. Studies of digital ad effectiveness and measurement have underscored how small adjustments in timing, targeting, and creative can move outcomes and budgets, and how governance disciplines should focus on marginal value per unit of resource rather than on average rates \citep{GoldfarbTucker2011,GordonZettelmeyerBhargava2019}. Parallel work on discovery, recommendations, and social diffusion has highlighted the role of early exposure and social proof in shaping eventual reach, as well as the heterogeneity of creators’ monetization opportunities across surfaces \citep{RutzBucklin2011,AralWalker2011}. By framing early slots as a scarce resource that purchases incentive strength and by making the bounty contingent on a high-diagnosticity event, the proposed policy directly targets those margins. It also admits clean segmentation: when segments differ in monetization, continuation value, or baseline pass rates, a common set of shadow prices can be used to skew early slots and bounties toward the cohorts where they are most effective \citep{YaoMela2011,LambrechtTucker2013}.

The findings can be summarized succinctly. In the baseline with no guarantees and no targeted payments, the marginal return to quality is weak for typical entrants; raising the graduation bar without fixing early exposure often reduces incentives by making passes rarer without making them more informative. Adding a small guaranteed testing window and front-loading it strengthens incentives mechanically by making a portion of effort certain to be seen; positioning the bar where pass events are diagnostic then makes the graduation margin decisive; and concentrating cash on that event buys a large increase in effort at low expected cost because each dollar is spent where the pass probability is most responsive to quality. Under standard convexity and smoothness conditions, there is a unique equilibrium level of pre-entry quality, it rises monotonically with each lever, and a one-line bounty eliminates the residual wedge between the social and private marginal values when revenue sharing is incomplete. Treating impressions and cash as budgets then yields a simple rule: increase early slots until the marginal lift per slot equals the shadow price of attention; increase the bounty until the marginal lift per expected payout dollar equals the shadow price of cash; and retune the bar if the pass event becomes too rare or too common for targeting to be meaningful.

The remainder of the paper formalizes these ideas and connects them to the platform design toolkit. The model section specifies the primitives and establishes how early exposure, the graduation bar, continuation value, and a hit-based bounty map to creators’ investment choices. The main results establish existence and uniqueness, deliver comparative statics, prove the optimality of front-loading, and show how a small, posted bounty implements the planner’s benchmark. The resource-constrained section casts impressions and cash as budgets and derives an “equalize marginal value per unit resource” rule that scales to segments and multi-winner discovery. The extensions show that the prescriptions are robust to over-dispersed outcomes, to index or sampling engines in which continuation is history-dependent, and to modest integrity risks. Throughout, the emphasis is on simplicity and implementability: the recommended policy relies only on guarantees and payouts tied to a threshold that the platform can publish in creator documentation and can monitor with standard telemetry. In short, when attention and cash are scarce resources and selection is a slope, good policy spends the former where the latter is steep.

\section{Model}\label{sec:model}

This section develops a tractable environment for studying how a platform can induce pre-entry quality investment by new creators through its early-stage exposure and payment policies. The model is intentionally parsimonious, outcomes are binary, effort is summarized by a single quality index, and exposure is summarized by a history-dependent allocation rule, so that the primitives map transparently to design parameters used by product and operations teams. Throughout, we interpret “engagement” broadly (e.g., a completed view, a click-through, or a purchase proxy), and treat it as a measurable success under the platform’s instrumentation.

\subsection{Players, Timing, and Signals}\label{subsec:players_timing_signals}

\begin{table}[!ht]
\centering
\caption{Notation and Primitives in the Model}
\label{tab:notation_correct}
\begin{tabularx}{\textwidth}{@{}>{\raggedright\arraybackslash}p{2.5cm} X@{}}
\toprule
\textbf{Symbol} & \textbf{Meaning} \\
\midrule
\multicolumn{2}{l}{\textit{Creator Primitives}} \\
$\mu$ & Creator's pre-entry quality choice, $\mu \in [\underline{\mu}, \overline{\mu}] \subset (0,1)$. \\
$c(\mu)$ & Cost of choosing quality $\mu$, assumed strictly convex. \\
$\alpha$ & Creator's revenue share from realized engagement, $\alpha \in (0,1]$. \\
$\Pi_C$ & The creator's discounted expected payoff function. \\
\addlinespace
\multicolumn{2}{l}{\textit{Platform Policy Instruments}} \\
$q$ & Discounted number of guaranteed impressions in the testing window. \\
$\bar{\mu}$ & The posted graduation bar, representing a target success rate. \\
$s$ & The integer threshold of successes required for graduation, $s = \lceil q\bar{\mu} \rceil$. \\
$B$ & One-time discovery bounty paid if and only if the creator graduates ($S \ge s$). \\
$H$ & Discounted continuation value (expected exposure) if the creator graduates. \\
\addlinespace
\multicolumn{2}{l}{\textit{Model Mechanics and Outcomes}} \\
$S$ & Random variable for the number of successes in $q$ trials, $S \sim \mathrm{Binomial}(q, \mu)$. \\
$P(\mu)$ & Probability of graduation for a creator of quality $\mu$; $P(\mu) = \Pr(S \ge s)$. \\
$P'(\mu)$ & Slope of the pass probability w.r.t. quality; the key measure of incentive strength, equal to a Beta PDF. \\
$\Xi(\mu)$ & Total discounted expected exposure for a creator of quality $\mu$. \\
$\gamma$ & Per-period discount factor, $\gamma \in (0,1)$. \\
\addlinespace
\multicolumn{2}{l}{\textit{General Model and Resource Constraints}} \\
$H_0, H_1$ & Discounted continuation value upon failing ($H_0$) or passing ($H_1$) the bar. \\
$\Delta H$ & The prize spread of graduating, $\Delta H = H_1 - H_0$. \\
$R, M$ & Per-entrant budgets for guaranteed impressions ($R$) and expected cash payout ($M$). \\
$\lambda_{\mathrm{imp}}, \lambda_{\$}$ & Shadow prices for the impression and cash budget constraints, respectively. \\
\bottomrule
\end{tabularx}
\end{table}
Time is discrete, $t=1,2,\ldots$, and the platform discounts future payoffs at rate $\gamma\in(0,1)$. In each period a fresh user arrives, the platform selects content to display, and an engagement outcome is realized. New creators enter over time; we analyze the incentives faced by a representative entrant $i$ at her entry date and suppress the entry index when this creates no confusion.

Upon entry, the creator chooses a quality level $\mu\in[\underline{\mu},\overline{\mu}]\subset(0,1)$ at cost $c(\mu)$, where $c$ is continuously differentiable, strictly convex, and satisfies $c'(\underline{\mu})=0$. Economically, $\mu$ captures all pre-launch investments that raise the probability that a randomly matched user finds the content valuable, scripting and editing effort, creative iteration, thumbnail and titling craft, or, in longer-form contexts, research and post-production. Convexity reflects the familiar idea that it is easy to reach basic adequacy but increasingly costly to extract additional probability points of success. From a managerial perspective, the function $c(\cdot)$ indexes the “quality elasticity” of the creator population; policies that shift the marginal return to $\mu$ will translate into predictable moves along this schedule.

When the platform surfaces the entrant in period $t$, a binary outcome $Y_t\in\{0,1\}$ is realized with
\[
Y_t \sim \mathrm{Bernoulli}(\mu)\quad\text{independently across $t$ conditional on }\mu,
\]
and $Y_t$ is observed by the platform and recorded by its analytics stack.\footnote{Assuming platform-measured outcomes rules out direct manipulation and keeps the model focused on real quality rather than reporting games. This assumption is standard in analytical work that studies incentive provision rather than fraud.} We adopt the Bernoulli specification for clarity; Section~\ref{sec:extensions} discusses how the analysis extends to richer signal technologies as long as the early-success probability is differentiable in $\mu$.

Let $x_t\in[0,1]$ denote the period-$t$ exposure probability for the entrant (e.g., the share of relevant impressions or the probability of receiving the focal slot in a feed). An \emph{allocation rule} is a measurable mapping
\[
\mathcal{A}: \mathcal{H}_t \longrightarrow [0,1],\qquad x_t=\mathcal{A}(H_t),
\]
from the observable history $H_t$ of past exposures and outcomes to the entrant’s current exposure. This notation nests many concrete ranking and pacing policies used in practice: a greedy policy that favors the highest posterior-mean item, a Thompson sampler that draws proportionally to posterior optimism, or a rule that opens a testing window and then either graduates or shelves the entrant. We define the \emph{discounted expected exposure} (or “discounted pull count”) induced by $\mathcal{A}$ at quality $\mu$ as
\[
\Xi(\mu;\mathcal{A}) \;\equiv\; \mathbb{E}_\mu\Bigg[\sum_{t\ge 1}\gamma^{t-1} x_t \Bigg],
\]
where the expectation is taken over the outcome process and any randomized elements of $\mathcal{A}$.

The platform monetizes engagement directly (ads, subscriptions) or indirectly (retention, referral). To keep the connection to managerial KPIs explicit, we let the platform’s per-period objective be expected engagement net of any transfers it commits to pay creators under announced policies. In the main analysis of Section~\ref{sec:mechanisms_main} we allow for a transparent, outcome-contingent “discovery bounty”; for now we keep transfers implicit and focus on the exposure engine.

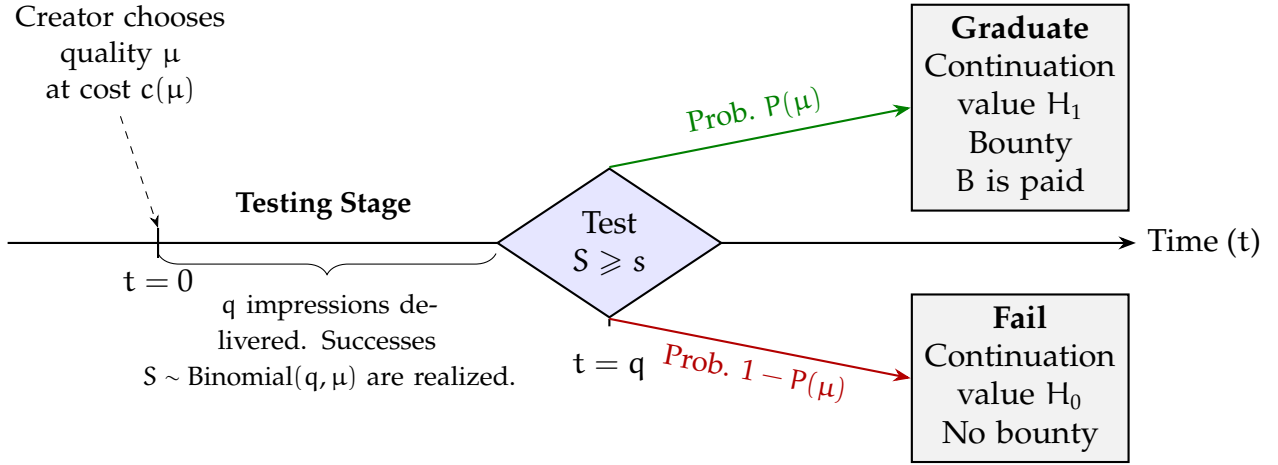
\begin{figure}[!ht]
\centering
\begin{tikzpicture}[
    node distance=1.2cm and 2cm, 
    timeline/.style={-Stealth, thick},
    stage/.style={
        rectangle, 
        draw, 
        thick, 
        fill=gray!10, 
        minimum height=1.2cm, 
        minimum width=2.8cm, 
        text width=2.6cm, 
        align=center
    },
    decision/.style={
        diamond, 
        draw, 
        thick, 
        fill=blue!10, 
        aspect=1.5, 
        minimum size=1.5cm, 
        align=center
    },
    event/.style={
        text width=3.5cm, 
        align=center, 
        font=\small
    },
    pass/.style={-Stealth, color=green!50!black, thick},
    fail/.style={-Stealth, color=red!70!black, thick}
]


\draw[timeline] (-1,0) -- (14,0) node[right] {Time ($t$)};

\node[event, align=center] (choice) at (0.5, 2.5) {Creator chooses quality $\mu$ \\ at cost $c(\mu)$};

\node[decision] (test) at (7, 0) {Test \\ $S \ge s$};

\node[stage, right=2.5cm of test, yshift=1.8cm] (graduate) {
    \textbf{Graduate} \\
    Continuation value $H_1$ \\
    Bounty $B$ is paid
};
\node[stage, right=2.5cm of test, yshift=-1.8cm] (fail) {
    \textbf{Fail} \\
    Continuation value $H_0$ \\
    No bounty
};


\draw[thick] (1,0.2) -- (1,-0.2) node[below] {$t=0$};
\draw[thick] (test.south) ++(0,-0.1) -- (test.south) ++(0, -0.3) node[below] {$t=q$};

\node[font=\small] at (3.2, 0.5) {\textbf{Testing Stage}};

\draw[decorate, decoration={brace, amplitude=8pt, mirror, raise=4pt}] (1,0) -- (test.west) 
    node[midway, below=15pt, font=\small, text width=5cm, align=center] 
    {\footnotesize $q$ impressions delivered. Successes $S \sim \text{Binomial}(q,\mu)$ are realized.};

\draw[-Stealth, dashed] (choice.south) -- (1,0.2);

\draw[pass] (test.north) -- (graduate.west) 
    node[midway, above, sloped, font=\small] {Prob. $P(\mu)$};
\draw[fail] (test.south) -- (fail.west) 
    node[midway, below, sloped, font=\small] {Prob. $1-P(\mu)$};

\end{tikzpicture}
\caption{Timeline of Creator Incentives and Platform Actions}
\label{fig:timeline_revised}
\end{figure}
On the creator side, we assume the entrant receives a revenue share $\alpha\in(0,1]$ tied to realized engagement (for example, an ad-revenue split or an affiliate rate). Because $\alpha$ is a pass-through of the platform’s marginal revenue scale, it is natural to treat it as a parameter; in applications, lower-monetized surfaces (or early-stage pilots) correspond to smaller $\alpha$, which will matter when we study how cash and exposure instruments substitute for each other. Given an allocation rule $\mathcal{A}$, the entrant’s discounted expected payoff is
\[
\Pi_C(\mu;\mathcal{A}) \;=\; \alpha\,\mu\,\Xi(\mu;\mathcal{A}) \;-\; c(\mu),
\]
while the platform’s discounted expected objective (before transfers) is
\[
W_0(\mu;\mathcal{A}) \;=\; \mu\,\Xi(\mu;\mathcal{A}).
\]
Two conceptual remarks are useful at this point. First, if creators were non-strategic arms with fixed $\mu$ drawn from a common prior, the platform would simply choose $\mathcal{A}$ to maximize $W_0$; classical index policies provide a clean benchmark for that problem. Second, once $\mu$ becomes a choice variable, the allocation rule indirectly shapes investment incentives through $\Xi(\mu;\mathcal{A})$ and its sensitivity to $\mu$. The manager’s design problem is therefore not only to learn and route traffic efficiently but also to engineer the \emph{marginal return to quality} at the moment where creators decide how much to invest before they are known to the system.

\subsection{Baseline Without Instruments and the Cold-start Externality}\label{subsec:baseline_coldstart}

We now describe a baseline in which the platform uses a standard learning-to-rank engine without any explicit early-stage guarantees or outcome-contingent payments. This captures a widely used operational approach: new creators are ``thrown into the pool,'' the algorithm opportunistically tries them when uncertainty is high enough or incumbents underperform, and those who happen to look promising early receive more traffic while others fade out quickly. The baseline is efficient when arms are exogenous; it is not generally incentive-compatible when arms invest.

To make the baseline’s implications transparent without committing to a particular algorithm, we summarize its early-stage logic by two primitives. First, there is an \emph{organic testing window} of discounted expected size $m\ge 0$ that a typical entrant receives ``for free,” generated by the algorithm’s intrinsic appetite to sample uncertain options. Second, conditional on the $m$ early exposures, the entrant is \emph{graduated} to a larger exploitation stream with discounted size $H_0\ge 0$ if her realized early performance crosses a data-driven bar. Let $P_0(\mu)\in[0,1]$ denote the probability of graduation as a function of true quality under the baseline rule.\footnote{In Section~\ref{sec:mechanisms_main} we will specialize this graduation event to an explicit binomial test and characterize $P_0(\mu)$ and its slope in closed form. For the present discussion it suffices that $P_0(\cdot)$ is increasing and differentiable on $(\underline{\mu},\overline{\mu})$.} The entrant’s discounted expected exposure under the baseline is then
\[
\Xi_0(\mu)\;=\; m \;+\; H_0\, P_0(\mu).
\]
Intuitively, $m$ plays the role of a default ``look,” while $H_0 P_0(\mu)$ captures the high-stakes graduation margin: small differences in early success rates translate into large differences in subsequent traffic. The creator’s discounted expected payoff becomes
\[
\Pi_C(\mu;\Xi_0) \;=\; \alpha\,\mu\,\big(m + H_0 P_0(\mu)\big) \;-\; c(\mu).
\]
Under standard regularity conditions, the best-response quality $\mu^\star$ solves the first-order condition
\begin{equation}\label{eq:baseline_foc}
\alpha\Big(m + H_0 P_0(\mu^\star)\Big)\;+\;\alpha\,\mu^\star\,H_0\,P_0'(\mu^\star)\;=\; c'(\mu^\star),
\end{equation}
with, under Assumption 1 below, the left-hand side strictly increasing in $\mu$ and the right-hand side strictly increasing by convexity. Equation~\eqref{eq:baseline_foc} organizes the baseline’s incentive effects into two economically distinct channels. The term $\alpha m$ is a level effect: even if graduation were impossible, a larger organic window raises the private return to quality proportionally. The term $\alpha\,\mu H_0 P_0'(\mu)$ is a slope effect: holding the expected size of the exploitation stream fixed, incentives are stronger when early performance is a \emph{steep} separator of types (high $P_0'(\mu)$), because a marginal increase in $\mu$ substantially changes the odds of “being picked” for graduation.

The cold-start externality appears when both channels are weak at the relevant portion of the type distribution. If operational constraints or conservative ranking heuristics keep $m$ small, and if the baseline bar is tuned so that $P_0'(\mu)$ is shallow around the mass of entrants (for example, because the test is noisy or implicitly too forgiving), then the left-hand side of~\eqref{eq:baseline_foc} is low throughout the interior of $[\underline{\mu},\overline{\mu}]$. In that case the unique solution $\mu^\star$ sits close to $\underline{\mu}$, and a large measure of otherwise promising creators under-invest. From a product vantage point, this is precisely the pattern one sees when new-creator cohorts are asked to “prove themselves” quickly in crowded feeds: many entrants never receive enough early, high-leverage exposure for their effort to matter, and anticipating this, they exert less effort ex ante. The platform then rationally interprets the cohort as low quality and reduces testing intensity further, a self-reinforcing loop.

Equation~\eqref{eq:baseline_foc} also clarifies why naïvely raising the graduation bar can backfire. Moving the bar upward may indeed reduce false positives, but if it simultaneously flattens $P_0'(\mu)$ at the typical entrant’s $\mu$ (for instance, by placing the bar in a region where the test has little power), then the slope effect weakens and incentives fall. Conversely, a small increase in the organic window $m$ can have a disproportionately large impact on investment if it shifts the equilibrium $\mu^\star$ into a region where $P_0'(\mu)$ is steeper, thereby activating the graduation margin. These interactions are central to the policy analysis that follows: the instruments we study in Section~\ref{sec:mechanisms_main}, front-loaded exploration slots and hit-based bounties, are designed to raise the level and the slope terms in a cost-effective, implementable way.
\subsection{Design Choices: Early Exposure and Outcome-contingent Rewards}\label{subsec:policy_instruments}

Having established the baseline case, we now introduce the two instruments that will anchor the subsequent analysis: a guaranteed early-stage exposure window for entrants and a simple outcome-contingent transfer that is paid only when early performance clears a transparent bar. The first instrument acts directly on the level of the private return to quality by ensuring that pre-entry effort will be “seen.” The second concentrates monetary incentives exactly where selection is sensitive, so that each budgeted dollar purchases a large increase in the probability of discovery. Both instruments are already familiar in practice under various names, \emph{new-creator slots}, \emph{incubation windows}, \emph{graduation thresholds}, and \emph{launch bonuses}, and are straightforward to operationalize within standard experimentation and ranking stacks.

Formally, fix four policy primitives $(q,\bar{\mu},B,H)$. The platform commits to a \emph{testing stage} that grants a newly entering creator $q\in\mathbb{N}$ expected discounted impressions.\footnote{For clarity we model $q$ as a discounted count; in implementation it corresponds to a paced sequence of guaranteed opportunities across the earliest relevant contexts. As in Section~\ref{subsec:players_timing_signals}, exposure can be interpreted as a probability of receiving a focal slot.} During this stage the entrant’s content is surfaced irrespective of early realizations, so that the ex ante marginal return to quality reflects not only the possibility of graduation but also a nontrivial mass of certain views. At the end of testing, the platform applies a \emph{graduation rule} that compares the entrant’s realized early performance to a threshold $\bar{\mu}\in(0,1)$. If the bar is cleared, the entrant is admitted into a larger exploitation stream whose discounted expected size is summarized by $H\ge 0$; otherwise testing concludes and the entrant reverts to organic exposure only.

The outcome-contingent instrument is a \emph{discovery bounty} $B\ge 0$ that is paid once if and only if the entrant clears the bar. The bounty is best viewed as a small, transparent top-up to the revenue share that is concentrated on the event that the entrant is selected for graduation. Because it attaches cash to the same observable that triggers exposure, it requires no additional data or auditing infrastructure. From a manager’s perspective, the two design choices serve distinct but complementary purposes: $q$ allocates a scarce attention resource in a way that is guaranteed to be felt by every entrant, while $B$ focuses the monetary budget on the pivotal cases where the return to effort is steepest. Section~\ref{sec:mechanisms_main} will show that a simple choice of $B$ exactly aligns private and social marginal returns.

To make the graduation rule operational, it is useful to translate the bar $\bar{\mu}$ into a test on the realized number of early successes. Let $S$ denote the number of positive outcomes observed during testing. Because outcomes are Bernoulli and exposures are paced deterministically during testing, the natural statistic is a binomial count. For a given $q$ and bar $\bar{\mu}$, the platform chooses the smallest integer threshold
\[
s \;\equiv\; \left\lceil q\,\bar{\mu}\right\rceil \in \{1,\ldots,q\},
\]
and graduates the entrant if and only if $S\ge s$. The event $\{S\ge s\}$ is exactly the one that triggers both the exposure “unlock” and the bounty payment. This pairing keeps the rule legible to creators and makes it easy to explain and govern: the platform publicizes the size of the incubation window, the success threshold, and the one-time bonus for crossing it. In the background, $H$ captures the expected discounted pull count that the exploitation engine would allocate to a promising item after graduation; in later sections we will reinterpret $H$ as the continuation value induced by a general index or Thompson policy.

The policy $(q,\bar{\mu},B,H)$ thus generates a simple two-stage exposure path. The entrant enjoys $q$ discounted impressions in expectation regardless of outcomes. Conditional on early realizations, the entrant either enters a larger exposure stream of discounted size $H$ and receives a bounty $B$, or exits the testing stage and receives no additional transfer. As we will see next, the key object that mediates incentives is the probability of clearing the bar as a function of true quality, together with its slope. These two primitives organize the entire analysis, including the comparative statics for $q$ and the design of $B$.

\subsection{Early Success Probabilities}\label{subsec:early_success}

Let $S\sim \mathrm{Binomial}(q,\mu)$ denote the number of successes realized during the testing stage when the entrant’s true quality is $\mu$. The probability of graduation is
\begin{equation}\label{eq:Pmu}
P(\mu)\;\equiv\; \Pr\!\big[S\ge s\big] \;=\; \sum_{k=s}^{q} \binom{q}{k}\,\mu^k(1-\mu)^{\,q-k},
\end{equation}
where $s=\lceil q\bar{\mu}\rceil$ as defined above. This tail probability collects, in reduced form, the entire mapping from unobserved quality to the chance of being “picked” after testing. Two analytic properties are especially useful for design. First, $P(\mu)$ is strictly increasing in $\mu$ on $(0,1)$ whenever $1\le s\le q$; better content is more likely to clear the bar. Second, and more importantly for incentives, the \emph{slope} of $P(\mu)$ with respect to $\mu$ is the density of a Beta distribution:
\begin{equation}\label{eq:Pprime}
P'(\mu) \;=\; \frac{\mu^{\,s-1}(1-\mu)^{\,q-s}}{B\!\left(s,\,q-s+1\right)} \;=\; f_{\mathrm{Beta}\left(s,\,q-s+1\right)}(\mu),
\end{equation}
where $B(\cdot,\cdot)$ is the Beta function. Equation~\eqref{eq:Pprime} follows from the identity $P(\mu)=I_{\mu}(s, q-s+1)$, where $I_{\mu}$ denotes the regularized incomplete Beta function, and the fact that $\partial_\mu I_{\mu}(a,b)=\mu^{a-1}(1-\mu)^{b-1}/B(a,b)$ for $a,b>0$.

Figure~\ref{fig:sec2-P-slope} visualizes the graduation technology. The pass probability
$P(\mu)=\Pr[S\ge s]$ is S‑shaped in quality, while the slope $P'(\mu)$ is the density
$\mathrm{Beta}(s,q{-}s{+}1)$ with a single interior peak near $(s{-}1)/(q{-}1)$.
This peak summarizes the \emph{diagnosticity} of the bar: policies that place the induced
equilibrium near this region convert small changes in effort into large changes in pass rates.
Managerially, dollars paid upon passing “buy” the slope $P'(\mu)$, whereas guaranteed
impressions “buy” the level term. The figure makes clear why tuning $(q,s)$ to keep
the cohort near the steep part of the curve is central to cost‑effective incentives.

\begin{figure}[!ht]
  \centering
  \includegraphics[width=.7\linewidth]{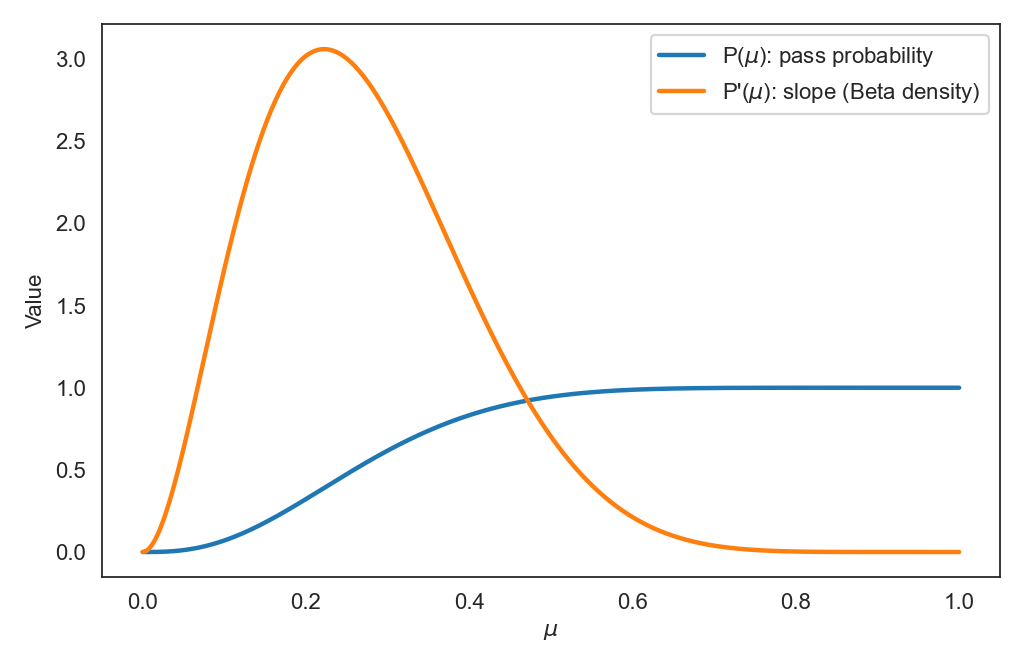}
  \caption{\textbf{Pass probability and its slope at $q_s$.} The S-curve shows $P(\mu)=\Pr[S\ge s]$ for $S\sim\mathrm{Bin}(q,\mu)$. The bell-shaped curve is $P'(\mu)$, the Beta density $\mathrm{Beta}(s,q-s+1)$. Diagnostic power at the bar is governed by the height of $P'(\mu)$; locating the equilibrium near this peak makes graduation dollars and early attention most effective.\\
\footnotesize\emph{Primitives used:} $q{=}10$, $s{=}3$.}
  \label{fig:sec2-P-slope}
\end{figure}

The interpretation of \eqref{eq:Pprime} is immediate and managerially powerful. The derivative $P'(\mu)$ measures how much a marginal increase in true quality changes the probability of graduation. Because it coincides with a Beta density, its shape is unimodal and peaks near $(s-1)/(q-1)$ when $1<s<q$; it decays to zero at the boundaries $\mu\to 0$ and $\mu\to 1$. In practical terms, this means that the graduation rule is most \emph{informative}, and therefore most potent as an incentive device, precisely in the region of the quality distribution where the bar sits. If the bar is placed too low relative to typical entrants, almost everyone passes and the slope is flat; effort is not rewarded at the margin. If the bar is placed too high, almost no one passes and the slope is again flat; effort is discouraged because the chance of graduation barely responds. By contrast, setting the bar so that a substantial fraction of entrants are on the margin produces a steep $P'(\mu)$, which in turn amplifies the effect of either guaranteed exposure $q$ or a small bounty $B$ on pre-entry investment.

It is convenient to summarize expected exposure and transfers under the policy using $P(\mu)$. Because the testing window guarantees $q$ discounted impressions and graduation unlocks a continuation stream of expected size $H$, the entrant’s discounted expected exposure is
\begin{equation}\label{eq:Xi_policy}
\Xi(\mu; q,\bar{\mu},H) \;=\; q \;+\; H\,P(\mu).
\end{equation}
If the platform pays the discovery bounty only upon graduation, the expected transfer is $B\,P(\mu)$. These two expressions permit a compact representation of objectives that will be central in Section~\ref{sec:mechanisms_main}. The platform’s discounted expected objective, net of transfers, is
\begin{equation}\label{eq:platform_objective}
W(\mu;\,q,\bar{\mu},B,H)\;=\; \mu\,\big[q + H P(\mu)\big]\;-\;B\,P(\mu),
\end{equation}
while the entrant’s discounted expected payoff, given a revenue share $\alpha\in(0,1]$, is
\begin{equation}\label{eq:creator_objective}
\Pi_C(\mu;\,q,\bar{\mu},B,H) \;=\; \alpha\,\mu\,\big[q + H P(\mu)\big] \;+\; B\,P(\mu) \;-\; c(\mu).
\end{equation}
Equations~\eqref{eq:platform_objective}-\eqref{eq:creator_objective} make clear how the instruments operate. The testing window $q$ raises payoffs linearly even absent graduation; it guarantees that a portion of the creator’s effort will be exposed. The continuation size $H$ scales the strategic value of being selected; larger $H$ increases the stakes and, through $P(\mu)$, the convexity of exposure with respect to quality. The bounty $B$ leverages the steep region of $P'(\mu)$ to target marginal incentives cost-effectively: because the transfer occurs only when graduation is imminent, a small $B$ can deliver a large increase in effort when $P'(\mu)$ is large. Finally, the bar $\bar{\mu}$ shapes the location and height of the $P'(\mu)$ peak and therefore governs the trade-off between discoverability and selectivity. The four parameters thus admit a natural division of labor: $q$ makes investment pay even for those who will not immediately graduate, $H$ and $\bar{\mu}$ determine the intensity and locus of selection, and $B$ aligns private incentives with the platform’s long run objective without paying for low-information outcomes.

A final observation concerns robustness to the exact mechanics of exploitation. In \eqref{eq:Xi_policy} we have summarized the continuation by a single parameter $H$. This simplification is without loss for the comparative statics logic that follows. In a fully specified ranking engine, $H$ should be interpreted as the expected discounted pull count that an entrant would receive conditional on graduation under the chosen index or sampling policy. All subsequent expressions carry through with that interpretation. The advantage of the present summary is that it keeps the object of design visible to managers: $H$ is ``how much graduation is worth,'' $q$ is ``how much we guarantee up front,'' $\bar{\mu}$ is “where we set the bar,” and $B$ is “how much we are willing to pay when the bar is cleared.''

\paragraph{Equilibrium investment under the policy.}
Given the primitives in \eqref{eq:Xi_policy}-\eqref{eq:creator_objective}, the entrant chooses $\mu$ to maximize $\Pi_C(\mu;q,\bar{\mu},B,H)$. Differentiating with respect to $\mu$ and invoking \eqref{eq:Pprime} yields the first-order condition for an interior solution:
\begin{equation}\label{eq:policy_foc}
\alpha\Big[q + H P(\mu^\star)\Big]\;+\;\alpha\,\mu^\star\,H\,P'(\mu^\star)\;+\;B\,P'(\mu^\star)\;=\;c'(\mu^\star).
\end{equation}
The structure of \eqref{eq:policy_foc} mirrors the economic decomposition already visible in the baseline. The guaranteed testing window $q$ operates through a level effect that does not depend on the entrant’s realized early outcomes; it ensures that some portion of effort is monetized with certainty. The continuation size $H$ scales both the level and the slope terms: by making graduation more valuable, it raises incentives even for creators who are confident of graduation (the $H P(\mu)$ term) and, crucially, it steepens the marginal return to effort for those on the cusp (the $H P'(\mu)$ term). The bounty $B$ reinforces the slope term by concentrating transfers exactly where the graduation probability is sensitive. Equation \eqref{eq:policy_foc} therefore already encodes the central design message for managers: front-load some exposure to make effort pay on average, set the bar where the slope is steep, and reserve payments for the pivotal boundary where a small improvement in quality most changes future allocation.

To make the optimization problem well-posed without imposing heavy structure, we maintain a mild regularity assumption that is satisfied by the standard functional forms used in applications.

\noindent\textbf{Assumption 1 (Regularity and single crossing).} The cost function $c$ is twice continuously differentiable and strictly convex on $[\underline{\mu},\overline{\mu}]$. Moreover, $c''(\mu)$ is large enough relative to $\sup_{\mu\in(\underline{\mu},\overline{\mu})}\big|\,\alpha \mu H P''(\mu)+B P''(\mu)\,\big|$ so that the difference
\[
\Delta(\mu)\;\equiv\;c'(\mu)\;-\;\Big(\alpha[q+H P(\mu)]\;+\;\alpha \mu H P'(\mu)\;+\;B P'(\mu)\Big)
\]
is strictly increasing in $\mu$.

Assumption~1 is weak in the sense that it holds whenever $c$ is, for example, quadratic or has increasing marginal costs with slope that dominates the local curvature of $P$; it does not require global concavity of the benefit term. Under Assumption~1, $\Delta(\mu)$ crosses zero at most once, which implies existence and uniqueness of $\mu^\star$ solving \eqref{eq:policy_foc}. Two further observations are immediate and will be formalized in Section~\ref{sec:mechanisms_main}. First, because the left-hand side of \eqref{eq:policy_foc} shifts upward with $q$, $H$, $\alpha$, and $B$ at every $\mu$, the unique solution $\mu^\star$ is weakly increasing in each policy parameter. Second, the comparative statics are economically interpretable: expanding $q$ raises the base return to quality, expanding $H$ raises the strategic stakes of being selected, increasing $\alpha$ lets creators internalize more of the engagement surplus, and increasing $B$ targets additional marginal returns where they are most effective.

\paragraph{Timing of guaranteed exposure and the case for front-loading.}
The testing window has been represented thus far as a discounted count $q$. How that count is scheduled over calendar time matters for incentives because future exposures are discounted while the selection statistic $S$ and the graduation probability $P(\mu)$ depend only on the \emph{number} of testing impressions, not their timing, under the maintained i.i.d.\ outcome structure. To see this, fix $q$, $B$, and the graduation rule $(s,H)$, and consider two schedules that differ only in when the $q$ guaranteed impressions occur. The direct marginal benefit of effort that does not pass through selection is $\alpha\,\mu \sum_{t\in\mathcal{E}}\gamma^{t-1}$, where $\mathcal{E}$ indexes the testing slots. This expression is maximized by placing guaranteed impressions as early as feasible. By contrast, both $P(\mu)$ and $P'(\mu)$ depend on the distribution of $S$ conditional on the number of trials and are therefore invariant to the schedule. Consequently, for a fixed $q$ the earliest feasible pacing of testing impressions weakly increases the left-hand side of \eqref{eq:policy_foc} at every $\mu$; by single crossing, it weakly increases $\mu^\star$. 

From an operational standpoint, this argument justifies the common practice of “incubation windows” or “launch ramps” that are concentrated near entry. Early pacing is not merely generous; it is the cost-effective way to convert a given testing budget into stronger pre-entry incentives, precisely because it leaves the selection statistic unchanged while increasing the certain component of the return to effort. The design guidance is robust to the exact form of the exploitation engine and, as emphasized earlier, only requires that testing outcomes be aggregated by counts rather than by time.

\paragraph{Mapping primitives to implementation and measurement.}
Equations \eqref{eq:policy_foc}-\eqref{eq:creator_objective} also clarify how the model’s primitives translate to parameters and metrics that product teams can own. The guaranteed window $q$ corresponds to a paced probability of occupying a focal slot across the earliest relevant contexts (e.g., first sessions or first $N$ candidate exposures), and its execution can be audited in the same way experimentation platforms audit treatment delivery. The bar $\bar{\mu}$ is operationalized as a threshold $s$ on observed successes; its effective location in the entrant distribution can be monitored by tracking the empirical pass rate and the empirical analogue of $P'(\mu)$, namely the change in pass rate with respect to small improvements in early-stage quality proxies. The continuation value $H$ is simply the expected discounted pull count that the ranking policy allocates after graduation; it can be estimated from holdout routing maps or by simulating the index/sampling process on historical posteriors. Finally, the bounty $B$ is a one-time payment that triggers on the same event as graduation, which makes it administratively straightforward and easy to explain in creator-facing documentation. 

This mapping matters in two ways. First, it makes the instruments testable and governable: a platform can run policy pilots that vary $q$ and $B$ while holding $s$ fixed near the empirical margin where $P'(\mu)$ is steep, and then read off the shift in pre-entry effort from creator-side process data (e.g., production time, revision counts, asset quality). Second, it connects the theory to budgeting: the attention cost of $q$ is measured in guaranteed impressions, the cash cost of $B$ is measured in expected payouts $B P(\mu^\star)$, and both can be allocated within the same planning framework as ad inventory and incentive programs. In Section~\ref{sec:mechanisms_main} we formalize these trade-offs through a simple resource-constrained program that delivers a clear “equalize marginal value per unit resource” rule.

\paragraph{Summary and link to the main results.}
The model developed in this section delivers a compact but expressive representation of supply side incentives under platform control. The key analytical object is the binomial tail $P(\mu)$ and, especially, its slope $P'(\mu)$, which pin down how much a marginal improvement in quality changes the odds of discovery. Guaranteeing a modest testing window $q$ makes effort pay even without graduation; concentrating payments on the graduation event $B$ buys marginal incentives exactly where they matter; tuning the bar $\bar{\mu}$ determines where the slope is steep; and the continuation value $H$ sets the stakes of selection. Under weak regularity, the entrant’s best response is unique and responds monotonically to these levers. The next section exploits this structure to deliver comparative statics results, a constructive bounty that implements the planner’s first best, and a resource-aware rule for balancing exposure and cash when both are scarce.

\section{Main Results}\label{sec:mechanisms_main}

This section develops the main analytical results that connect the primitives introduced in Section~\ref{sec:model} to implementable design rules. We begin by characterizing the entrant’s best response under the policy $(q,\bar{\mu},B,H)$ and establishing existence and uniqueness of the equilibrium quality choice. We then derive comparative statics that link each managerial option to the induced shift in pre-entry investment. The results are stated and interpreted in the body; full proofs and technical lemmas are deferred to the appendix.

\subsection{Creator Best Response}\label{subsec:best_response}

Given the testing window $q$, bar $\bar{\mu}$ (equivalently the integer threshold $s=\lceil q\bar{\mu}\rceil$), bounty $B$, and continuation value $H$, the entrant chooses $\mu$ to maximize the discounted payoff in \eqref{eq:creator_objective}. Differentiating and applying the Beta-density identity in \eqref{eq:Pprime} yields the first-order condition for an interior optimum,
\begin{equation}\label{eq:foc_main}
\alpha\Big[q + H P(\mu^\star)\Big]\;+\;\alpha\,\mu^\star\,H\,P'(\mu^\star)\;+\;B\,P'(\mu^\star)\;=\;c'(\mu^\star),
\end{equation}
where $P(\mu)=\sum_{k=s}^{q}\binom{q}{k}\mu^k(1-\mu)^{q-k}$ and $P'(\mu)=\mu^{s-1}(1-\mu)^{q-s}/B(s,q-s+1)$ for $1\le s\le q$. The left-hand side displays two economically distinct channels through which the platform’s policy acts. The \emph{level term}, $\alpha[q+H P(\mu)]$, captures the certain return from guaranteed testing and the expected return from graduation at the prevailing probability. The \emph{slope term}, $\alpha \mu H P'(\mu)+ B P'(\mu)$, captures how a marginal improvement in quality changes the odds of graduation and therefore the expected continuation exposure and transfer.\footnote{Throughout we assume outcomes are platform-measured and not manipulable; this keeps the slope term focused on real improvements in content rather than reporting.}

To formalize well-posedness, recall Assumption~1 in Section~\ref{sec:model}, which imposes strict convexity of $c$ and a mild single-crossing condition ensuring that the marginal cost dominates any local curvature of the benefit term. Define the gap function
\[
\Delta(\mu; q,\bar{\mu},B,H,\alpha)\;\equiv\;c'(\mu)\;-\;\Big(\alpha[q+H P(\mu)]\;+\;\alpha \mu H P'(\mu)\;+\;B P'(\mu)\Big).
\]
Under Assumption~1, $\Delta(\cdot)$ is strictly increasing on $[\underline{\mu},\overline{\mu}]$, and therefore crosses zero at most once.

\begin{proposition}[Existence and uniqueness]\label{prop:exist_unique}
Suppose Assumption~1 holds. Then there exists a unique solution $\mu^\star\in[\underline{\mu},\overline{\mu}]$ to the optimality condition \eqref{eq:foc_main}. If $\Delta(\underline{\mu})>0$ the solution is $\mu^\star=\underline{\mu}$; if $\Delta(\overline{\mu})<0$ the solution is $\mu^\star=\overline{\mu}$; otherwise $\mu^\star$ is the unique interior point satisfying $\Delta(\mu^\star)=0$.
\end{proposition}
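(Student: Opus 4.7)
The plan is to recast the first-order condition~\eqref{eq:foc_main} as a one-dimensional root problem in the gap function $\Delta(\mu;q,\bar{\mu},B,H,\alpha)$ and combine continuity with the strict monotonicity guaranteed by Assumption~1. The key identity is $\Pi_C'(\mu)=-\Delta(\mu)$, so strict monotonicity of $\Delta$ is equivalent to strict concavity of the creator's objective. This, together with compactness of $[\underline{\mu},\overline{\mu}]$, will deliver both existence and uniqueness, and will automatically select between interior and boundary optima by a single sign check at the endpoints.

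I would first verify continuity of $\Delta$. The map $\mu\mapsto P(\mu)$ is a polynomial of degree $q$; $P'(\mu)$ is the $\mathrm{Beta}(s,q-s+1)$ density, a polynomial on $(0,1)$ divided by the normalizing constant $B(s,q-s+1)$; and $c\in C^2$ by Assumption~1. Hence $\Delta(\cdot)$ is continuous on the compact interval $[\underline{\mu},\overline{\mu}]$, and the Weierstrass theorem already yields a maximizer of $\Pi_C$. The remaining work is to locate that maximizer and show it is the only one.

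Next, I would invoke Assumption~1, which is phrased precisely so that $\Delta$ is strictly increasing on $[\underline{\mu},\overline{\mu}]$. This implies $\Pi_C''=-\Delta'<0$, so $\Pi_C$ is strictly concave and admits at most one maximizer. The location of $\mu^\star$ is then pinned down by the sign of $\Delta$ at the endpoints. If $\Delta(\underline{\mu})>0$, then $\Delta>0$ throughout by monotonicity, $\Pi_C'<0$, and the maximum sits at $\mu^\star=\underline{\mu}$. Symmetrically, if $\Delta(\overline{\mu})<0$ then $\Pi_C'>0$ on the whole interval and $\mu^\star=\overline{\mu}$. In the remaining case $\Delta(\underline{\mu})\le 0\le \Delta(\overline{\mu})$, continuity of $\Delta$ together with the intermediate value theorem produce an interior root, and strict monotonicity makes it the unique one; strict concavity of $\Pi_C$ confirms that this root is a global maximizer rather than a saddle.

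The main obstacle is conceptual rather than technical. Assumption~1 must be read as supplying a \emph{uniform} sign condition on $\Delta'$ across the whole interior $(\underline{\mu},\overline{\mu})$, not merely a local one near a candidate optimum; it is that uniformity, obtained by requiring $c''$ to dominate $\sup_{\mu}|\alpha\mu H P''(\mu)+B P''(\mu)|$ plus the $\alpha H P'$ contribution, which converts a routine first-order argument into a global existence-uniqueness statement valid across the entire policy grid $(q,\bar{\mu},B,H,\alpha)$. Once that uniformity is secured, the remainder of the proof is a transparent corollary of strict monotonicity and the intermediate value theorem, with the boundary cases dispatched by the endpoint sign check above.
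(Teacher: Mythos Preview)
Your proposal is correct and follows essentially the same route as the paper: establish continuity of $\Delta$, invoke Assumption~1 for strict monotonicity of $\Delta$ (equivalently strict concavity of $\Pi_C$ via $\Pi_C'=-\Delta$), and then use endpoint sign checks plus the intermediate value theorem to classify the boundary and interior cases. The paper organizes the argument into two preparatory lemmas (deriving the FOC with strict concavity, and showing $\Delta$ is strictly increasing) before assembling them exactly as you do.
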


The proposition highlights a simple but important message. Because the testing window $q$ and the bounty $B$ shift the right-hand side of \eqref{eq:foc_main} upward at every $\mu$, they do not merely increase effort for the “almost-graduating”; they also protect against corner solutions at the low end by ensuring that even modest quality is monetized with certainty in testing. Conversely, if $q$ is negligible and the graduation rule is either too forgiving or too stringent, the slope term $P'(\mu)$ is flat where most entrants reside, and the unique optimum may sit near the lower bound. In practice this manifests as under-investment by new creators who rationally anticipate that their effort will not be observed at a margin where it affects allocation.

Two boundary cases illustrate the roles of the policy levers. When $B=0$ and $H=0$, the right-hand side of \eqref{eq:foc_main} collapses to the constant $\alpha q$, and the optimal $\mu^\star$ is pinned down entirely by the trade-off between a guaranteed exposure stipend and convex effort costs. When $q=0$ but the continuation value $H$ is large and the bar is calibrated so that $P'(\mu)$ is steep near the mass of entrants, the slope term dominates and a small change in $\mu$ produces a large swing in graduation odds; in that case the platform is relying purely on selection to create incentives, which may be efficient if the feed is not capacity-constrained but is brittle if early outcomes are noisy. The general policy in \eqref{eq:foc_main} allows the platform to balance the two mechanisms: use $q$ to ensure that effort is not wasted, and use $(H,\bar{\mu})$, optionally reinforced by $B$, to make the graduation margin decisive where it should be.

A simple parametric example helps fix ideas. Take $c(\mu)=\tfrac{\kappa}{2}(\mu-\underline{\mu})^2$ with $\kappa>0$. Then the left-hand side of \eqref{eq:foc_main} is continuous and strictly increasing in $\mu$, while the right-hand side is linear in $\mu$. The intersection exists and is unique, and comparative statics are transparent: increasing $q$, $H$, $\alpha$, or $B$ shifts the benefit schedule up and therefore increases $\mu^\star$. The role of the bar is to re-locate the peak of $P'(\mu)$; moving $s$ so that the implied $(s-1)/(q-1)$ aligns with the mass of entrants steepens the slope term at the relevant margin and lowers the required increase in $q$ or $B$ needed to attain a given target quality.

\subsection{Comparative Statics}\label{subsec:comparative_statics}

We now establish that the entrant’s equilibrium quality is monotone in each parameter under weak conditions. The result formalizes the intuition that guaranteed exposure increases the base return to effort, that a more valuable continuation raises the stakes of being selected, that a higher revenue share lets creators internalize more of the surplus, and that a discovery bounty adds targeted marginal incentives at the graduation threshold.

\begin{theorem}[Monotone comparative statics]\label{thm:mcs}
Suppose Assumption~1 holds. Then the unique best response $\mu^\star(q,\bar{\mu},B,H,\alpha)$ is weakly increasing in $q$, $H$, $B$, and $\alpha$. In particular,
\[
\frac{\partial \mu^\star}{\partial q}\;\ge\;0,\qquad 
\frac{\partial \mu^\star}{\partial H}\;\ge\;0,\qquad 
\frac{\partial \mu^\star}{\partial B}\;\ge\;0,\qquad
\frac{\partial \mu^\star}{\partial \alpha}\;\ge\;0,
\]
whenever the corresponding partial derivatives exist; at corners the monotone dependence holds in the sense of set-valued supermodular comparative statics.
\end{theorem}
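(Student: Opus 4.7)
The plan is to combine the implicit function theorem applied to the first-order condition \eqref{eq:foc_main} with Topkis-style monotone comparative statics to handle both corner solutions and the discrete lever $q$. Let $\Delta(\mu;\theta)$ denote the gap function from Proposition~\ref{prop:exist_unique}, where $\theta\in\{q,H,B,\alpha\}$. Assumption~1 delivers $\partial\Delta/\partial\mu>0$, so at any interior optimum the IFT yields
\[
\frac{\partial\mu^\star}{\partial\theta}\;=\;-\,\frac{\partial\Delta/\partial\theta}{\partial\Delta/\partial\mu},
\]
and the sign of $\partial\mu^\star/\partial\theta$ equals that of $-\partial\Delta/\partial\theta$. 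At corners $\mu^\star\in\{\underline\mu,\overline\mu\}$ the IFT fails, and there I would invoke Topkis: it suffices that $\Pi_C$ have weakly increasing differences in $(\mu,\theta)$, which follows from non-negative cross-partials on the interior.

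For the continuous parameters $\alpha$, $H$, and $B$ the sign check is direct, using $P(\mu)\in[0,1]$ and $P'(\mu)\geq 0$ from the Beta-density formula \eqref{eq:Pprime}:
\[
-\partial\Delta/\partial\alpha\;=\;q+HP(\mu)+\mu H P'(\mu)\;\geq\;0,\quad -\partial\Delta/\partial H\;=\;\alpha P(\mu)+\alpha\mu P'(\mu)\;\geq\;0,\quad -\partial\Delta/\partial B\;=\;P'(\mu)\;\geq\;0.
\]
Each of these coincides with $\partial^2\Pi_C/\partial\mu\partial\theta$, so $\Pi_C$ is supermodular in every relevant pair $(\mu,\theta)$ and Topkis handles the corners automatically, giving the set-valued version at the boundaries.

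The main obstacle is the $q$ comparative static, because $q\in\mathbb{N}$ is discrete and the graduation threshold $s=\lceil q\bar\mu\rceil$ also moves with $q$. The cleanest route is Topkis on the product lattice $[\underline\mu,\overline\mu]\times\mathbb{N}$: I would verify that $\partial_\mu\Pi_C(\mu;q+1)\geq\partial_\mu\Pi_C(\mu;q)$. Writing $\Delta P_q(\mu):=P(\mu;q+1,s')-P(\mu;q,s)$ and $\Delta P'_q(\mu)$ for the corresponding slope increment with $s'=\lceil(q+1)\bar\mu\rceil\in\{s,s+1\}$, the increment equals
\[
\alpha\bigl[1+H\,\Delta P_q(\mu)\bigr]\;+\;(\alpha\mu H+B)\,\Delta P'_q(\mu).
\]
When the ceiling does not jump ($s'=s$), the identity $P(\mu;q+1,s)-P(\mu;q,s)=\binom{q}{s-1}\mu^s(1-\mu)^{q-s+1}\geq 0$ makes the first bracket at least $\alpha$, and $\Delta P'_q(\mu)$ is a signed difference of two Beta densities on consecutive shape lattices whose pointwise sign can be controlled via a log-ratio comparison. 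When the ceiling jumps, I would decompose the step as $(q,s)\to(q+1,s)\to(q+1,s+1)$, handling the first leg via the same identity and either restricting to the generic case $q\bar\mu\notin\mathbb{Z}$ (so $s$ is locally constant) or showing that the probability mass lost when the bar tightens is dominated by the $\alpha$ level term. This Beta-density bookkeeping around jumps of $s$ is where the bulk of a formal write-up would reside; once it is in place, collecting the four sign inequalities via IFT on the interior and Topkis at the boundary yields the theorem.
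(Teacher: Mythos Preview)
Your treatment of $H$, $B$, and $\alpha$ is essentially identical to the paper's: compute $\partial G/\partial\theta$ (equivalently $-\partial\Delta/\partial\theta$), check signs using $P\in(0,1)$ and $P'>0$, and invoke the implicit function theorem with $\partial G/\partial\mu<0$ from Assumption~1. For corners the paper uses a slightly more elementary argument than Topkis---since $\Delta(\mu;\theta)$ is strictly increasing in $\mu$ and weakly decreasing in $\theta$, its unique zero moves weakly right---but this is equivalent to your increasing-differences check.

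The divergence is your treatment of $q$. The paper simply computes $\partial G/\partial q=\alpha>0$, treating $q$ as a continuous lever that enters only through the level term $\alpha q$ and holding the graduation technology $(P,P')$ fixed. This is consistent with the paper's later separation (Section~\ref{subsec:frontloading}) between the raw trial count $Q$ that determines $(P_Q,P'_Q)$ and the discounted exposure mass $q(\tau)$ that enters the certain-return term; the comparative static in Theorem~\ref{thm:mcs} is about the latter. Your discrete analysis on $\mathbb{N}$, with $s=\lceil q\bar\mu\rceil$ moving, is attacking a harder and different question, and your sketch does not close: the increment $\Delta P'_q(\mu)=f_{\mathrm{Beta}(s,q-s+2)}(\mu)-f_{\mathrm{Beta}(s,q-s+1)}(\mu)$ is \emph{not} pointwise signed (the two densities cross once on $(0,1)$), so the increasing-differences inequality you need can fail on part of the interval, and the ``log-ratio comparison'' you allude to will not deliver a uniform sign. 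If you want to follow the paper, simply hold $(P,P')$ fixed and differentiate in the continuous $q$; if you want the genuinely discrete statement, you would need an additional assumption (e.g., that the $\alpha$ level increment dominates the worst-case negative contribution from $(\alpha\mu H+B)\Delta P'_q$), which the paper does not impose.
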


The proof proceeds by differentiating the gap condition $\Delta(\mu^\star)=0$ and invoking the implicit function theorem. Under Assumption~1, $\partial \Delta/\partial \mu>0$, while $\partial \Delta/\partial q=-\alpha<0$, $\partial \Delta/\partial H=-(\alpha P(\mu)+\alpha \mu P'(\mu))<0$, $\partial \Delta/\partial B=-P'(\mu)<0$, and $\partial \Delta/\partial \alpha=-(q+H P(\mu)+\mu H P'(\mu))<0$. The signs imply that an increase in any of the policy levers requires a compensating increase in $\mu^\star$ to restore the equality. Economically, the theorem says that each lever rotates the private return to effort upward, albeit through different channels, and therefore induces greater pre-entry investment.

\begin{figure}[!ht]
  \centering
  \includegraphics[width=.72\linewidth]{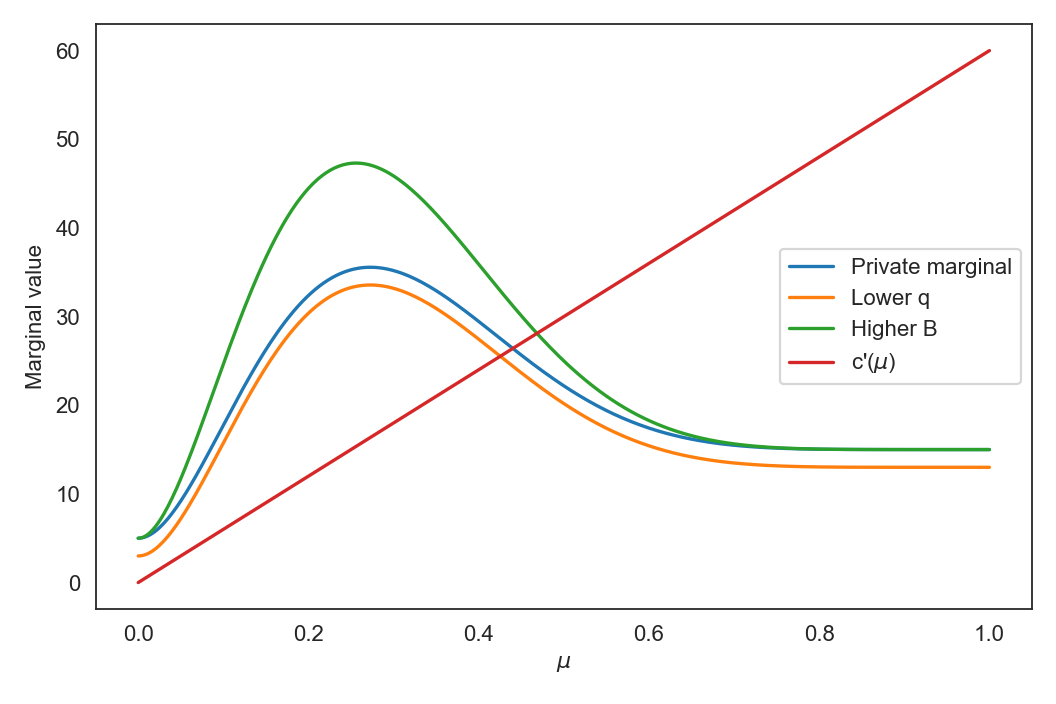}
  \caption{\textbf{Best-response crossing and comparative statics.}
  The creator’s private marginal benefit crosses $c'(\mu)$ at $\mu^\star$. Lowering $q$ shifts the level term down; raising $B$ adds slope weight via $P'(\mu)$, moving the crossing to the right.\\
\footnotesize\emph{Primitives used:} $q{=}10$, $s{=}3$, $\alpha{=}0.5$, $H_0{=}0$, $\Delta H{=}20$, $\kappa{=}60$, $\mu_0{=}0$.}
  \label{fig:sec3-crossing}
\end{figure}

The qualitative differences between the parameters are managerially important. An increase in $q$ is a broad-based stipend that lifts all boats; it is most effective when the mass of entrants is located in a region where $P'(\mu)$ is already nontrivial, because the induced rightward move in $\mu^\star$ then activates the slope term and creates a positive feedback. By contrast, an increase in $B$ is surgical: it buys marginal incentives at the bar and is therefore most efficient when $P'(\mu)$ is steep, precisely the configuration created by setting $\bar{\mu}$ near the empirical margin where many entrants just pass or just fail. The continuation value $H$ combines both effects; it increases the expected value of graduation and magnifies the slope term. Finally, a higher revenue share $\alpha$ substitutes for $B$ in the sense that it strengthens both the level and slope channels without any transfers contingent on graduation; this substitution will be made precise in Section~\ref{subsec:implement_fb} when we derive the bounty that implements the planner’s first best.

\subsection{Discovery Bounty}\label{subsec:implement_fb}

The analysis so far has treated the platform’s policy as shaping the private return to quality. It is useful to benchmark these incentives against a normative target in which the platform could directly dictate the entrant’s investment level. Let the planner’s objective aggregate the platform’s discounted engagement from the entrant and the entrant’s effort cost, abstracting from transfers:
\begin{equation}\label{eq:planner_problem}
\max_{\mu\in[\underline{\mu},\overline{\mu}]}\ \ \mu\,\big[q+H P(\mu)\big]\;-\;c(\mu).
\end{equation}
The first-order condition for an interior optimum, which we denote $\mu^{FB}$, is
\begin{equation}\label{eq:planner_foc}
q+H P(\mu^{FB})\;+\;\mu^{FB} H P'(\mu^{FB})\;=\;c'(\mu^{FB}),
\end{equation}
where $P(\mu)$ and $P'(\mu)$ are given by \eqref{eq:Pmu}-\eqref{eq:Pprime}. Equation \eqref{eq:planner_foc} says that the marginal social benefit of increasing quality, comprising the guaranteed impressions, the expected continuation exposure, and the change in graduation odds times the value of graduation, should equal the marginal effort cost. Because the planner internalizes both the certain and selection-driven components of exposure, \eqref{eq:planner_foc} has the same structure as the private condition \eqref{eq:foc_main} but, crucially, without the revenue-share distortion and without any need for transfers.

The key question is whether the first best can be achieved by a policy that respects the informational and operational constraints of the problem; in particular, whether a simple outcome-contingent transfer can align the entrant’s private marginal return with \eqref{eq:planner_foc}. The next result shows that a one-line bounty does the job.

\begin{theorem}[First-best implementability]\label{thm:fb_implement}
Fix $(q,\bar{\mu},H,\alpha)$ and let $\mu^{FB}$ satisfy \eqref{eq:planner_foc}. Define the \emph{discovery bounty}
\begin{equation}\label{eq:Bstar}
B^\star \;\equiv\; \frac{\big[q+H P(\mu^{FB})+\mu^{FB} H P'(\mu^{FB})\big]\,(1-\alpha)}{P'(\mu^{FB})}\;\;\ge 0.
\end{equation}
Under the policy $(q,\bar{\mu},B^\star,H)$, the entrant’s unique best response is $\mu^\star=\mu^{FB}$.
\end{theorem}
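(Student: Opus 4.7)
The plan is to verify by direct substitution that $\mu=\mu^{FB}$ satisfies the creator's first-order condition \eqref{eq:foc_main} when $B$ is set to $B^\star$, and then appeal to Proposition~\ref{prop:exist_unique} to upgrade this to uniqueness. The construction of $B^\star$ in \eqref{eq:Bstar} is engineered precisely to close the $(1-\alpha)$ wedge between the private and social marginal returns by routing an extra payment through the high-diagnosticity slope term $P'(\mu)$, so the proof is essentially an identity check.

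Concretely, I would first evaluate the left-hand side of \eqref{eq:foc_main} at $\mu=\mu^{FB}$ with $B=B^\star$, which equals
\[
\alpha\bigl[q+HP(\mu^{FB})+\mu^{FB}HP'(\mu^{FB})\bigr]\;+\;B^\star P'(\mu^{FB}).
\]
By the planner's condition \eqref{eq:planner_foc}, the bracketed quantity is exactly $c'(\mu^{FB})$, so the first summand becomes $\alpha c'(\mu^{FB})$. Substituting the definition of $B^\star$ from \eqref{eq:Bstar} into the second summand, the factor $P'(\mu^{FB})$ cancels and one further application of \eqref{eq:planner_foc} replaces the bracket in the numerator by $c'(\mu^{FB})$, yielding $B^\star P'(\mu^{FB})=(1-\alpha)c'(\mu^{FB})$. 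Adding the two pieces gives $c'(\mu^{FB})$, confirming that $\mu^{FB}$ satisfies the creator's FOC under the proposed bounty.

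Uniqueness and feasibility then follow in two short steps. Under Assumption~1, Proposition~\ref{prop:exist_unique} says that the gap function $\Delta(\cdot;q,\bar{\mu},B^\star,H,\alpha)$ is strictly increasing, so it has at most one root, which must coincide with $\mu^{FB}$ by the computation above. For non-negativity of $B^\star$, I would note that $c'(\mu^{FB})\ge 0$ by strict convexity of $c$ together with $c'(\underline{\mu})=0$, while $P'(\mu^{FB})>0$ in the interior (as $P'$ is a Beta density) and $(1-\alpha)\ge 0$, so the ratio in \eqref{eq:Bstar} is well-defined and non-negative.

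The main obstacle to flag in a careful proof is the boundary case: if $\mu^{FB}$ lies at $\underline{\mu}$ or $\overline{\mu}$, the planner's condition holds only as an inequality and $P'(\mu^{FB})$ may vanish, making the closed form for $B^\star$ formally singular. I would handle this by rephrasing \eqref{eq:planner_foc} as an inequality and invoking the monotone comparative statics of Theorem~\ref{thm:mcs} to show that any $B\ge B^\star$ in the interior limit induces the same corner as the creator's best response. Modulo this boundary technicality, the theorem reduces to the single algebraic identity displayed above, which is what makes $B^\star$ both managerially simple and exactly implementable.
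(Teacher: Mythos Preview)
Your proof is correct and follows essentially the same route as the paper: direct substitution of $B^\star$ into the private first-order condition at $\mu^{FB}$, followed by an appeal to the single-crossing property under Assumption~1 for uniqueness. The only cosmetic difference is that the paper first combines the $\alpha$ and $(1-\alpha)$ pieces and then applies the planner's condition \eqref{eq:planner_foc} once, whereas you apply \eqref{eq:planner_foc} separately to each summand; the algebra is identical, and your added remarks on nonnegativity and the boundary case go slightly beyond what the paper spells out.
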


\noindent\emph{Proof (sketch).} Evaluate the private optimality condition \eqref{eq:foc_main} at $\mu^{FB}$ and substitute $B^\star$ from \eqref{eq:Bstar}. The left-hand side becomes
\[
\alpha\big[q+H P(\mu^{FB})\big]\;+\;\alpha \mu^{FB} H P'(\mu^{FB})\;+\;(1-\alpha)\big[q+H P(\mu^{FB})+\mu^{FB}H P'(\mu^{FB})\big],
\]
which simplifies to $q+H P(\mu^{FB})+\mu^{FB} H P'(\mu^{FB})$. By \eqref{eq:planner_foc} this equals $c'(\mu^{FB})$, so $\mu^{FB}$ satisfies the private FOC. Assumption~1 implies the private objective is strictly quasi-concave in $\mu$; hence the solution is unique. \qed

Figure~\ref{fig:sec3-implement} illustrates the result of this theorem. The planner’s marginal value
coincides with the creator’s once we add the bounty $B^\star$ derived in the text. The overlay
is exact when $P'(\mu^{FB})>0$; If the bar is in a position where the tail is flat, the denominator vanishes
and no finite bounty can implement the planner. For exposition, the figure uses an interior
configuration with $P'(\mu^{FB})>0$; the formula applies in general.

\begin{figure}[!ht]
  \centering
  \includegraphics[width=.72\linewidth]{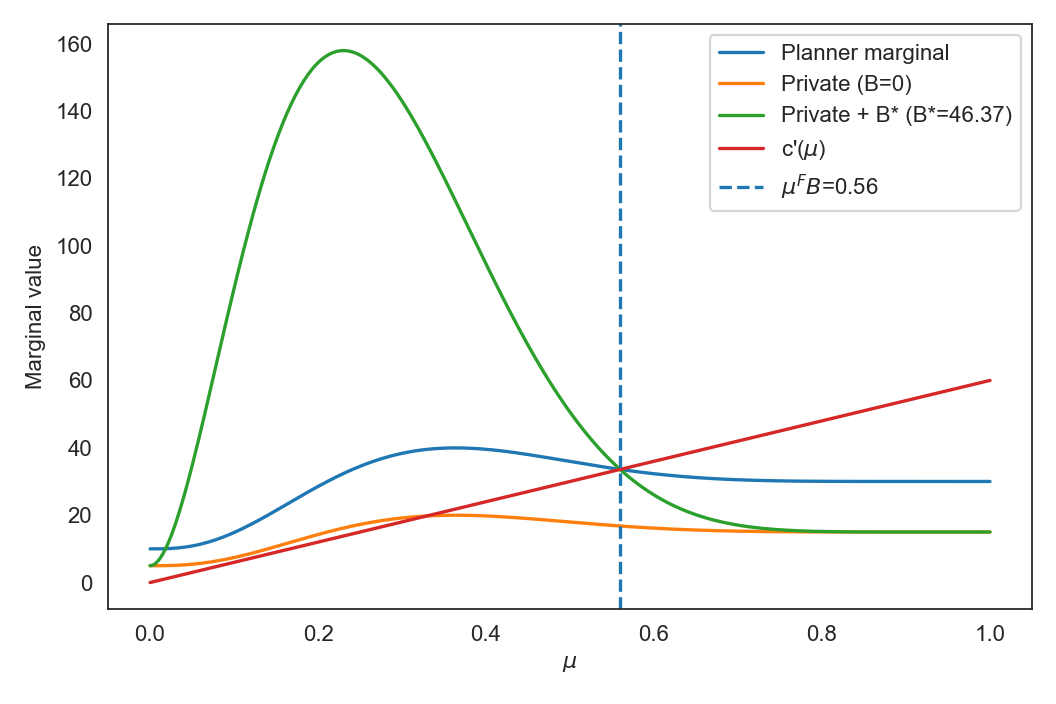}
\caption{\textbf{Implementability bounty $B^\star$.}
Adding $B^\star$ makes the private marginal coincide with the planner marginal at $\mu^{FB}$.\\
\footnotesize\emph{Primitives used:} $q{=}10$, $s{=}3$, $\alpha{=}0.5$, $H_0{=}0$, $\Delta H{=}20$, $\kappa{=}60$, $\mu_0{=}0$.}
  \label{fig:sec3-implement}
\end{figure}

The bounty in \eqref{eq:Bstar} has a transparent interpretation. The factor $(1-\alpha)$ is the wedge between the social and private scales of engagement: when the entrant already receives a high revenue share, transfers are less necessary. The numerator in brackets is the planner’s marginal value of quality at $\mu^{FB}$, the right-hand side of \eqref{eq:planner_foc}. The division by $P'(\mu^{FB})$ concentrates the top-up on the informative event that the entrant is at the margin of graduation. In other words, $B^\star$ pays only for outcomes that are most diagnostic of true quality; it purchases a large increase in the probability of discovery per dollar spent because it targets the steep part of the selection frontier. This is precisely the reason to pair the bounty with a visible bar: the platform commits to paying when the bar is cleared, and because the slope $P'(\cdot)$ is largest near the bar, the same dollar is more potent there than anywhere else.

Three comparative statics insights follow immediately from \eqref{eq:Bstar}. First, $B^\star$ is weakly decreasing in $\alpha$ and vanishes as $\alpha\to 1$. When creators already internalize most of the engagement surplus, guaranteed exposure and a well-calibrated graduation rule suffice to implement the first best; cash is redundant. Second, $B^\star$ is inversely related to $P'(\mu^{FB})$. If the bar $\bar{\mu}$ is tuned so that typical entrants sit near the steep region of the Beta density, very small bounties deliver large incentive effects; if the bar is too low or too high, $P'$ is flat and the same target requires a large transfer. This logic provides an operational guide for setting $\bar{\mu}$: target the margin where the pass rate is neither too low nor too high so that the incentive leverage of $B$ is maximal. Third, holding the bar fixed, $B^\star$ increases with $H$ and $q$ through the bracketed term, reflecting the fact that when graduation is more valuable and guaranteed testing is more substantial, the planner’s marginal value of quality is larger and thus requires a larger private top-up when $\alpha<1$.

It is also useful to compute the expected payout at the first best. Because the bounty is paid only upon graduation,
\begin{equation}\label{eq:expected_payout}
\mathbb{E}\big[\text{payout}\,\big|\mu^{FB}\big]\;=\;B^\star\,P(\mu^{FB})\;=\;\frac{\big[q+H P(\mu^{FB})+\mu^{FB} H P'(\mu^{FB})\big]\,(1-\alpha)\,P(\mu^{FB})}{P'(\mu^{FB})}.
\end{equation}
Expression \eqref{eq:expected_payout} makes transparent the budget implication of bounty targeting: for a fixed normative target $\mu^{FB}$, the expected transfer is decreasing in $P'(\mu^{FB})/P(\mu^{FB})$, the local hazard-type ratio of the binomial tail. Policies that put the bar where the tail is steep, so that a small increase in quality flips many entrants from failing to passing, achieve the same incentive alignment with lower expected cash spend. This highlights the substantive implication: do not just set a bar, set it where the diagnosticity of the pass event is highest relative to its frequency.

The implementability result bears directly on instrument choice. A natural alternative to a hit-based bounty is a flat per-view subsidy during testing. Such a subsidy increases the level term in \eqref{eq:foc_main} but is insensitive to whether the entrant is near the graduation margin. When $P'(\mu)$ is steep around $\mu^{FB}$, \eqref{eq:Bstar}-\eqref{eq:expected_payout} imply that concentrating dollars on the pass event buys more marginal $\mu$ per unit of expected transfer than spreading dollars uniformly across testing impressions. This targeting advantage is the supply side analogue of the “pay-for-information” principle in exploration design: incentives should be strongest where the signal is most responsive to effort. In implementation, the advantage manifests as lower spend to reach the same pre-entry quality target, provided the pass event is measured reliably and the bar is communicated clearly.

Finally, the simplicity of \eqref{eq:Bstar} is an asset for governance. The platform need not estimate or publish counterfactuals about what would have happened had the entrant not been tested; it need only commit ex ante to $(q,\bar{\mu},H)$ and to a bounty $B$ that is a deterministic function of those primitives. The payment rule is explainable, auditable, and easy to parameterize in budgeting terms: expected impression cost is $q$, expected cash cost is $B P(\mu^\star)$, and both map directly into standard planning dashboards. In Section~\ref{sec:resources_bwk} we will fold these costs into a resource-constrained program and show how to balance exposure and cash by equating marginal welfare per unit resource.
\subsection{Scheduling of Exploration}\label{subsec:frontloading}

The preceding results treat the testing window as a discounted exposure mass $q$, abstracting from its exact calendar placement. In practice, product teams schedule a \emph{raw} number of testing impressions and control when those impressions occur relative to entry. It is therefore useful to formalize the timing choice and show that, for any fixed number of testing impressions, the earliest feasible pacing weakly dominates alternatives in terms of investment incentives and induced welfare.

Let $Q\in\mathbb{N}$ denote the (undiscounted) number of guaranteed testing impressions the platform commits to deliver to a new entrant, and let $\tau=(t_1,\ldots,t_Q)$ be a strictly increasing sequence of calendar slots at which these impressions will be delivered, with $t_j\in\{1,2,\ldots\}$. Define the \emph{discounted testing mass} induced by schedule $\tau$ as
\[
q(\tau)\;=\;\sum_{j=1}^{Q}\gamma^{t_j-1}.
\]
During testing, the graduation rule compares the realized number of successes $S$ to the integer threshold $s=\lceil Q\bar{\mu}\rceil$; the graduation probability and its slope are therefore
\[
P_Q(\mu)\;=\;\Pr\!\big[S\ge s\;\big|\;S\sim\mathrm{Binomial}(Q,\mu)\big],
\qquad
P'_Q(\mu)\;=\;\frac{\mu^{\,s-1}(1-\mu)^{\,Q-s}}{B\!\left(s,\,Q-s+1\right)}.
\]
Crucially, because outcomes are i.i.d.\ conditional on $\mu$, $P_Q(\cdot)$ and $P'_Q(\cdot)$ depend only on the \emph{count} $Q$ and the bar $s$, not on the specific calendar placement of testing impressions. The entrant’s discounted exposure under schedule $\tau$ is therefore
\[
\Xi(\mu;\tau,H)\;=\;q(\tau)\;+\;H\,P_Q(\mu),
\]
and the private first-order condition becomes
\begin{equation}\label{eq:foc_schedule}
\alpha\Big[q(\tau) + H P_Q(\mu^\star)\Big]\;+\;\alpha\,\mu^\star\,H\,P'_Q(\mu^\star)\;+\;B\,P'_Q(\mu^\star)
\;=\;c'(\mu^\star).
\end{equation}

We compare schedules by the induced discounted testing mass $q(\tau)$. Write $\tau^\uparrow$ for the \emph{earliest feasible} schedule that places the $Q$ testing impressions in the first $Q$ available slots (so $t_j=j$ in the absence of capacity constraints). Because $\gamma\in(0,1)$, $\tau^\uparrow$ maximizes $q(\tau)$ among all schedules with $Q$ impressions; any delay replaces an early factor $\gamma^{t_j-1}$ with a strictly smaller $\gamma^{t_j}$.

\begin{theorem}[Front-loading is weakly optimal]\label{thm:frontload}
Fix $Q$, $\bar{\mu}$ (equivalently $s=\lceil Q\bar{\mu}\rceil$), $H$, $B$, and $\alpha$. Among all testing schedules $\tau$ with $Q$ impressions, the earliest schedule $\tau^\uparrow$ maximizes the entrant’s equilibrium quality $\mu^\star(\tau)$ and the platform’s equilibrium objective $W(\mu^\star(\tau);\tau,H,B)$.\footnote{If per-period feasibility caps restrict testing to at most one guaranteed impression per period, $\tau^\uparrow$ is defined as “as early as allowed by the cap.” The argument below is unchanged.}
\end{theorem}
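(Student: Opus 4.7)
The plan is to reduce the scheduling comparison to a one-dimensional monotonicity statement in the discounted testing mass $q(\tau)=\sum_{j=1}^{Q}\gamma^{t_j-1}$ and then invoke Theorem~\ref{thm:mcs} for the quality claim and a direct envelope calculation for the welfare claim.

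First I would establish the key invariance: because outcomes are i.i.d.\ Bernoulli$(\mu)$ conditional on $\mu$, the distribution of $S$ is $\mathrm{Binomial}(Q,\mu)$ regardless of when the $Q$ testing slots are delivered, so $P_Q(\mu)$ and $P'_Q(\mu)$ are the same functions of $\mu$ under every feasible $\tau$. Hence the private FOC~\eqref{eq:foc_schedule} depends on $\tau$ only through the scalar $q(\tau)$. Since $\gamma\in(0,1)$ and feasibility requires $t_j\geq j$, any $\tau\neq\tau^\uparrow$ replaces at least one $\gamma^{j-1}$ with a strictly smaller $\gamma^{t_j-1}$, so $q(\tau^\uparrow)\geq q(\tau)$ with strict inequality whenever $\tau\neq\tau^\uparrow$. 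It therefore suffices to show that both $\mu^\star(\tau)$ and $V(\tau)\equiv W(\mu^\star(\tau);\tau,H,B)$ are weakly increasing in $q(\tau)$.

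For the quality half, Assumption~1 makes the gap function $\Delta(\mu;q(\tau),\bar{\mu},B,H,\alpha)$ strictly increasing in $\mu$, while $\partial\Delta/\partial q(\tau)=-\alpha<0$. Theorem~\ref{thm:mcs} (or a direct implicit-function argument) then yields $\mu^\star(\tau)$ weakly increasing in $q(\tau)$, hence maximized at $\tau^\uparrow$. For the welfare half, I would differentiate $V(q)=\mu^\star(q)[q+HP_Q(\mu^\star(q))]-BP_Q(\mu^\star(q))$ to obtain
\[
V'(q)\;=\;\mu^\star(q)\;+\;[\mu^\star]'(q)\,\Big\{q+HP_Q(\mu^\star)+\mu^\star HP'_Q(\mu^\star)-BP'_Q(\mu^\star)\Big\}.
\]
The first term is nonnegative. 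Substituting the creator FOC to replace $\alpha[q+HP+\mu HP']$ by $c'(\mu^\star)-BP'$ shows that the bracketed term equals $\{c'(\mu^\star)-(1+\alpha)BP'(\mu^\star)\}/\alpha$, which at $B=0$ reduces to $c'(\mu^\star)/\alpha\geq 0$ and at $B=B^\star$ evaluates, via Theorem~\ref{thm:fb_implement}, to $\alpha\, c'(\mu^{FB})\geq 0$. Combined with $[\mu^\star]'\geq 0$, this gives $V'(q)\geq 0$, so $V(\tau^\uparrow)\geq V(\tau)$.

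The main obstacle I anticipate is the sign verification in the envelope step: the quantity $c'(\mu^\star)-(1+\alpha)BP'(\mu^\star)$ is not unconditionally nonnegative for arbitrary $B$, so one must either restrict attention to the economically relevant range $B\in[0,B^\star]$ (both endpoints satisfy the sign requirement, and continuity plus monotonicity of $\mu^\star$ in $B$ extend it to the interior) or strengthen Assumption~1 so that $c''$ dominates along the equilibrium path. A cleaner fallback I would try if the envelope route becomes delicate is a direct revealed-preference coupling: factor $W=\mu^\star q+P_Q(\mu^\star)[\mu^\star H-B]$, note that $\mu^\star q$ is weakly higher at $\tau^\uparrow$ because both factors are, and observe that the remainder is also higher whenever $B\leq \mu^\star H$, which is the range any profit-seeking platform would choose. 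Finally, I would flag in the write-up that the theorem is actually conservative, because a timing model in which $H$ is itself discounted to entry would make $H$ strictly larger under $\tau^\uparrow$, only reinforcing the dominance of front-loading.
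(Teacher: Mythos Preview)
Your invariance step and your quality argument are exactly the paper's: both observe that $P_Q$ and $P'_Q$ depend only on the count $Q$, that $q(\tau^\uparrow)\ge q(\tau)$ by a rearrangement argument, and that the strictly increasing gap function then forces $\mu^\star(\tau^\uparrow)\ge\mu^\star(\tau)$.

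For the welfare half you take a different, longer route than the paper. The paper does not differentiate $V(q)$; it simply writes
\[
W(\mu;\tau,H,B)\;=\;\mu\big[q(\tau)+H P_Q(\mu)\big]\;-\;B\,P_Q(\mu),
\]
notes that this is increasing in $q(\tau)$ (derivative $\mu>0$) and asserts it is increasing in $\mu$, and then combines these two monotonicities with $\mu^\star(\tau^\uparrow)\ge\mu^\star(\tau)$ and $q(\tau^\uparrow)\ge q(\tau)$. Your envelope bracket $q+HP_Q+\mu HP'_Q-BP'_Q$ is precisely $\partial W/\partial\mu$, so both arguments ultimately rest on the same sign condition; the paper just states it, whereas you try to verify it.

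That verification is where your proposal has a genuine gap. The endpoint interpolation is invalid: showing $c'(\mu^\star)-(1+\alpha)BP'(\mu^\star)\ge 0$ at $B=0$ and at $B=B^\star$ says nothing about the interior, because both $\mu^\star$ and $P'(\mu^\star)$ move with $B$ and the expression is not monotone in $B$ without further structure. ``Continuity plus monotonicity of $\mu^\star$ in $B$'' does not deliver sign preservation of a difference of two increasing functions. Your fallback, by contrast, is sound and is effectively what the paper's one-line monotonicity assertion presumes: writing $W=\mu^\star q(\tau)+P_Q(\mu^\star)[\mu^\star H-B]$ and using $\mu^\star(\tau^\uparrow)\ge\mu^\star(\tau)$, $q(\tau^\uparrow)\ge q(\tau)$, and $P_Q$ increasing gives the result cleanly whenever $B\le \mu^\star H$ (equivalently, whenever $\partial W/\partial\mu\ge 0$ at the equilibrium). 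I would drop the envelope detour and the endpoint argument entirely, state the direct monotonicity as the paper does, and make the implicit proviso explicit.
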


\noindent\emph{Proof (sketch).} For any two schedules $\tau$ and $\tilde{\tau}$ with the same $Q$, $P_Q(\cdot)$ and $P'_Q(\cdot)$ coincide. The only difference in \eqref{eq:foc_schedule} is the term $\alpha q(\tau)$ versus $\alpha q(\tilde{\tau})$. Since $q(\tau^\uparrow)\ge q(\tau)$ for all $\tau$, the left-hand side of \eqref{eq:foc_schedule} evaluated at $\tau^\uparrow$ is weakly larger at every $\mu$. Under Assumption~1, the gap function is strictly increasing in $\mu$, so the unique solution $\mu^\star(\tau)$ is weakly increasing in $q(\tau)$; hence $\mu^\star(\tau^\uparrow)\ge \mu^\star(\tau)$. Because $W(\mu;\tau,H,B)=\mu[q(\tau)+H P_Q(\mu)]-B P_Q(\mu)$ is increasing in both $\mu$ and $q(\tau)$, it follows that $W(\mu^\star(\tau^\uparrow);\tau^\uparrow,H,B)\ge W(\mu^\star(\tau);\tau,H,B)$. \qed

The theorem formalizes an intuitive but important operational lesson. When a platform has fixed the \emph{number} of new-creator opportunities $Q$ it can afford, the way to turn those opportunities into the strongest possible pre-entry incentives is to deploy them as close to entry as feasible. Early deployments raise the certain component of the return to effort without affecting the diagnosticity of the graduation event, which depends only on how many early outcomes are observed, not on when. This “as-early-as-possible” guidance is straightforward to implement: in experimentation platforms it maps to pacing rules that consume the guaranteed impressions budget at the start of the eligible horizon, subject to safety caps and eligibility filters.

Two clarifications smooth the path from theory to practice. First, if policy is parameterized directly by the discounted testing mass $q$ rather than by the raw count $Q$, then \emph{by definition} different schedules that deliver the same $q$ are incentive-equivalent in \eqref{eq:foc_schedule}; in that case the front-loading recommendation is already baked into the calibration of $q$. The theorem should then be read as a reminder that, when teams manage raw impression counts, the effective $q$ is \emph{endogenous} to timing and is maximized by front-loading. Second, if match quality drifts over calendar time (for example, if audiences expand or contract over a daypart), front-loading remains weakly optimal when such drift affects both the certain and selection components symmetrically through the same discount factor, for instance, when the planner’s objective discounts future outcomes for opportunity-cost reasons. In environments with novelty decay or attention scarcity that is stronger later in the session, the case for early pacing is, if anything, stricter: the same number of testing impressions is worth more, both in certainty and in selection potency, when concentrated near entry.

From a budgeting standpoint, Theorem~\ref{thm:frontload} also rationalizes a common governance practice: reserving a small, protected pool of early slots for each entrant that cannot be traded away for short-term revenue. Because those slots convert directly into a higher effective $q(\tau)$, the policy pays for itself in stronger investment incentives and, downstream, in higher discovery of genuinely high-quality creators. The next section takes this budgeting logic one step further by treating impressions and cash as explicit resources and deriving an equal-marginal-value rule for balancing $q$ (delivered early) against the bounty $B$ when both are scarce.

\section{Resource-Constrained Design}\label{sec:resources_bwk}

The policy instruments introduced so far, guaranteed testing impressions and discovery bounties, consume tangible resources that platforms budget in parallel with experimentation slots and incentive programs. This section treats impressions and cash as explicit constraints and derives a simple “equalize marginal value per unit resource” rule that governs how much to invest in early exposure versus targeted transfers. The analysis uses a static planning horizon both for transparency and because the online, sequential version of the problem maps to the same shadow-price logic under standard primal-dual arguments.

\subsection{Two Resources and a Constrained Program}\label{subsec:two_resources}

\begin{figure}[!ht]
  \centering
  \includegraphics[width=.65\linewidth]{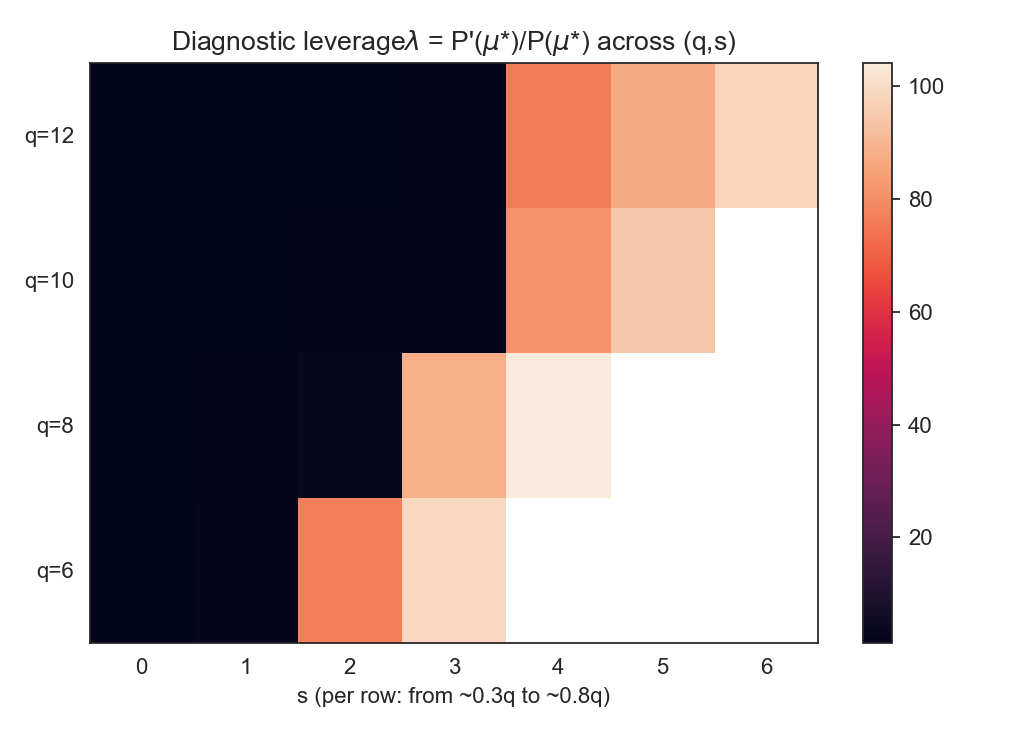}
  \caption{\textbf{Diagnostic leverage $\Lambda(\mu^\star)=P'(\mu^\star)/P(\mu^\star)$ across bar/window pairs.}
  Warm cells indicate settings where the graduation margin is steep \emph{at the induced equilibrium}; these are the most effective regions for bounty targeting and for allocating scarce attention.\\
\footnotesize\emph{Primitives used:} $q{=}10$, $s{=}3$, $\alpha{=}0.5$, $H_0{=}0$, $\Delta H{=}20$, $\kappa{=}60$, $\mu_0{=}0$.}
  \label{fig:sec4-heatmap}
\end{figure}

Consider a planning window (e.g., a week) during which a large cohort of new creators enters. For ease of exposition, suppose the platform adopts a stationary policy that assigns the same testing window $q$ and bounty $B$ to each entrant, together with a common graduation rule summarized by $(\bar{\mu},H)$. The platform has an \emph{impression budget} $R>0$ that caps the total number of \emph{discounted} guaranteed testing impressions it is willing to allocate to the cohort,\footnote{In practice, $R$ is the number of early, high-quality impressions product leadership reserves for incubation across the window, converted to discounted units by applying the same $\gamma$ used in Section~\ref{sec:model}. The front-loading result in Section~\ref{subsec:frontloading} implies that managers should pace these impressions as early as feasible within each entrant’s eligibility horizon.} and a \emph{cash budget} $M>0$ that caps the total expected payout on discovery bounties. If $n$ entrants arrive in the window, feasibility requires $n q \le R$ and $n\,B\,P(\mu^\star(q,\bar{\mu},B,H,\alpha)) \le M$, where $\mu^\star(\cdot)$ is the unique best response characterized in Section~\ref{sec:mechanisms_main}. 

Because $n$ scales all terms proportionally, it is convenient to write the problem in per-entrant units and regard $R$ and $M$ as \emph{budgets per entrant}. Using the platform objective \eqref{eq:platform_objective}, the per-entrant design problem is
\begin{equation}\label{eq:perentrant_program}
\max_{q\ge 0,\;B\ge 0}\;\; W\!\left(\mu^\star(q,\bar{\mu},B,H,\alpha);\;q,\bar{\mu},B,H\right)
\quad
\text{s.t.}\quad 
q \le R,\ \ B\,P\!\left(\mu^\star(q,\bar{\mu},B,H,\alpha)\right) \le M.
\end{equation}
The first constraint allocates a scarce attention resource; the second allocates expected cash. Both are \emph{endogenous} to incentives: changing $q$ or $B$ shifts $\mu^\star$, which in turn changes the graduation probability and therefore the expected payout. The endogeneity is the economic heart of the trade-off. Spending attention early strengthens private incentives and reduces the amount of cash needed later to align behavior; conversely, a well designed bounty can economize impressions by increasing effort exactly at the margin where selection is sensitive.

Program \eqref{eq:perentrant_program} is separable across entrants under the stationarity assumption, which means that the optimal \emph{per-entrant} policy coincides with the policy one would obtain by solving a large-scale allocation across heterogeneous entrants with the same shadow prices. This observation will be useful for implementation: it legitimizes thinking in terms of a single pair $(q,B)$ tuned to budgets $R$ and $M$, with the understanding that heterogeneity can be incorporated by allowing the pair to vary across segments while keeping the same shadow-price tests.\footnote{Formally, with heterogeneous segments $s$ and segment shares $\pi_s$, the planner chooses $(q_s,B_s)$ to maximize $\sum_s \pi_s W(\mu^\star_s;q_s,\bar{\mu}_s,B_s,H_s)$ subject to $\sum_s \pi_s q_s \le R$ and $\sum_s \pi_s B_s P_s(\mu^\star_s)\le M$. The stationarity case corresponds to a single segment; the primal-dual characterization below carries through verbatim with segment-specific marginal values.}

Two preliminary facts follow directly from \eqref{eq:perentrant_program}. First, \emph{budget monotonicity}: if $R$ and $M$ increase (weakly), the optimal pair $(q^\dagger,B^\dagger)$ weakly increases componentwise. Intuitively, loosening either constraint cannot make it optimal to cut back on the corresponding instrument because both $q$ and $B$ raise incentives and therefore raise the platform’s objective at the induced equilibrium. Second, when $\alpha$ is high and the revenue share already lets creators internalize most of the engagement surplus, the cash budget is less binding: the optimal bounty is small or zero and the platform allocates most of the scarce resource to early impressions. When $\alpha$ is low, a modest $B$ is a cost-effective substitute for $q$ because it targets the steep region of the graduation frontier where $P'(\mu)$ is largest.

\subsection{A Pimal-dual Characterization and the Balanced Exploration Rule}\label{subsec:balanced_rule}

To make the trade-off operational, we pass to a Lagrangian representation with \emph{shadow prices} for impressions and cash. Let $\lambda_{\mathrm{imp}}\ge 0$ be the multiplier (per discounted impression) on the testing budget and let $\lambda_{\$}\ge 0$ be the multiplier (per expected dollar of payout) on the cash budget. The Lagrangian for \eqref{eq:perentrant_program} is
\begin{equation}\label{eq:lagrangian}
\mathcal{L}(q,B;\lambda_{\mathrm{imp}},\lambda_{\$})
\;=\;
W\!\left(\mu^\star(q,\bar{\mu},B,H,\alpha);\;q,\bar{\mu},B,H\right)
\;-\;
\lambda_{\mathrm{imp}}\,(q-R)
\;-\;
\lambda_{\$}\,\Big(B\,P\!\left(\mu^\star(q,\bar{\mu},B,H,\alpha)\right)-M\Big).
\end{equation}
At an interior optimum the first-order conditions set the marginal gain in constrained welfare from an extra discounted testing impression equal to its shadow price, and analogously for an extra expected dollar of bounty payout. Writing $\mathrm{MB}_q$ and $\mathrm{MB}_{\$}$ for these marginal gains, we obtain
\begin{equation}\label{eq:equal_marginal}
\mathrm{MB}_q(q^\dagger,B^\dagger)\;=\;\lambda_{\mathrm{imp}},
\qquad
\mathrm{MB}_{\$}(q^\dagger,B^\dagger)\;=\;\lambda_{\$}.
\end{equation}
The exact algebraic expressions for $\mathrm{MB}_q$ and $\mathrm{MB}_{\$}$ collect both \emph{direct} effects (e.g., one more early impression directly raises expected engagement by $\mu^\star$) and \emph{indirect} effects that operate through the induced change in $\mu^\star$. The indirect terms depend on the planner’s marginal value of quality at the induced equilibrium, $q+H P(\mu^\star)+\mu^\star H P'(\mu^\star)-B P'(\mu^\star)$, which equals zero if the bounty is set to the implementability level $B^\star$ in Theorem~\ref{thm:fb_implement}. In that edge case, the marginal values reduce to familiar objects: the value of one more discounted early impression is $\mu^{FB}$, while the value of one more expected dollar of payout is $-1$ (because transfers are a pure resource cost). Away from the implementability level, the indirect terms tilt the trade-off toward the instrument that moves $\mu^\star$ the most per unit of resource at the prevailing slope $P'(\mu^\star)$.

\begin{figure}[!ht]
  \centering
  \begin{subfigure}{.48\linewidth}
    \centering
    \includegraphics[width=\linewidth]{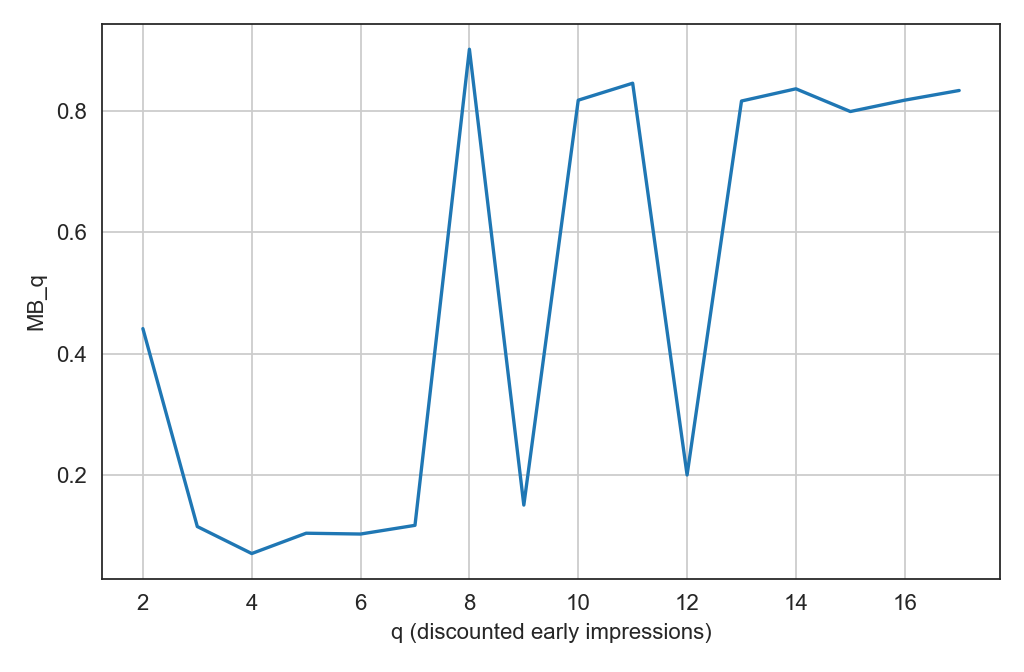}
    \caption{$\mathrm{MB}_q$ vs $q$}
  \end{subfigure}\hfill
  \begin{subfigure}{.48\linewidth}
    \centering
    \includegraphics[width=\linewidth]{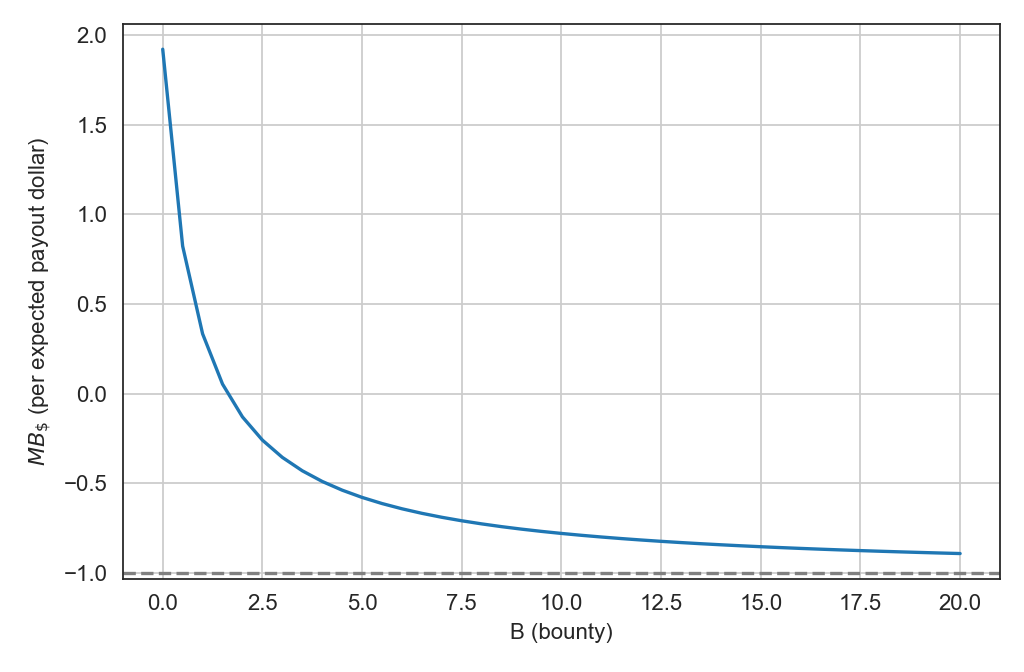}
    \caption{$\mathrm{MB}_{\$}$ vs $B$ (baseline)}
  \end{subfigure}
  \caption{\textbf{Equal–marginal–value objects.}
  (a) The marginal gain per discounted early impression mixes a direct term ($\mu^\star$) with an induced-effort term. Kinks occur where the integer threshold $s(q) = \bar{\mu}q$ increments, reflecting the discrete change in the binomial tail. (b) The marginal gain \emph{per expected payout dollar} is decreasing in $B$ and converges to $-1$; when $P(\mu^\star)$ is tiny, the per-dollar normalization can be numerically unstable.\\
\footnotesize\emph{Primitives used:} $q{=}10$, $s{=}3$, $\alpha{=}0.5$, $H_0{=}0$, $\Delta H{=}20$, $\kappa{=}60$, $\mu_0{=}0$.}
  \label{fig:sec4-MB}
\end{figure}

Condition \eqref{eq:equal_marginal} is the \emph{balanced exploration rule}. It prescribes increasing $q$ up to the point where the next discounted early impression is worth exactly $\lambda_{\mathrm{imp}}$ in constrained welfare, and increasing the bounty up to the point where the next expected dollar of payout is worth exactly $\lambda_{\$}$. In practice, the multipliers are not chosen directly; they are revealed by budgets. If the impression budget is tight relative to cash, $\lambda_{\mathrm{imp}}$ is high and the optimizer leans on $B$ to buy incentives at the graduation margin; if cash is scarce relative to early slots, $\lambda_{\$}$ is high and the optimizer leans on $q$ to raise the base return to quality. Either way, the prescription is implementable with dashboards managers already maintain: plot the estimated marginal lift in constrained welfare per additional early impression and per expected dollar of payout as functions of $(q,B)$, and adjust until the two curves hit their respective shadow-price lines.

Two further clarifications help with execution. First, the rule speaks in \emph{per-unit} terms and therefore scales to segmentation without change. If the platform maintains segment-specific bars $(\bar{\mu}_s)$ or continuation values $(H_s)$, the same shadow prices $\lambda_{\mathrm{imp}}$ and $\lambda_{\$}$ apply across segments, and the allocation of impressions and bounties should be skewed toward segments with the highest marginal gains per unit of resource. Second, the rule is robust to the choice of exploitation engine. If $H$ is interpreted as the expected discounted pull count under a Thompson or index policy conditional on graduation, then $q+H P(\cdot)$ remains the correct exposure aggregator, and the same Lagrangian logic yields \eqref{eq:equal_marginal}. The conceptual takeaway is unchanged: treat attention and cash as tradable budgets and push each instrument to the point where its next unit yields the same shadow-price-adjusted value as the other.

\subsection{Tuning the Graduation Bar: Selection versus Discoverability}\label{subsec:tuning_bar}

The bar $\bar{\mu}$ plays a dual role in the design problem. By shifting the integer threshold $s=\lceil q\bar{\mu}\rceil$, it governs both the \emph{location} and the \emph{steepness} of the selection frontier. On the one hand, a higher bar improves selectivity and raises the expected quality of those who graduate into the continuation stream. On the other hand, it reduces the pass rate $P(\mu)$ and, if moved too far, can place most entrants in regions where the pass event is nearly unresponsive to effort, thereby dulling incentives. The planner’s task is to position the bar so that the graduation event is (i) neither too rare nor too common and (ii) maximally \emph{diagnostic} around the intended quality margin.

Two analytic facts guide this choice. First, from \eqref{eq:Pprime} the sensitivity of the pass probability with respect to quality is a Beta density,
\[
P'(\mu)\;=\; \frac{\mu^{\,s-1}(1-\mu)^{\,q-s}}{B(s,q-s+1)},
\]
which is unimodal with mode at $\mu^\dagger=(s-1)/(q-1)$ when $1<s<q$. Thus, moving $s$ shifts the point of maximal diagnosticity along the unit interval. If the entrant distribution (or the induced $\mu^\star$ under current policy) has most mass near some $\tilde{\mu}$, the steepest incentive leverage is obtained by aligning the bar so that $\mu^\dagger\approx \tilde{\mu}$, i.e., $s\approx 1+(q-1)\tilde{\mu}$. Second, for a fixed $(q,s)$ the \emph{diagnostic leverage} of the pass event can be summarized by the ratio
\begin{equation}\label{eq:leverage_ratio}
\Lambda(\mu)\;\equiv\;\frac{P'(\mu)}{P(\mu)}.
\end{equation}
This local hazard-type ratio measures how many percentage points the pass probability moves per marginal point of quality, \emph{per unit of pass probability}. It is large when the tail is steep relative to its level (pass events are neither vanishingly rare nor ubiquitous), and small when the bar is far from the mass of the distribution.

These objects translate directly into managerial guidance. If the bar is set so low that most entrants pass ($P(\mu^\star)$ close to one), then $P'(\mu^\star)$ and $\Lambda(\mu^\star)$ are mechanically small; effort barely changes graduation odds and transfers tied to graduation do little work. If the bar is set so high that almost no one passes ($P(\mu^\star)$ near zero), $P'(\mu^\star)$ is again small relative to $P(\mu^\star)$; the event is too rare for targeted payments to be effective, and discoverability collapses. In contrast, aligning $s$ so that the pass rate sits in an intermediate region, operationally, a cohort pass rate in the neighborhood of 30-70\%, places the mode of $P'(\cdot)$ near $\mu^\star$ and maximizes $\Lambda(\mu^\star)$. This configuration is precisely the one in which a small change in effort meaningfully changes graduation odds and, downstream, expected continuation exposure.

The continuation value $H$ interacts with the bar choice in two ways. A larger $H$ increases the stakes of selection, thereby magnifying both the level term $\alpha H P(\mu)$ and the slope term $\alpha \mu H P'(\mu)$ in \eqref{eq:foc_main}. Holding budgets fixed, this raises the marginal return to moving $s$ toward the mass of the distribution because the prize for crossing the bar is richer. Conversely, if governance or fairness constraints require capping $H$ (for example, to avoid excessive concentration on a small set of graduates), the bar must do more of the incentive work: one should tilt $s$ toward regions where $P'(\cdot)$ is naturally steeper so that graduation remains decisive even when the continuation stream is modest. In either case, the practical diagnostic is the same: track the empirical pass rate and an empirical analogue of $\Lambda(\mu)$ (for instance, the elasticity of pass probability with respect to observable quality proxies) and adjust $s$ until the diagnostic leverage is high around the observed $\mu^\star$.

Finally, tuning the bar has budget consequences. From \eqref{eq:expected_payout}, the expected bounty spend at $\mu^\star$ is $B\,P(\mu^\star)$, while the per-unit incentive potency of the bounty is proportional to $P'(\mu^\star)$. Moving $s$ to increase $\Lambda(\mu^\star)=P'(\mu^\star)/P(\mu^\star)$ therefore \emph{simultaneously} strengthens incentives and reduces expected spend for a given desired uplift in the private first-order condition. The bar is thus a true “multi-plier”: appropriately placed, it lowers the cash needed to induce a given shift in effort and reduces the number of impressions needed to create the same shift via $q$.

\subsection{Targeting Advantage of Hit-based Bounties versus Flat Testing Subsidies}\label{subsec:bounty_vs_flat}

A natural alternative to a hit-based bounty is a uniform testing-stage subsidy that pays a small amount per successful testing outcome, regardless of graduation. Such a subsidy is equivalent to temporarily increasing the revenue share on testing impressions; it lifts the level term in \eqref{eq:foc_main} by a constant. We show that, locally around a given $\mu^\star$, buying a prescribed increase in marginal incentives is generically cheaper with a hit-based bounty whenever the pass event is sufficiently diagnostic. The result sharpens the budget logic behind targeted transfers and pins down the role of the leverage ratio $\Lambda(\mu^\star)$.

Formally, consider two infinitesimal policy changes designed to increase the left-hand side of \eqref{eq:foc_main} at $\mu^\star$ by a common amount $\varepsilon>0$:

\emph{(i) Hit-based increment.} Increase the bounty by $\Delta B$ and leave other primitives fixed. The marginal increase in the first-order condition is $(\Delta B)\,P'(\mu^\star)$. To achieve the target $\varepsilon$ we set
\[
\Delta B \;=\; \frac{\varepsilon}{P'(\mu^\star)}.
\]
The associated expected transfer at $\mu^\star$ is
\begin{equation}\label{eq:cost_bounty}
\text{Cost}_{\text{bounty}} \;=\; (\Delta B)\,P(\mu^\star) \;=\; \varepsilon\,\frac{P(\mu^\star)}{P'(\mu^\star)} \;=\; \frac{\varepsilon}{\Lambda(\mu^\star)}.
\end{equation}

\emph{(ii) Flat testing subsidy.} Pay a small per-success bonus $\sigma$ on testing impressions only; expected testing-stage bonuses equal $\sigma\,\mathbb{E}[S]=\sigma\,\mu q$. This adds $\sigma q$ to the left-hand side of \eqref{eq:foc_main}. To achieve the same target $\varepsilon$ we set $\sigma=\varepsilon/q$. The associated expected transfer at $\mu^\star$ is
\begin{equation}\label{eq:cost_flat}
\text{Cost}_{\text{flat}} \;=\; \sigma\,\mu^\star q \;=\; \varepsilon\,\mu^\star.
\end{equation}

Comparing \eqref{eq:cost_bounty} and \eqref{eq:cost_flat}, the hit-based instrument is locally more cost-effective than the flat testing subsidy if and only if
\begin{equation}\label{eq:dominance_condition}
\frac{\varepsilon}{\Lambda(\mu^\star)} \;\le\; \varepsilon\,\mu^\star
\quad\Longleftrightarrow\quad
\Lambda(\mu^\star) \;\ge\; \frac{1}{\mu^\star}.
\end{equation}
Condition \eqref{eq:dominance_condition} has an immediate interpretation: when the pass event is steep relative to its level, so that a unit of quality produces a large relative change in the pass probability, the targeted bounty buys more marginal incentive per expected dollar than a flat testing subsidy. Conversely, if $\Lambda(\mu^\star)$ is small (for example, because the bar is set so high that almost nobody passes, or so low that almost everybody passes), the targeted event ceases to be informative and the advantage of hit-based payments disappears. In that regime, the planner should first reposition the bar to increase $\Lambda(\mu^\star)$; only then do targeted bounties resume their advantage.

Two extensions reinforce the practical message. First, if the flat subsidy pays per \emph{testing impression} regardless of success, its effect on the first-order condition is zero and it has no incentive value at all; any dollars spent in that way should be reallocated to the bounty or to additional early impressions. Second, if the flat subsidy pays per success in both testing and continuation, its marginal effect mimics a temporary increase in the revenue share $\alpha$; while this can substitute for $B$ in \eqref{eq:foc_main}, it also dilutes targeting by paying for low-information outcomes far from the graduation margin. In budget terms, it raises expected spend by a multiple of total engaged views, not by engaged views that pivot graduation; unless continuation is very small ($H\approx 0$), the hit-based design will typically dominate on \emph{both} incentive potency and cash efficiency grounds whenever \eqref{eq:dominance_condition} holds.

Taken together, Sections~\ref{subsec:tuning_bar} and \ref{subsec:bounty_vs_flat} deliver a concrete playbook. Use early slots parsimoniously but front-load them; place the bar where the pass event is diagnostic for the observed entrant distribution; and, conditional on that placement, prefer a hit-based bounty to flat testing subsidies because it concentrates spend on the pivotal event. In the next section we examine robustness and segmentation and translate these rules into a manager-facing dashboard that reports shadow prices, diagnostic leverage, and the implied balanced exploration policy for each cohort.

\section{Extensions, Robustness, and Managerial Implications}\label{sec:extensions}

The core model isolates a small set of primitives that map directly to design levers: a guaranteed testing window, a transparent graduation bar, a continuation value summarizing downstream allocation, and a hit-based bounty that concentrates transfers on diagnostic outcomes. This section extends the analysis along two dimensions that matter in practice. First, we allow for heterogeneity across creator segments and for discovery mechanisms that graduate more than one entrant at a time. Second, we relax the stylized “block” form of post-graduation exposure and show how the results carry over to index or sampling engines in which continuation is itself history-dependent.

\subsection{Heterogeneity and Multi-winner Discovery}\label{subsec:heterogeneity}

Creator populations are rarely homogeneous. Monetization rates, production technologies, and audience match vary across surfaces and verticals, and these differences appear in the primitives $(\alpha, c, \bar{\mu}, H)$ even before policy is set. A platform that uses the same instruments across segments should expect different elasticities of investment and different budget footprints; a platform willing to tailor instruments gains additional efficiency but must still reason with a common currency for impressions and cash. The framework in Sections~\ref{sec:mechanisms_main}-\ref{sec:resources_bwk} remains valid segment-by-segment and admits a clean aggregation.

Fix a finite set of segments $s\in\mathcal{S}$ with shares $\pi_s$. Segment $s$ is characterized by $(\alpha_s, c_s, \bar{\mu}_s, H_s)$ and receives a policy $(q_s, B_s)$. Let $\mu^\star_s$ be the unique best response in $s$ under its policy, and let $P_s$ and $P'_s$ denote the binomial tail and slope induced by $(q_s,\bar{\mu}_s)$. The per-entrant platform objective in $s$ is $W_s(\mu;q_s,\bar{\mu}_s,B_s,H_s)=\mu[q_s+H_s P_s(\mu)]-B_s P_s(\mu)$, and the per-entrant resource uses are $q_s$ discounted impressions and $B_s P_s(\mu^\star_s)$ expected dollars. If the platform manages aggregate budgets $(R,M)$, the shadow-price characterization in \eqref{subsec:balanced_rule} applies with the same multipliers $(\lambda_{\mathrm{imp}},\lambda_{\$})$ across all segments. Consequently, the optimal cross-segment schedule sets $(q_s,B_s)$ where the \emph{segment-specific} marginal gains per unit of resource equal the \emph{common} shadow prices. Operationally, this means skewing early slots and bounties toward segments with the steepest $P'_s(\mu^\star_s)$ and the highest planner marginal value of quality at the induced equilibrium. The principle is simple but powerful: use a common set of prices for impressions and cash, and let segments with the strongest incentive leverage buy more of each resource until their marginal values fall to the common benchmark.

Discovery is also frequently \emph{multi-winner}. Feeds, shelves, and carousels promote a set of items at a time, and editorial programs often seek to graduate a cohort rather than a single entrant. The analysis adapts with minimal change if the multi-winner mechanism can be represented as a threshold on early performance: graduate an entrant if her testing successes exceed a score $s_K$ chosen so that, in expectation, $K$ seats are filled. For a large pool of entrants, this rule is equivalent to setting $s_K$ to solve $\mathbb{E}[\#\{i: S_i\ge s_K\}]=K$, holding fixed the distribution of peers’ qualities; the graduation probability for a focal entrant with quality $\mu$ is then
\[
P^{(K)}(\mu)\;=\;\Pr\!\big[S\ge s_K\big]\;=\;\sum_{k=s_K}^{q}\binom{q}{k}\mu^k(1-\mu)^{q-k},
\]
with slope
\[
{P^{(K)}}'(\mu)\;=\; \frac{\mu^{\,s_K-1}(1-\mu)^{\,q-s_K}}{B\!\left(s_K,\,q-s_K+1\right)}.
\]
The structure is identical to \eqref{eq:Pmu}-\eqref{eq:Pprime}, with the integer threshold now pinned down by the cohort size $K$ and the competitive environment.\footnote{When the peer distribution is heterogeneous, $s_K$ is the integer that implements the desired fill rate under that mixture; if competition is tight, the mapping from $K$ to $s_K$ steepens, which shifts the mode of ${P^{(K)}}'(\cdot)$ upward. In steady usage, the pair $(K,s_K)$ can be calibrated from fill-rate telemetry without modeling the full joint distribution of peers.} All investment and budgeting results carry through by replacing $P$ and $P'$ with $P^{(K)}$ and ${P^{(K)}}'$, and by interpreting $H$ as the discounted continuation value per graduate seat. Managerially, the key difference is that $K$ becomes a policy lever: increasing $K$ lowers the implied threshold $s_K$, raises pass rates, and flattens the slope around the margin; decreasing $K$ has the opposite effect. The same leverage ratio $\Lambda(\mu)=P'(\mu)/P(\mu)$ organizes the targeting advantage of the bounty: for fixed $K$, tune the bar so that $\Lambda(\mu^\star)$ is large around the induced $\mu^\star$; for fixed bar, choose $K$ to avoid the extremes in which almost everyone or almost no one graduates.

These two extensions, heterogeneity and multiple seats, invite one final governance observation. When different segments or cohorts face different bars and continuation values, transparency matters. The instruments are easiest to communicate when each segment has a posted testing window, a posted bar (or cohort size), and a posted one-time bounty; creators then internalize both the certainty and the stakes of early performance. The balancing logic with common shadow prices guarantees that this transparency is also efficient: it selects segments and cohorts on the basis of marginal value per unit resource, not on opaque editorial judgment.

\subsection{General Exploitation Engines}\label{subsec:general_engines}

The analysis in Sections~\ref{sec:model}-\ref{sec:mechanisms_main} summarized post-graduation exposure by a single parameter $H$, interpreted as the discounted size of a continuation block. Real ranking systems allocate continuation stochastically, with exposure responding to posterior beliefs that update as fresh outcomes arrive. The framework accommodates such engines with a change of notation that preserves the core structure and the implementability result.

Let $H_1$ denote the expected discounted pull count \emph{conditional on passing} the graduation bar and entering the promising pool, and let $H_0$ denote the expected discounted pull count \emph{conditional on failing} and remaining in organic rotation. For a focal entrant with quality $\mu$, early testing produces a pass event with probability $P(\mu)$ and a fail event with probability $1-P(\mu)$. By the law of iterated expectations, the entrant’s discounted expected continuation exposure is
\[
\mathbb{E}\big[\text{continuation exposure}\,|\,\mu\big]
\;=\;
H_0\;+\;(H_1-H_0)\,P(\mu).
\]
Adding the guaranteed testing window yields the exposure aggregator
\begin{equation}\label{eq:exposure_twoarm}
\Xi(\mu)\;=\;q \;+\; H_0 \;+\; \Delta H \cdot P(\mu),\qquad \Delta H \equiv H_1-H_0\ \ge 0.
\end{equation}
Expression \eqref{eq:exposure_twoarm} nests the block model with $H_0=0$ and $H_1=H$. It also nests index or Thompson engines in which “failure” still earns some tail exposure, e.g., through long-tail sampling or periodic resets, provided that admission to a promising pool (or to a higher-priority band) is governed by a threshold on the same testing statistics. The private and planner objectives become
\[
\Pi_C(\mu)\;=\;\alpha\,\mu\,\big[q+H_0+\Delta H\,P(\mu)\big]\;+\;B P(\mu)\;-\;c(\mu),
\qquad
W(\mu)\;=\;\mu\,\big[q+H_0+\Delta H\,P(\mu)\big]\;-\;B P(\mu),
\]
and the private first-order condition reads
\begin{equation}\label{eq:foc_generalH}
\alpha\Big[q+H_0+\Delta H\,P(\mu^\star)\Big]\;+\;\alpha\,\mu^\star\,\Delta H\,P'(\mu^\star)\;+\;B\,P'(\mu^\star)\;=\;c'(\mu^\star).
\end{equation}
Relative to \eqref{eq:foc_main}, the only change is that the continuation scale entering both the level and slope terms is $\Delta H$, the \emph{incremental} value of passing the bar, while the baseline continuation $H_0$ acts like an additive contribution to $q$. This decomposition is both analytically convenient and intuitive: guarantees that apply regardless of graduation ($H_0$) behave like extra testing impressions, whereas the \emph{prize spread} $\Delta H$ determines how decisive graduation is and therefore how powerful the slope term is as an incentive device.

The planner’s first-order condition has the familiar form
\[
q+H_0+\Delta H\,P(\mu^{FB})\;+\;\mu^{FB}\,\Delta H\,P'(\mu^{FB})\;=\;c'(\mu^{FB}),
\]
and the implementability result carries over with a cosmetic change:
\begin{equation}\label{eq:Bstar_generalH}
B^\star \;=\; \frac{\big[q+H_0+\Delta H\,P(\mu^{FB})+\mu^{FB}\,\Delta H\,P'(\mu^{FB})\big]\,(1-\alpha)}{P'(\mu^{FB})}.
\end{equation}
The bounty again concentrates transfers on the pass event, but the magnitude of the top-up now scales with the incremental continuation value $\Delta H$. When engines are “soft” in the sense that $H_0$ is nontrivial (e.g., a Thompson sampler that occasionally resurfaces failed entrants), the private return to quality already contains a level component from $H_0$; the implementable bounty focuses on aligning incentives at the margin that separates soft continuation from hard graduation. From a design angle, this suggests two complementary levers in index engines: make the \emph{spread} between priority bands meaningful (increase $\Delta H$), and, holding that spread fixed, set the bar where $P'(\cdot)$ is steep and use the bounty to close any residual wedge caused by low monetization $\alpha$.

Two practical remarks help with calibration. First, $H_0$ and $H_1$ are not free parameters; they are induced by the chosen index or sampling policy and can be estimated ex ante by simulating the engine on historical posterior paths. The resulting $(H_0,\Delta H)$ summarize the continuation landscape for policy design without requiring the main text to track the full dynamics. Second, if product teams prefer to avoid discrete bars altogether, the same machinery goes through by defining a \emph{virtual bar} as a posterior-index threshold for entry into a higher-priority band. Because index thresholds translate into integer thresholds on binomial counts in finite testing windows, the pass event remains a binomial tail, and $P(\cdot)$ and $P'(\cdot)$ keep their closed forms. The economic conclusions are unchanged: incentives ride on the certainty created by $q+H_0$, the decisiveness created by $\Delta H$, and the diagnosticity of the pass event, which the bounty can target with precision.

\subsection{Robustness}\label{subsec:robustness}

\begin{figure}[t]
  \centering
  \includegraphics[width=.72\linewidth]{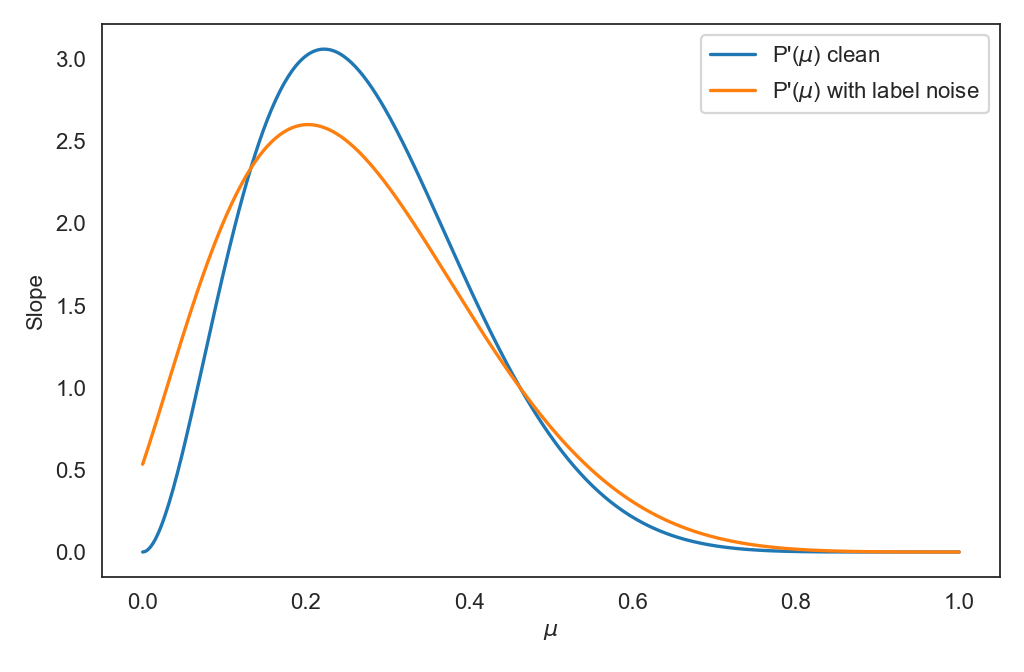}
  \caption{\textbf{Label noise attenuates the slope.}
  With misclassification rates $(\eta_0,\eta_1)=(0.10,0.05)$, the slope scales by $(1-\eta_0-\eta_1)$ and evaluates at the shifted $\tilde p$.
  Lower $P'$ implies a larger $B^\star$ to maintain incentives at the bar.\\\footnotesize\emph{Primitives used:} $q{=}10$, $s{=}3$, $\eta_0 = 0.1$, $\eta_1 = 0.05$.}
  \label{fig:sec5-noise}
\end{figure}

The analytical structure that underpins the results is deliberately light: engagement outcomes during testing are i.i.d.\ conditional on quality, the graduation event is defined by a count threshold, and the private and planner problems depend on the binomial tail $P(\mu)$ and its slope $P'(\mu)$ through the exposure aggregator $q+H P(\mu)$ (or $q+H_0+\Delta H P(\mu)$ in \eqref{eq:exposure_twoarm}). This section argues that the main prescriptions are robust to richer outcome technologies, modest forms of measurement noise and manipulation risk, and time-variation in audience conditions. The common thread is that what matters for incentives is the \emph{diagnosticity} of the pass event around the induced $\mu^\star$, not the fine details of the signal process.

A first robustness margin concerns outcome dispersion. Many surfaces exhibit over-dispersed engagement relative to the Bernoulli baseline, either because underlying propensities are heterogeneous across user subtypes or because bursts of attention generate correlated successes. If the testing statistic remains a sum of independent draws conditional on an \emph{effective} quality parameter, then $P(\mu)$ is the tail of a familiar family (e.g., Beta-Binomial or Poisson-Binomial) and $P'(\mu)$ is its corresponding density with respect to $\mu$. The form of \eqref{eq:foc_main} is unchanged: the level term scales with $P(\mu)$ and the slope term scales with $P'(\mu)$. The comparative statics, the implementability bounty in Theorem~\ref{thm:fb_implement}, and the balanced exploration rule in \eqref{subsec:balanced_rule} carry through verbatim with $(P,P')$ reinterpreted accordingly. In practice, over-dispersion simply widens confidence bands for early statistics; the design implication is to choose $q$ and the bar so that the empirical leverage ratio $\Lambda(\mu)=P'(\mu)/P(\mu)$ remains large at the observed $\mu^\star$ despite noise.

A second margin is measurement error and reporting integrity. The main text assumed platform-measured outcomes to focus on real quality rather than reporting games. If creators can occasionally inflate measured success with probability $\eta\in[0,1)$ that is independent of $\mu$, the pass probability becomes $\tilde{P}(\mu)=(1-\eta)P(\mu)+\eta$ and the slope becomes $\tilde{P}'(\mu)=(1-\eta)P'(\mu)$. All incentive expressions simply scale by $(1-\eta)$ in the slope term and shift in the level term; the implementability bounty \eqref{eq:Bstar} remains well-defined after replacing $(P,P')$ by $(\tilde{P},\tilde{P}')$, and is larger by a factor that compensates for the reduced diagnosticity. If manipulation is deterred by random audits with penalty $\phi$, the effective $\eta$ falls endogenously with $\phi$; setting $\phi$ to keep $(1-\eta)$ high is operationally equivalent to keeping $\Lambda(\mu^\star)$ high. The budget takeaway is that dollars and impressions spent on maintaining clean measurement buy back leverage in the slope term and reduce the bounty required for alignment. 

A third margin is strategic \emph{composition} of effort. Creators may allocate resources to ``desirable'' improvements that raise long run user utility and to ``undesirable'' tactics that raise early clicks without persistence (e.g., click-bait thumbnails). A simple two-effort extension partitions the quality index into $\mu=\mu^{\mathrm{d}}+\mu^{\mathrm{u}}$ with convex costs $c^{\mathrm{d}}$ and $c^{\mathrm{u}}$, while the continuation engine discounts “undesirable” contributions in the long run so that only $\mu^{\mathrm{d}}$ meaningfully affects $H_1$.\footnote{Equivalently, one can model the testing statistic as a weighted sum in which desirable effort loads more heavily into the post-graduation posterior than into the testing posterior.} The optimal policy retains the same structure but \emph{coarsens} the testing statistic or pairs it with content-quality audits so that the pass event’s slope with respect to $\mu^{\mathrm{d}}$ is large while the slope with respect to $\mu^{\mathrm{u}}$ is small. Operationally, this means designing bars around signals that are predictive of sustained satisfaction (e.g., long-view shares or delayed likes) rather than immediate clicks; doing so preserves the targeting advantage of $B$ in \eqref{eq:Bstar} while muting incentives to allocate effort to low-persistence tactics.

Time variation in audience conditions is also common: novelty decays within a session; user intent and inventory quality vary across dayparts. If such variation can be summarized by a time-dependent discount factor $\gamma_t$ that multiplies both the certain and selection components symmetrically, the front-loading result in Theorem~\ref{thm:frontload} strengthens: placing testing early increases the discounted mass $q(\tau)=\sum_{j}\prod_{u< t_j}\gamma_u$ while leaving the distribution of the count statistic $S$ unchanged. If novelty decay lowers the effective per-trial success probability later in the horizon, then early testing also raises $P(\mu)$ and $P'(\mu)$ at a given $\mu$; both effects reinforce the case for early pacing. The practical guidance is to parameterize $q$ in discounted units and to consume that budget as early as eligibility allows, reserving later slots for organic exploration that is not intended to do incentive work.

Capacity constraints and eligibility filters are easily accommodated. If at most one guaranteed impression can be consumed per period, the definition of the earliest schedule $\tau^\uparrow$ in Theorem~\ref{thm:frontload} is “as early as allowed by the cap,” and the weak dominance of front-loading holds as stated. If eligibility depends on coarse segment-level safety screens, the design should treat those screens as part of the guaranteed window $q$ only if passing them is under the entrant’s control and predictive of long run value; otherwise they should be treated as exogenous gating that does not enter the incentive calculus.

Finally, even though our focus is on pre-entry investment, the same logic extends to repeated entry and identity resets. If identities can be reset at low cost, a stricter bar or a slightly larger $q$ may be optimal to preserve diagnosticity and deter churn. Conversely, if resets are costly or reputations are portable, the planner can safely sharpen the bar and lean more on $B$ without encouraging cycling. The common diagnostic remains the same: monitor the empirical pass rate and leverage ratio around the induced $\mu^\star$, and adjust the policy so that targeted dollars fall on informative events.

\subsection{Substantive Applications}\label{subsec:dashboard}

Translating theory into durable operational practice requires a dashboard that expresses the policy in the same currency as planning and experimentation systems. The primitives $(q,\bar{\mu},B,H)$ (or $(q,\bar{\mu},B,H_0,\Delta H)$ in \eqref{eq:exposure_twoarm}) are naturally paired with a small set of empirical diagnostics that can be estimated from routine telemetry and lightweight pilots, and that map one-for-one to the comparative statics and budget identities in Sections~\ref{sec:mechanisms_main}-\ref{sec:resources_bwk}.

The first diagnostic is the \emph{delivered} discounted testing mass per entrant, $\hat{q}$, computed from realized pacing using the same discounting scheme used to value engagement. Because Theorem~\ref{thm:frontload} speaks in discounted units, it is essential to monitor $\hat{q}$ rather than just the raw count of early impressions; a cohort whose guaranteed impressions slip late in the session effectively receives a smaller $q$ and therefore weaker incentives. The second diagnostic is the \emph{cohort pass rate} $\widehat{P}$ at the current bar and testing window, together with a local estimate of the leverage ratio $\widehat{\Lambda}$ around the observed $\mu^\star$. The pass rate is visible in standard funnel reports; the leverage ratio can be estimated by inducing small, randomized, creator-facing nudges that shift observable quality proxies and then measuring the induced change in pass probability, or by fitting a smooth pass-probability curve and reading off its slope at the cohort’s median. These two numbers implement the bar-tuning logic in \S\ref{subsec:tuning_bar}: move the bar when $\widehat{\Lambda}$ is low because pass events are either too rare or too common, and stop when the pass event is most diagnostic.

The third diagnostic is the continuation landscape $(\widehat{H}_0,\widehat{\Delta H})$, estimated by simulating or replaying the exploitation engine on historical posteriors and measuring the expected discounted pull count conditional on passing versus failing. Reporting the \emph{spread} $\widehat{\Delta H}$ next to the bar makes incentives tangible for both managers and creators: it is the operational value of crossing the threshold. When $\widehat{\Delta H}$ is small, the dashboard should flag that either the graduation bar is misaligned with the engine or that the engine’s priority bands should be made more distinct; in either case the slope term in \eqref{eq:foc_generalH} will be weak unless $\widehat{\Lambda}$ is very large.

On the budgeting side, two lines close the loop. The expected impression spend per entrant is simply $\hat{q}$, and the expected cash spend is $\widehat{B}\,\widehat{P}$, where $\widehat{B}$ is the posted bounty. With these in hand, the balanced exploration rule in \eqref{subsec:balanced_rule} can be operationalized by plotting, for a grid of $(q,B)$ values around the status quo, the estimated marginal lift in constrained welfare per additional discounted impression and per additional expected dollar of payout. The intersection with the implied shadow-price lines from the impression and cash budgets identifies the recommended adjustments. Because these marginal curves depend on the induced $\mu^\star$, they should be recomputed periodically as the entrant population responds; in steady state, the dashboard will converge to a regime in which early slots are consumed as early as feasible, the bar keeps the pass event diagnostic, and the posted bounty is just large enough to offset any residual wedge caused by a low monetization share.

Governance and communication are as important as calibration. The instruments are easiest to explain, and least likely to elicit gaming, when they are posted in creator-facing documentation: a guaranteed testing window expressed in opportunities and time, a clear graduation criterion, a schematic of the continuation engine that makes the stakes apparent, and a one-time bounty that triggers on the same event as graduation. Transparency does not weaken incentives; it strengthens them by allowing creators to plan their effort against a stable rule. Internally, the same transparency disciplines policy changes: bar moves should be justified by movements in the leverage ratio, bounty moves by movements in monetization or budgets, and changes in early-slot allocations by movements in the shadow prices of impressions.

Finally, dashboards should track a small set of outcome KPIs that tie back to the theory: the estimated \emph{quality lift} $\Delta \mu^\star$ (proxied by pre-entry process metrics or by post-graduation persistence), the \emph{discovery rate} $\widehat{P}$, the \emph{spend efficiency} measured as expected dollars per basis point of $\mu$ at the margin (a function of $\widehat{\Lambda}$), and the \emph{shadow prices} of impressions and cash that justify current allocations. When these KPIs move in the directions predicted by the model, higher $\widehat{\Lambda}$, larger $\widehat{\Delta H}$, and stable or declining expected spend per unit of induced incentive, the platform can be confident that early exposure and targeted transfers are being used where they have the largest payoff. When they do not, the corrective actions are the ones the analysis has emphasized throughout: front-load more of $q$, retune the bar to restore diagnosticity, or rebalance toward the instrument whose marginal gain per unit of resource is currently higher.

\section{Conclusion}\label{sec:conclusion}

This paper develops a tractable theory of supply side exploration on digital platforms when creators are strategic about pre-entry quality. The starting point is a simple observation: the allocation engine that learns which content to promote is also the instrument that shapes whether effort is worth undertaking before the platform knows anything about a new creator. In the baseline, when early exposure is sparse and selection around graduation is either too forgiving or too stringent, the marginal return to quality is weak and a cold-start externality emerges. The analysis shows how a small set of design levers, a guaranteed testing window $q$, a transparent graduation bar $\bar{\mu}$, an explicit continuation value summarized by $H$ (or $(H_0,\Delta H)$), and a one-time, hit-based bounty $B$, can be assembled into an implementable policy that restores strong incentives while respecting impression and cash budgets.

Two analytical objects organize the results: the binomial tail $P(\mu)$, which gives the probability that a creator of quality $\mu$ clears the bar after $q$ tests, and its slope $P'(\mu)$, which measures the diagnosticity of the pass event. With these in hand, the entrant’s best response is characterized by a single first-order condition \eqref{eq:foc_main}. Under weak regularity, a unique equilibrium quality $\mu^\star$ exists (Proposition~\ref{prop:exist_unique}) and moves monotonically with each parameter (Theorem~\ref{thm:mcs}). The comparative statics are economically transparent. A larger testing window raises the certain return to effort even for creators who will not immediately graduate; a richer continuation stream increases the stakes of being selected and strengthens incentives at the margin; a higher revenue share lets creators internalize more of the engagement surplus; and a hit-based bounty concentrates transfers exactly where small improvements in quality most change graduation odds.

Three design principles follow. First, timing matters. When testing opportunities are managed as raw impressions rather than discounted units, the earliest feasible pacing weakly dominates all others (Theorem~\ref{thm:frontload}). Front-loading increases the certain component of the return without changing the distribution of the graduation statistic; for a fixed count of trials, nothing is lost by moving them forward and much is gained in incentive strength. Second, targeting matters. Because $P'(\mu)$ is a Beta density that peaks near $(s-1)/(q-1)$, the bar should be placed where the pass event is neither too rare nor too common for the entrant cohort; the leverage ratio $\Lambda(\mu)=P'(\mu)/P(\mu)$ provides a useful diagnostic. When $\Lambda(\mu^\star)$ is large, a dollar tied to the pass event buys more marginal incentive than a dollar spread uniformly over testing, and the same lift in the private first-order condition can be achieved with lower expected spend. Third, alignment is feasible. A closed-form bounty $B^\star$ implements the planner’s first best (Theorem~\ref{thm:fb_implement}) by topping up the private marginal return at the graduation margin; it vanishes as monetization approaches full revenue sharing and shrinks as the pass event becomes more diagnostic.

Budgets enter the picture naturally. Guaranteed impressions consume attention; bounties consume cash. Treating both as explicit constraints leads to a primal-dual characterization in which the optimal policy equalizes the marginal gain in constrained welfare per discounted testing impression and per expected dollar of payout. The resulting balanced exploration rule (Section~\ref{sec:resources_bwk}) is operational: for a given pair of shadow prices, increase $q$ until the next discounted early impression is worth its shadow price and increase $B$ until the next expected payout dollar is worth its own shadow price, adjusting the bar so that the pass event remains diagnostic around the induced equilibrium. This rule scales cleanly to segmentation with a common set of shadow prices and segment-specific marginal gains, and it nests multi-winner discovery by treating cohort size as a choice of threshold.

While the model is stylized, the prescriptions are robust and managerially legible. The results extend to over-dispersed or correlated outcomes as long as the graduation statistic admits a smooth tail and density; they accommodate exploitation engines implemented by indices or sampling by reinterpreting $H$ as an expected pull count and focusing on the prize spread $\Delta H$; and they tolerate modest measurement noise and integrity risks by recognizing that safeguarding diagnosticity is itself a budgeted activity that buys back slope in the graduation frontier. The framework also clarifies where care is needed. If early statistics are driven by low-persistence tactics, the pass event should be built from signals that predict long run  satisfaction rather than immediate clicks; if identity resets are cheap, the bar and window should be tuned to preserve diagnosticity without encouraging churn; if fairness or safety constraints limit continuation concentration, the bar must carry more of the incentive load and the bounty can be used to close the remaining wedge.

There are natural directions for further work. A dynamic analysis with repeated creative choices would allow for intertemporal substitution in effort and reputation effects, as well as the design of rolling bars and renewal of testing windows. Endogenous audience composition and cross-creator spillovers would connect discovery incentives to the platform’s broader ecology, including diversity and fairness constraints. Richer reporting games, including partial manipulability of testing statistics, would link the present approach to strategic classification models of explanation and auditing. Finally, the interaction between supply side incentives and user side information design, for instance, surfacing previews or uncertainty disclosures that improve the diagnosticity of early outcomes, promises a unified treatment of content strategy across both sides of the market.

In summary, the intuition is simple. Attention and cash are scarce resources; selection is a slope. Good policy spends the former where the latter is steep. A small number of early, well placed opportunities and a small, well-targeted bounty can move investment behavior meaningfully, provided the graduation event is designed to be informative at the right margin. Because the instruments are transparent and their effects are easy to monitor with standard telemetry, the theory translates into a playbook that platform designers can implement, audit, and iterate: front-load the incubation window, place the bar where the pass event is diagnostic, make the prize spread meaningful, and use targeted transfers to close any residual wedge. When these conditions hold, the cold-start externality recedes, high-quality entrants surface more reliably, and the platform’s exploration policy advances both efficiency and growth.

\newpage
\bibliographystyle{chicago}
\bibliography{ref}
\newpage
\appendix
\section*{ONLINE APPENDIX}
\section{Proofs and Technical Lemmas}\label{app:proofs}

This appendix collects formal statements and proofs for the results in the main text. The goal is clarity over generality: we keep the primitives as introduced in Sections~\ref{sec:model}-\ref{sec:mechanisms_main}, specialize where it simplifies exposition, and highlight where arguments rely on standard analytic facts. Throughout, $q\in\mathbb{N}$ denotes the (discounted) size of the testing window, $s\in\{0,1,\ldots,q\}$ is the integer threshold implied by the posted bar, and $\mu\in(0,1)$ is the entrant’s pre-entry quality. When convenient, we write $H\ge 0$ for the continuation scale and, in the generalized engine of Section~\ref{subsec:general_engines}, $(H_0,\Delta H)$ with $\Delta H\equiv H_1-H_0\ge 0$. The revenue share is $\alpha\in(0,1]$ and the one-time bounty is $B\ge 0$. The cost function $c:[\underline{\mu},\overline{\mu}]\to\mathbb{R}_+$ is twice continuously differentiable and strictly convex.

\subsection*{A.1 Preliminaries and notation}

During testing, the entrant’s content is displayed $q$ times in expectation, independently across trials, and each trial yields a binary success with probability $\mu$. Let
\[
S\ \sim\ \mathrm{Binomial}(q,\mu)
\qquad\text{and}\qquad
P(\mu)\ \equiv\ \Pr[S\ge s]\ =\ \sum_{k=s}^{q}\binom{q}{k}\mu^k(1-\mu)^{q-k}.
\]
The function $P(\cdot)$ is the \emph{binomial tail} at threshold $s$; it maps true quality into the probability of clearing the bar after $q$ tests. We will frequently use its derivative $P'(\cdot)$ and, occasionally, $P''(\cdot)$. Two special-function identities keep the analysis compact. The \emph{Beta function} is $B(a,b)=\int_{0}^{1} t^{a-1}(1-t)^{b-1}\,dt$ for $a,b>0$. The \emph{regularized incomplete Beta function} is
\[
I_{x}(a,b)\ \equiv\ \frac{1}{B(a,b)}\int_{0}^{x} t^{a-1}(1-t)^{b-1}\,dt,\qquad a,b>0,\ \ x\in[0,1].
\]
A standard representation (see, e.g., Abramowitz-Stegun 26.5.24) gives $P(\mu)=I_{\mu}(s,q-s+1)$ when $1\le s\le q$. Differentiating under the integral shows that
\begin{equation}\label{eq:beta_density_identity_app}
\frac{d}{d\mu} I_{\mu}(a,b)\ =\ \frac{\mu^{a-1}(1-\mu)^{b-1}}{B(a,b)}.
\end{equation}
We will invoke \eqref{eq:beta_density_identity_app} with $(a,b)=(s,q-s+1)$ to obtain the closed form for $P'(\mu)$.

When referring to equilibrium objects, we keep the notation from the main text. The entrant’s discounted objective under $(q,s,B,H)$ is
\[
\Pi_C(\mu)\ =\ \alpha\,\mu\,\big[q+H P(\mu)\big]\ +\ B\,P(\mu)\ -\ c(\mu),
\]
with the first-order condition (FOC) for an interior optimum given by
\begin{equation}\label{eq:foc_app}
\alpha\big[q+H P(\mu^\star)\big]\ +\ \alpha\,\mu^\star\,H\,P'(\mu^\star)\ +\ B\,P'(\mu^\star)\ =\ c'(\mu^\star).
\end{equation}
In the generalized engine of Section~\ref{sec:general_engines}, replace $q+H P(\mu)$ by $q+H_0+\Delta H\,P(\mu)$ and $H$ by $\Delta H$ in \eqref{eq:foc_app}. The planner’s FOC for the first-best benchmark is
\begin{equation}\label{eq:planner_foc_app}
q+H P(\mu^{FB})\ +\ \mu^{FB} H P'(\mu^{FB})\ =\ c'(\mu^{FB}),
\end{equation}
or the obvious variant with $(H_0,\Delta H)$.

Finally, we restate and slightly sharpen the regularity condition used in Section~\ref{sec:mechanisms_main}.

\noindent\textbf{Assumption A1 (Regularity and single crossing).} The cost function $c$ is $C^2$ and strictly convex on $[\underline{\mu},\overline{\mu}]$. Moreover, there exists a constant $\kappa_{\min}>0$ such that $c''(\mu)\ge \kappa_{\min}$ for all $\mu\in[\underline{\mu},\overline{\mu}]$, and
\begin{equation}\label{eq:single_crossing_condition}
\kappa_{\min}\ >\ \alpha H\,\Big(2\,\sup_{\mu\in(0,1)} P'(\mu)\ +\ \sup_{\mu\in(0,1)} \mu\,|P''(\mu)|\Big)\ +\ B\,\sup_{\mu\in(0,1)} |P''(\mu)|.
\end{equation}
Assumption A1 implies that the “gap” function $\Delta(\mu)\equiv c'(\mu)-\big[\alpha(q+H P(\mu))+\alpha \mu H P'(\mu)+B P'(\mu)\big]$ is strictly increasing on $[\underline{\mu},\overline{\mu}]$.

Condition \eqref{eq:single_crossing_condition} is a convenient sufficient condition. It uses the facts, established below, that $P'$ and $P''$ are continuous and bounded on $(0,1)$ for fixed integer $(q,s)$ with $1\le s\le q$. Intuitively, Assumption A1 requires the cost curvature to dominate the combined curvature of the benefit term induced by the selection frontier. In common specifications such as quadratic or strongly convex costs, the bound is easy to satisfy.

\subsection*{A.2 Properties of the binomial tail and its derivatives}

We collect the analytic properties of $P(\cdot)$ that are used throughout.

\begin{lemma}[Derivative identity and basic properties]\label{lem:Pprime_identity}
Fix integers $q\ge 1$ and $s\in\{1,\ldots,q\}$. The binomial-tail function
\[
P(\mu)=\sum_{k=s}^{q}\binom{q}{k}\mu^k(1-\mu)^{q-k},\qquad \mu\in(0,1),
\]
is continuously differentiable on $(0,1)$, strictly increasing, and satisfies
\begin{equation}\label{eq:Pprime_closed_form_app}
P'(\mu)\ =\ \frac{\mu^{s-1}(1-\mu)^{q-s}}{B\!\big(s,\,q-s+1\big)}\ \equiv\ f_{\mathrm{Beta}(s,\,q-s+1)}(\mu).
\end{equation}
If $1<s<q$, then $P'$ is unimodal with unique mode at $\mu^\dagger=(s-1)/(q-1)$, $\lim_{\mu\downarrow 0} P'(\mu)=0$, and $\lim_{\mu\uparrow 1} P'(\mu)=0$. Moreover, $P$ and $P'$ extend continuously to $[0,1]$ with $P(0)=\mathbb{1}\{s=0\}$ and $P(1)=1$.
\end{lemma}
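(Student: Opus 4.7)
The plan is to prove the derivative identity directly by term-by-term differentiation of the finite sum defining $P(\mu)$, and then read off all the remaining claims (strict monotonicity, unimodality, boundary behavior, continuous extension) from the resulting closed form.

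First I would differentiate each summand using the standard binomial-coefficient identities $k\binom{q}{k}=q\binom{q-1}{k-1}$ and $(q-k)\binom{q}{k}=q\binom{q-1}{k}$, which rewrite $\tfrac{d}{d\mu}\binom{q}{k}\mu^{k}(1-\mu)^{q-k}$ as a difference of two consecutive terms of the same form indexed by $\binom{q-1}{\cdot}$. Summing from $k=s$ to $k=q$ telescopes, with the $k=q$ upper contribution vanishing because $\binom{q-1}{q}=0$, leaving only the $k=s$ lower contribution $q\binom{q-1}{s-1}\mu^{s-1}(1-\mu)^{q-s}$. Recognizing $q\binom{q-1}{s-1}=\tfrac{q!}{(s-1)!(q-s)!}=1/B(s,q-s+1)$ delivers \eqref{eq:Pprime_closed_form_app}. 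As a cross-check, the representation $P(\mu)=I_{\mu}(s,q-s+1)$ combined with the integral identity \eqref{eq:beta_density_identity_app} gives the same expression, confirming that $P'(\mu)$ is the Beta$(s,q-s+1)$ density. Continuous differentiability on $(0,1)$ is immediate because $P$ is polynomial.

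Next, because $P'(\mu)>0$ on $(0,1)$ for $1\le s\le q$ (both factors $\mu^{s-1}$ and $(1-\mu)^{q-s}$ are strictly positive there), $P$ is strictly increasing on $(0,1)$. For unimodality under $1<s<q$, I would differentiate $\log P'(\mu)=(s-1)\log\mu+(q-s)\log(1-\mu)-\log B(s,q-s+1)$ and set the derivative to zero, giving $(s-1)/\mu=(q-s)/(1-\mu)$ and hence the unique critical point $\mu^{\dagger}=(s-1)/(q-1)\in(0,1)$. The sign analysis of this logarithmic derivative (positive to the left of $\mu^\dagger$, negative to the right) shows $\mu^\dagger$ is a strict maximum, so $P'$ is unimodal. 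The boundary limits $P'(0^{+})=P'(1^{-})=0$ follow since $s-1\ge 1$ and $q-s\ge 1$ under $1<s<q$, so each boundary factor kills the density.

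Finally, continuous extension of $P$ and $P'$ to $[0,1]$ is trivial because both are polynomials. Evaluating at endpoints: $P(1)=\sum_{k=s}^{q}\binom{q}{k}\cdot 1^{k}\cdot 0^{q-k}=\binom{q}{q}=1$ (only the $k=q$ term survives), and $P(0)=\sum_{k=s}^{q}\binom{q}{k}\cdot 0^{k}\cdot 1^{q-k}$, in which every term vanishes because $k\ge s\ge 1$ forces $0^{k}=0$; this equals $\mathbb{1}\{s=0\}$ with the convention that the sum includes $k=0$ only when $s=0$. I do not expect a genuine obstacle: the only step requiring care is bookkeeping in the telescoping sum, specifically ensuring the upper boundary term vanishes and that the coefficient $q\binom{q-1}{s-1}$ is correctly rewritten as the reciprocal Beta constant. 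Everything else is either continuity of a polynomial or elementary calculus on a product of power functions.
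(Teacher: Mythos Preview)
Your proposal is correct and follows essentially the same approach as the paper: the paper's proof mentions both term-wise differentiation and the incomplete-Beta identity without carrying out either in detail, while you explicitly execute the telescoping sum (the detailed version of ``term-wise differentiation'') and cite the Beta identity as a cross-check. Your treatment of unimodality via the log-derivative and your direct endpoint evaluations are more explicit than the paper's appeals to ``any standard reference'' and ``dominated convergence,'' but the underlying argument is the same.
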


\noindent\emph{Proof.} The series representation is a finite sum of $C^\infty$ functions on $(0,1)$, hence $P$ is $C^1$ and $P'$ can be computed by term-wise differentiation. Alternatively, use the identity $P(\mu)=I_{\mu}(s,q-s+1)$ and \eqref{eq:beta_density_identity_app} to obtain \eqref{eq:Pprime_closed_form_app}. Because the Beta density is nonnegative on $(0,1)$, $P$ is weakly increasing; in fact strictly increasing because $P'(\mu)>0$ for $\mu\in(0,1)$ when $1\le s\le q$. For $1<s<q$, the Beta density is unimodal with mode $(s-1)/(s+q-s+1-2)=(s-1)/(q-1)$ and decays to zero at the endpoints; see any standard reference. Continuity at the boundaries follows from dominated convergence. \qed

\begin{lemma}[Bounds and Lipschitz continuity]\label{lem:bounds}
For fixed integers $(q,s)$ with $1\le s\le q$, there exist finite constants $M_1,M_2>0$ depending only on $(q,s)$ such that
\[
\sup_{\mu\in(0,1)} P'(\mu)\ \le\ M_1,\qquad \sup_{\mu\in(0,1)} |P''(\mu)|\ \le\ M_2.
\]
Consequently, $P$ is globally Lipschitz on $[0,1]$ and $P'$ is globally Lipschitz on any compact subinterval of $(0,1)$.
\end{lemma}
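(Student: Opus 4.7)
The plan is to leverage the closed-form expression for $P'$ from Lemma~\ref{lem:Pprime_identity} to bound $P'$ directly, then differentiate once more and control $P''$ by a careful case analysis on the boundary behavior, and finally deduce the Lipschitz claims by the mean value theorem.

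\textbf{Step 1 (bounding $P'$).} By \eqref{eq:Pprime_closed_form_app}, $P'$ is a Beta density. The numerator $\mu^{s-1}(1-\mu)^{q-s}$ is a product of nonnegative powers on $[0,1]$ under the hypothesis $1\le s\le q$, hence continuous on the compact interval $[0,1]$ and therefore attains a finite maximum. Setting
\[
M_1\ \equiv\ \frac{1}{B(s,q-s+1)}\,\max_{\mu\in[0,1]}\mu^{s-1}(1-\mu)^{q-s}
\]
gives the first bound. The maximum can be computed in closed form, equal to $1$ when $s=1$ or $s=q$ and attained at the interior mode $\mu^\dagger=(s-1)/(q-1)$ when $1<s<q$, though the precise value is immaterial for the bound.

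\textbf{Step 2 (bounding $P''$).} Differentiating $P'$ yields
\[
P''(\mu)\ =\ \frac{(s-1)\mu^{s-2}(1-\mu)^{q-s}\;-\;(q-s)\mu^{s-1}(1-\mu)^{q-s-1}}{B(s,q-s+1)}.
\]
Three sub-cases dispatch the apparent boundary singularities. If $s=1$, the factor $(s-1)$ vanishes and the expression reduces to $-q(1-\mu)^{q-1}$, which is continuous and bounded on $[0,1]$. If $s=q$, symmetrically the factor $(q-s)$ vanishes, leaving $q\mu^{q-1}$. If $1<s<q$, both exponents $s-2\ge 0$ and $q-s-1\ge 0$ are nonnegative, so each term is continuous on the closed interval. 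In every case $P''$ extends continuously to $[0,1]$, and we may take $M_2\equiv\sup_{\mu\in[0,1]}|P''(\mu)|<\infty$, depending only on $(q,s)$. (An alternative expression, slightly cleaner, writes $P''$ as a signed linear combination of two neighboring Beta densities $\mathrm{Beta}(s-1,q-s+1)$ and $\mathrm{Beta}(s,q-s)$ with coefficients absorbing the beta-function ratios; this makes the continuous extension transparent without case analysis.)

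\textbf{Step 3 (Lipschitz conclusions).} With $P'$ bounded by $M_1$ on $(0,1)$ and $P$ continuous on $[0,1]$ by Lemma~\ref{lem:Pprime_identity}, the mean value theorem gives $|P(\mu)-P(\nu)|\le M_1|\mu-\nu|$ for all $\mu,\nu\in[0,1]$, so $P$ is globally Lipschitz on $[0,1]$. For $P'$, fix any compact $[a,b]\subset(0,1)$; then $P''$ is continuous on $[a,b]$ and bounded by some $L_{[a,b]}\le M_2$, and the mean value theorem applied to $P'$ yields $|P'(\mu)-P'(\nu)|\le L_{[a,b]}|\mu-\nu|$ for $\mu,\nu\in[a,b]$.

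\textbf{Main obstacle.} The only non-routine point is the boundary behavior in the formula for $P''$, where naive product-rule exponents $\mu^{s-2}$ or $(1-\mu)^{q-s-1}$ appear singular precisely when $s=1$ or $s=q$. The resolution is to observe that the accompanying coefficient $(s-1)$ or $(q-s)$ vanishes in that case and kills the offending factor, so no actual singularity arises. Everything else is an application of compactness and the mean value theorem.
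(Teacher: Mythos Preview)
Your proof is correct and follows essentially the same route as the paper's: use the Beta-density closed form for $P'$, differentiate once more, treat the edge cases $s\in\{1,q\}$ separately to dispatch the apparent boundary singularities, and conclude Lipschitz continuity via compactness and the mean value theorem. One inconsequential arithmetic slip: in the edge cases your explicit expressions for $P''$ are actually those of $P'$ (the correct formulas are $-q(q-1)(1-\mu)^{q-2}$ and $q(q-1)\mu^{q-2}$, respectively), but this does not affect the argument since what matters is that the result is a polynomial and hence continuous and bounded on $[0,1]$.
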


\noindent\emph{Proof.} The closed form \eqref{eq:Pprime_closed_form_app} implies $P'$ is continuous on $(0,1)$ and extends continuously to $[0,1]$ with limits zero at the boundaries when $1<s<q$; in the edge cases $s\in\{1,q\}$ the expression reduces to a power function and is still bounded. Continuity on a compact set implies a finite supremum, yielding $M_1$. Differentiating \eqref{eq:Pprime_closed_form_app} explicitly gives
\[
P''(\mu)\ =\ \frac{(s-1)\mu^{s-2}(1-\mu)^{q-s}-(q-s)\mu^{s-1}(1-\mu)^{q-s-1}}{B(s,q-s+1)},
\]
which is continuous on $(0,1)$ and integrable near the endpoints for $s\in\{1,\ldots,q\}$; the supremum is thus finite, yielding $M_2$. Lipschitz continuity follows from the bounded derivative on compact sets. \qed

\begin{lemma}[Hazard-type ratio and diagnosticity]\label{lem:hazard}
Define $\Lambda(\mu)\equiv P'(\mu)/P(\mu)$ for $\mu\in(0,1)$ with $P(\mu)\in(0,1)$. Then $\Lambda(\cdot)$ is continuous wherever defined and finite on compact subsets of $\{\mu: P(\mu)\in(0,1)\}$. If $1<s<q$, $\Lambda(\mu)$ is large precisely when the integer threshold $s$ is aligned with the central mass of the quality distribution; in particular, for $\mu$ near $\mu^\dagger=(s-1)/(q-1)$, $\Lambda(\mu)$ attains relatively high values.
\end{lemma}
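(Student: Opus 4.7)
The plan is to handle the three claims separately using the slope identity from Lemma~\ref{lem:Pprime_identity} and the uniform bounds from Lemma~\ref{lem:bounds}. For continuity, I would observe that $P$ and $P'$ are continuous on $(0,1)$ and the denominator $P(\mu)$ is strictly positive on the stated domain; a quotient of continuous functions with nonvanishing denominator is continuous, so $\Lambda$ inherits continuity directly. For finiteness on compacta, I would fix a compact $K\subset\{\mu:P(\mu)\in(0,1)\}$, use continuity of $P$ to guarantee $p_{\min}\equiv\inf_{\mu\in K}P(\mu)>0$, and combine with the uniform bound $P'\le M_1$ from Lemma~\ref{lem:bounds} to conclude $|\Lambda|\le M_1/p_{\min}$ on $K$.

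For the diagnosticity claim, my plan is to exploit the closed form \eqref{eq:Pprime_closed_form_app}: $P'$ coincides with the $\mathrm{Beta}(s,q-s+1)$ density, which for $1<s<q$ is unimodal with interior mode $\mu^\dagger=(s-1)/(q-1)$ and maximum value $f^\star\equiv f_{\mathrm{Beta}(s,q-s+1)}(\mu^\dagger)$. Since $P$ is strictly increasing and reaches $1$ only at $\mu=1$, we have $P(\mu^\dagger)<1$, so $\Lambda(\mu^\dagger)=f^\star/P(\mu^\dagger)>f^\star$; by contrast, as $\mu\uparrow 1$, $P'(\mu)\downarrow 0$ while $P(\mu)\uparrow 1$, driving $\Lambda(\mu)\downarrow 0$. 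Continuity from the first step then extends the lower bound at $\mu^\dagger$ to a neighborhood, delivering the ``near the mode'' statement. The dual ``alignment with central mass'' reading follows by inverting the perspective: given a target distribution of the induced $\mu^\star$, the designer picks $s$ so that $(s-1)/(q-1)$ falls in the bulk, which maximizes the diagnostic leverage entering the private first-order condition \eqref{eq:foc_app}.

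The main obstacle I anticipate is linguistic rather than technical: the phrase ``$\Lambda$ is large precisely when the threshold is aligned with the central mass'' is qualitative, and $\Lambda$ also blows up near the lower boundary (a short expansion using $P(\mu)\sim\binom{q}{s}\mu^s$ and $P'(\mu)\sim \mu^{s-1}/B(s,q-s+1)$ gives $\Lambda(\mu)\sim s/\mu$ as $\mu\downarrow 0$ when $s\ge 1$). Any precise statement must therefore be local and framed as a comparison against the upper-tail regime where $P$ saturates, rather than against both boundaries. My plan is to state the third claim as the two-part assertion ``$\Lambda(\mu^\dagger)>f^\star$ while $\Lambda(\mu)\downarrow 0$ as $\mu\uparrow 1$'' plus a continuity extension, noting that the left-boundary blow-up is irrelevant operationally because interiority of $\mu^\star$ under Assumption~A1 keeps the equilibrium away from $0$, so the economically meaningful comparison is between the mode and the upper tail.
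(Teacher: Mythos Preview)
Your proposal is correct and mirrors the paper's own argument almost exactly: the paper invokes the quotient-of-continuous-functions reasoning for continuity and the unimodality of $P'$ together with the S-shape of $P$ (so that ``around the middle of the support, $P$ is away from $0$ and $1$ while $P'$ is near its peak'') for the qualitative diagnosticity claim. Your version adds two useful refinements that the paper omits---the explicit $M_1/p_{\min}$ bound for finiteness on compacta and the acknowledgment that $\Lambda(\mu)\sim s/\mu$ blows up as $\mu\downarrow 0$, which forces the ``large near the mode'' statement to be read as a comparison against the upper-tail regime rather than a global maximum---but these are elaborations of the same approach rather than a different route.
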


\noindent\emph{Proof.} Continuity follows from the quotient of continuous functions with nonvanishing denominator. The qualitative statement uses the unimodality of $P'$ and the S-shape of $P$ for $1<s<q$: around the middle of the support, $P$ is away from 0 and 1 while $P'$ is near its peak, hence their ratio is relatively large. \qed

\noindent\textbf{Remarks.} (i) The proofs above extend verbatim to over-dispersed families used in robustness checks (e.g., Beta-Binomial), with $P'(\cdot)$ replaced by the corresponding density. (ii) All suprema in Assumption A1 are finite by Lemma~\ref{lem:bounds}.

\subsection*{A.3 Existence and uniqueness of the best response (Proposition~\ref{prop:exist_unique})}

We now establish the existence and uniqueness of $\mu^\star$ solving the private optimality condition \eqref{eq:foc_app} under Assumption A1. The argument proceeds in three steps: derive the FOC from first principles and verify interior optimality conditions; show that the “gap” function $\Delta(\mu)$ is strictly increasing; and conclude uniqueness by the intermediate value theorem.

\begin{lemma}[Derivation of the FOC and sufficiency]\label{lem:foc_derivation}
Assume $c$ is $C^2$ and strictly convex. Then $\Pi_C(\mu)=\alpha\,\mu\,[q+H P(\mu)]+B P(\mu)-c(\mu)$ is continuously differentiable on $(\underline{\mu},\overline{\mu})$ with derivative
\[
\Pi_C'(\mu)\ =\ \alpha\big[q+H P(\mu)\big]\ +\ \alpha\,\mu\,H\,P'(\mu)\ +\ B\,P'(\mu)\ -\ c'(\mu).
\]
If Assumption A1 holds, $\Pi_C(\cdot)$ is strictly quasi-concave on $[\underline{\mu},\overline{\mu}]$ and any solution to $\Pi_C'(\mu)=0$ is the unique maximizer.
\end{lemma}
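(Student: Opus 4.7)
The plan is to verify continuous differentiability by term-by-term computation, then to establish strict concavity of $\Pi_C$ through a uniform second-derivative estimate, and finally to deduce uniqueness of the maximizer from strict concavity. I would begin by observing that Lemma~\ref{lem:Pprime_identity} and Lemma~\ref{lem:bounds} give $P\in C^2$ on $(0,1)$, while $c\in C^2$ by hypothesis, so $\Pi_C\in C^2$ on $(\underline{\mu},\overline{\mu})$. Applying the product rule to the term $\alpha\mu[q+H P(\mu)]$ separates a constant-in-$\mu$ contribution $\alpha[q+H P(\mu)]$ from a mixed contribution $\alpha\mu H P'(\mu)$; differentiating $B P(\mu)$ yields $B P'(\mu)$ via \eqref{eq:Pprime_closed_form_app}; and subtracting $c'(\mu)$ produces exactly the expression claimed for $\Pi_C'(\mu)$.

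The substantive step is to show strict quasi-concavity. I would establish the stronger property of strict concavity by bounding $\Pi_C''$. A direct second differentiation gives
\[
\Pi_C''(\mu)\;=\;2\alpha H P'(\mu)\;+\;\alpha\mu H P''(\mu)\;+\;B P''(\mu)\;-\;c''(\mu).
\]
Taking absolute values on the benefit-curvature terms and invoking Lemma~\ref{lem:bounds} to bound $\sup P'$ and $\sup|P''|$ over $(0,1)$, the magnitude of the first three summands is at most $\alpha H\bigl(2\sup_\mu P'(\mu)+\sup_\mu \mu|P''(\mu)|\bigr)+B\sup_\mu|P''(\mu)|$. Assumption~A1's inequality \eqref{eq:single_crossing_condition} states precisely that this upper bound is strictly less than $\kappa_{\min}\le c''(\mu)$, so $\Pi_C''(\mu)<0$ uniformly on $[\underline{\mu},\overline{\mu}]$. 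Hence $\Pi_C$ is strictly concave, which implies strict quasi-concavity.

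Uniqueness then follows in one line: strict concavity makes $\Pi_C'$ strictly decreasing, so at most one interior $\mu$ satisfies $\Pi_C'(\mu)=0$, and any such critical point is the unique global maximizer on the compact interval; if no interior root exists the maximizer is the appropriate endpoint, again unique by strict concavity. This is exactly the single-crossing statement for the gap function $\Delta(\mu)=-\Pi_C'(\mu)$ used in Proposition~\ref{prop:exist_unique}. The main obstacle is the bookkeeping in the second-derivative estimate: $P''(\mu)$ changes sign on $(0,1)$ when $1<s<q$, so one must work with absolute values and verify that the three supremum bounds combine cleanly into the single right-hand side that Assumption~A1 dominates. Once that curvature bound is secured, differentiation and the concavity-to-uniqueness implication are standard.
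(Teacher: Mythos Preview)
Your proposal is correct and follows essentially the same route as the paper: compute $\Pi_C'$ via the product rule and Lemma~\ref{lem:Pprime_identity}, then differentiate again and use the uniform bound in Assumption~A1 (together with Lemma~\ref{lem:bounds}) to show $\Pi_C''(\mu)<0$, so strict concavity yields the uniqueness claim. Your explicit handling of the sign change in $P''$ via absolute values is a nice clarification but not a departure from the paper's argument.
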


\noindent\emph{Proof.} Differentiability follows from the product rule and Lemma~\ref{lem:Pprime_identity}. The derivative equals the difference between marginal benefit and marginal cost, yielding \eqref{eq:foc_app}. To argue strict quasi-concavity, observe that
\[
\Pi_C''(\mu)\ =\ \alpha H\Big(2\,P'(\mu)+\mu\,P''(\mu)\Big)\ +\ B\,P''(\mu)\ -\ c''(\mu).
\]
By Assumption A1, $c''(\mu)$ dominates the positive and negative components induced by $P'$ and $P''$ uniformly in $\mu$, so $\Pi_C''(\mu)<0$ for all $\mu$. Strict concavity is stronger than needed but sufficient to ensure any stationary point is the unique global maximizer. \qed

\begin{lemma}[Monotonicity of the gap function]\label{lem:gap_monotone}
Define $\Delta(\mu)\equiv c'(\mu)-\big[\alpha(q+H P(\mu))+\alpha \mu H P'(\mu)+B P'(\mu)\big]$ on $[\underline{\mu},\overline{\mu}]$. Under Assumption A1, $\Delta(\cdot)$ is strictly increasing and continuous.
\end{lemma}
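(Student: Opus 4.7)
\emph{Proof plan.} The plan is to obtain the two conclusions separately. Continuity of $\Delta$ is immediate: $c'$ is continuous by the $C^2$ assumption on $c$, while $P$ and $P'$ are continuous on $(0,1)$ by Lemma~\ref{lem:Pprime_identity} and extend continuously to $[0,1]$ by Lemma~\ref{lem:bounds}. Composition, multiplication by $\mu$, and scalar multiples by $\alpha$, $H$, $B$, and $q$ preserve continuity, so $\Delta$ is continuous on $[\underline{\mu},\overline{\mu}]$.

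For strict monotonicity, the plan is to differentiate $\Delta$ termwise and then use Assumption~A1 to bound the derivative away from zero uniformly in $\mu$. Applying the product rule to $\mu\,P'(\mu)$ yields
\[
\Delta'(\mu)\ =\ c''(\mu)\ -\ \Big(2\,\alpha H\,P'(\mu)\ +\ \alpha\,\mu\,H\,P''(\mu)\ +\ B\,P''(\mu)\Big).
\]
This derivative exists on $(\underline{\mu},\overline{\mu})$ because $P''$ is continuous and bounded on that interval by Lemma~\ref{lem:bounds}. The key step is to control the sign of the bracketed term, which is not obviously positive: although $P'(\mu)>0$, the second derivative $P''(\mu)$ changes sign around the mode $\mu^\dagger=(s-1)/(q-1)$ identified in Lemma~\ref{lem:Pprime_identity}. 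The clean way forward is to pass to absolute values and invoke the suprema that appear in Assumption~A1:
\[
\Big|2\alpha H P'(\mu)+\alpha \mu H P''(\mu)+B P''(\mu)\Big|
\ \le\ \alpha H\Big(2\sup_{\mu} P'(\mu)+\sup_{\mu}\mu|P''(\mu)|\Big)+B\sup_{\mu}|P''(\mu)|.
\]

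By Assumption~A1, the right-hand side is strictly less than $\kappa_{\min}\le c''(\mu)$, uniformly in $\mu\in(\underline{\mu},\overline{\mu})$. Consequently $\Delta'(\mu)>0$ on the open interval, which together with continuity of $\Delta$ on the closed interval implies that $\Delta$ is strictly increasing on $[\underline{\mu},\overline{\mu}]$. Combined with Lemma~\ref{lem:foc_derivation}, this gives the single-crossing property used in Proposition~\ref{prop:exist_unique}: $\Delta$ attains each value at most once, so the interior solution to $\Delta(\mu)=0$, if it exists, is unique, and otherwise one of the boundary points is the maximizer by sign inspection of $\Delta$ at the endpoints.

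The main obstacle is the bookkeeping around the sign of $P''$. A naive attempt that simply bounds the bracketed term by $2\alpha H P'+\alpha \mu H P''+B P''$ without absolute values fails near the mode of $P'$, where $P''$ switches sign. Passing to the suprema of $|P''|$ and $\mu|P''|$ provided by Lemma~\ref{lem:bounds} is what makes Assumption~A1's single-crossing condition \eqref{eq:single_crossing_condition} strong enough to dominate both the positive and negative contributions in one uniform bound.
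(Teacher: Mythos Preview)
Your proof is correct and follows essentially the same approach as the paper: compute $\Delta'(\mu)=c''(\mu)-\alpha H(2P'(\mu)+\mu P''(\mu))-B P''(\mu)$, bound the non-$c''$ terms in absolute value by the suprema appearing in Assumption~A1, and conclude $\Delta'(\mu)>0$ from $\kappa_{\min}$ strictly exceeding that bound. The paper's version is terser, but your added commentary on why absolute values are needed (the sign change of $P''$ near the mode) is a helpful gloss rather than a different argument.
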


\noindent\emph{Proof.} Continuity follows from the continuity of all terms. Differentiating,
\[
\Delta'(\mu)\ =\ c''(\mu)\ -\ \alpha H\Big(2\,P'(\mu)+\mu\,P''(\mu)\Big)\ -\ B\,P''(\mu).
\]
By Assumption A1, $c''(\mu)\ge \kappa_{\min}$ and the remaining terms are bounded in absolute value by the right-hand side of \eqref{eq:single_crossing_condition}, which $\kappa_{\min}$ strictly exceeds. Hence $\Delta'(\mu)>0$ for all $\mu$. \qed

\begin{proof}[Proof of Proposition~\ref{prop:exist_unique}]
By Lemma~\ref{lem:foc_derivation}, any interior maximizer satisfies $\Delta(\mu)=0$. By Lemma~\ref{lem:gap_monotone}, $\Delta$ is strictly increasing and continuous on $[\underline{\mu},\overline{\mu}]$. If $\Delta(\underline{\mu})>0$, then $\Pi_C'(\mu)<0$ for all $\mu\in[\underline{\mu},\overline{\mu}]$ and the unique maximizer is $\mu^\star=\underline{\mu}$. If $\Delta(\overline{\mu})<0$, then $\Pi_C'(\mu)>0$ for all $\mu$ and the unique maximizer is $\mu^\star=\overline{\mu}$. Otherwise, the intermediate value theorem guarantees a unique $\mu^\star\in(\underline{\mu},\overline{\mu})$ with $\Delta(\mu^\star)=0$. Strict concavity from Lemma~\ref{lem:foc_derivation} implies this point is the unique global maximum. \qed
\end{proof}

\noindent\textbf{Discussion.} Proposition~\ref{prop:exist_unique} formalizes the intuition that the entrant’s best response is well-behaved under weak curvature conditions. Two features bear emphasis. First, the sufficient condition in Assumption A1 is conservative: it requires a uniform lower bound on $c''$ that dominates the supremum of selection-induced curvature; in many applications, a localized version (evaluated near the equilibrium) is enough. Second, the proof highlights precisely where properties of $P$ matter: boundedness and continuity of $P'$ and $P''$ ensure that the benefit curvature cannot explode, while the positive sign and unimodality of $P'$ deliver the familiar S-shape of the benefit schedule that, when combined with convex costs, yields a single crossing.

\subsection*{A.4 Monotone comparative statics (Theorem~\ref{thm:mcs})}

We prove that the unique best response $\mu^\star$ is weakly increasing in each policy parameter $(q,H,B,\alpha)$ under Assumption~A1. Define the private first-order condition as an equation in the choice variable and parameters:
\[
G(\mu;\,q,H,B,\alpha)\ \equiv\ \alpha\big[q+H P(\mu)\big]\ +\ \alpha\,\mu\,H\,P'(\mu)\ +\ B\,P'(\mu)\ -\ c'(\mu)\ =\ 0.
\]
By Lemma~\ref{lem:foc_derivation}, any interior maximizer satisfies $G(\mu^\star;\cdot)=0$. By Lemmas~\ref{lem:Pprime_identity}-\ref{lem:bounds} and Assumption~A1, $G$ is continuously differentiable and its derivative with respect to $\mu$ is
\[
\frac{\partial G}{\partial \mu}(\mu;\cdot)\ =\ \alpha H\Big(2\,P'(\mu)+\mu\,P''(\mu)\Big)\ +\ B\,P''(\mu)\ -\ c''(\mu)\ <\ 0\quad\text{for all }\mu\in(\underline{\mu},\overline{\mu}).
\]
Hence the \emph{implicit function theorem} applies: in any neighborhood where $\mu^\star$ is interior there exists a unique, continuously differentiable function $\mu^\star(q,H,B,\alpha)$ solving $G(\mu^\star;\cdot)=0$, with partial derivatives
\[
\frac{\partial \mu^\star}{\partial \theta}\ =\ -\,\frac{\displaystyle \frac{\partial G}{\partial \theta}}{\displaystyle \frac{\partial G}{\partial \mu}}\Bigg|_{\mu=\mu^\star}\qquad \text{for }\ \theta\in\{q,H,B,\alpha\}.
\]
We compute signs term by term. For all $\mu\in(\underline{\mu},\overline{\mu})$,
\[
\frac{\partial G}{\partial q}=\alpha>0,\qquad 
\frac{\partial G}{\partial H}=\alpha\Big(P(\mu)+\mu\,P'(\mu)\Big)>0,\qquad 
\frac{\partial G}{\partial B}=P'(\mu)>0,\qquad 
\frac{\partial G}{\partial \alpha}=q+H P(\mu)+\mu H P'(\mu)>0,
\]
using $P(\mu)\in(0,1)$ and $P'(\mu)>0$ (Lemma~\ref{lem:Pprime_identity}). Since $\partial G/\partial \mu<0$, each fraction is nonnegative:
\[
\frac{\partial \mu^\star}{\partial q}\ \ge\ 0,\qquad 
\frac{\partial \mu^\star}{\partial H}\ \ge\ 0,\qquad 
\frac{\partial \mu^\star}{\partial B}\ \ge\ 0,\qquad
\frac{\partial \mu^\star}{\partial \alpha}\ \ge\ 0.
\]
This establishes the interior case in Theorem~\ref{thm:mcs}.

To cover boundary solutions, recall the “gap” function $\Delta(\mu)=c'(\mu)-[\alpha(q+H P(\mu))+\alpha \mu H P'(\mu)+B P'(\mu)]$. Under Assumption~A1, $\Delta(\cdot)$ is continuous and strictly increasing (Lemma~\ref{lem:gap_monotone}). Let $\theta$ denote any one of $(q,H,B,\alpha)$ and write $\Delta(\mu\,;\theta)$ to emphasize dependence. For fixed $\mu$, $\Delta(\mu\,;\theta)$ is weakly \emph{decreasing} in $\theta$ because it subtracts a term that is increasing in $\theta$. Therefore, as $\theta$ increases, the unique zero of $\Delta(\cdot\,;\theta)$ (if interior) moves weakly to the right; if no zero exists (i.e., $\Delta(\underline{\mu};\theta)\ge 0$ or $\Delta(\overline{\mu};\theta)\le 0$), the maximizer is at a boundary and the same monotone movement in the sign pattern implies the selected boundary point is weakly increasing in $\theta$ as well. In either case, the best-response correspondence is a singleton and is weakly increasing in each parameter. This completes the proof of Theorem~\ref{thm:mcs}.

Intuitively, the result follows the economic decomposition in the main text. Increasing $q$ adds certain exposure in testing; increasing $H$ raises both the expected continuation and the sensitivity of graduation to quality; increasing $B$ concentrates monetary incentives where selection is most diagnostic; and increasing $\alpha$ lets creators internalize a larger share of engagement. Because costs are convex and the selection frontier is well behaved, these shifts raise the marginal private return at every $\mu$, so the unique crossing with marginal cost occurs at a (weakly) higher quality.

\subsection*{A.5 Front-loading optimality (Theorem~\ref{thm:frontload})}

We formalize the timing result when testing is scheduled as raw impressions across calendar slots. Fix an undiscounted count $Q\in\mathbb{N}$ of guaranteed testing impressions and an increasing schedule $\tau=(t_1,\ldots,t_Q)$ of delivery times. The \emph{discounted testing mass} induced by $\tau$ is
\[
q(\tau)\ =\ \sum_{j=1}^{Q}\gamma^{\,t_j-1},\qquad \gamma\in(0,1).
\]
The graduation statistic is the success count $S\sim \mathrm{Binomial}(Q,\mu)$ and the pass event is $\{S\ge s\}$ with $s=\lceil Q\bar{\mu}\rceil$. Hence
\[
P_Q(\mu)\ =\ \Pr[S\ge s],\qquad P'_Q(\mu)\ =\ \frac{\mu^{\,s-1}(1-\mu)^{\,Q-s}}{B(s,Q-s+1)},
\]
which depend on $Q$ and $s$ but \emph{not} on the schedule $\tau$ because outcomes are i.i.d.\ conditional on $\mu$. The private first-order condition under schedule $\tau$ is
\begin{equation}\label{eq:foc_tau}
\alpha\Big[q(\tau)+H P_Q(\mu^\star)\Big]\ +\ \alpha\,\mu^\star\,H\,P'_Q(\mu^\star)\ +\ B\,P'_Q(\mu^\star)\ =\ c'(\mu^\star).
\end{equation}

Let $\tau^\uparrow$ denote the earliest feasible schedule that assigns the $Q$ impressions to the first $Q$ eligible slots, i.e., $t_j=j$ absent capacity constraints. We first show that $\tau^\uparrow$ maximizes the discounted testing mass.

\begin{lemma}[Earliest pacing maximizes discounted mass]\label{lem:majorization}
For any schedule $\tau$ with $Q$ impressions and discount factor $\gamma\in(0,1)$,
\[
q(\tau^\uparrow)\ \ge\ q(\tau),
\]
with equality if and only if $\tau=\tau^\uparrow$ up to permutation of equal times.
\end{lemma}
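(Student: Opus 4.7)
The plan is to reduce the claim to a coordinatewise comparison of slot indices and then exploit the strict monotonicity of $x \mapsto \gamma^{x-1}$ for $\gamma \in (0,1)$. First, I would observe that any admissible schedule $\tau = (t_1, \ldots, t_Q)$ with $t_j \in \mathbb{N}$ and $t_1 < t_2 < \cdots < t_Q$ satisfies $t_j \geq j$ for every $j \in \{1, \ldots, Q\}$. This is immediate by induction on $j$: $t_1 \geq 1$, and if $t_j \geq j$ then $t_{j+1} \geq t_j + 1 \geq j+1$. By construction, the earliest schedule $\tau^\uparrow$ saturates this bound with $t_j = j$ in the unconstrained case.

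Second, since $\gamma \in (0,1)$, the map $x \mapsto \gamma^{x-1}$ is strictly decreasing. Combined with the coordinatewise bound $t_j \geq j$, this yields $\gamma^{t_j-1} \leq \gamma^{j-1}$ for every $j$, with equality iff $t_j = j$. Summing over $j$ gives
\[
q(\tau) \;=\; \sum_{j=1}^{Q} \gamma^{t_j-1} \;\leq\; \sum_{j=1}^{Q} \gamma^{j-1} \;=\; q(\tau^\uparrow),
\]
which is the desired inequality. The equality characterization follows because a strict inequality $t_j > j$ at any single index would produce a strict inequality in the corresponding summand; hence $q(\tau) = q(\tau^\uparrow)$ forces $t_j = j$ componentwise, i.e., $\tau = \tau^\uparrow$.

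For the capacity-constrained variant flagged in the footnote to Theorem~\ref{thm:frontload}, I would let $e_1 < e_2 < \cdots$ enumerate the eligible slots and define $\tau^\uparrow = (e_1, \ldots, e_Q)$. The same induction (applied to the enumeration of eligible slots) yields $t_j \geq e_j$ for any feasible $\tau$, and the same strict-monotonicity argument delivers $q(\tau) \leq q(\tau^\uparrow)$ with the analogous equality clause. The ``up to permutation of equal times'' proviso covers only the degenerate case in which multiple eligible slots are assigned the same discount weight (e.g., under a coarser time grid), which leaves $q(\cdot)$ invariant.

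Because the lemma rests entirely on strict monotonicity of exponential decay and a one-line induction on the indices of a strictly increasing integer sequence, I do not anticipate a substantive obstacle. The only nontrivial bookkeeping is stating the coordinatewise domination $t_j \geq j$ cleanly before invoking monotonicity; everything else is a direct sum comparison. Notably, the argument uses no property of $(P_Q, P'_Q)$ or of $(B, H, \alpha, c)$, which is precisely why Theorem~\ref{thm:frontload} can piggyback on this lemma: once $q(\tau^\uparrow) \geq q(\tau)$ is in hand, the monotone comparative-statics argument from Section~\ref{subsec:comparative_statics} immediately lifts the ordering of discounted masses to an ordering of equilibrium qualities and welfares.
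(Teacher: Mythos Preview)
Your proof is correct and takes essentially the same approach as the paper: both rest on the strict monotonicity of $t\mapsto\gamma^{t-1}$ and the observation that the earliest $Q$ slots carry the largest discount weights. Your coordinatewise induction $t_j\ge j$ is slightly more explicit than the paper's one-line appeal to the rearrangement inequality, but the underlying argument is identical.
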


\noindent\emph{Proof.} The sequence $(\gamma^{t-1})_{t\ge 1}$ is strictly decreasing in $t$. By the rearrangement inequality, the sum of $Q$ terms chosen from a strictly decreasing sequence is maximized by taking the $Q$ largest terms, i.e., $t_1=1,\ldots,t_Q=Q$. Equivalently, any delay that replaces some $\gamma^{k-1}$ with $\gamma^{k}$ strictly reduces the sum. \qed

We now prove Theorem~\ref{thm:frontload}. Fix $Q$, $\bar{\mu}$ (hence $s$), and $(H,B,\alpha)$. Consider two schedules $\tau$ and $\tilde{\tau}$ with the same $Q$. Because $P_Q$ and $P'_Q$ do not depend on timing, the only difference between \eqref{eq:foc_tau} under the two schedules is the term $\alpha q(\cdot)$. By Lemma~\ref{lem:majorization}, $q(\tau^\uparrow)\ge q(\tau)$ for all $\tau$. Define the corresponding gap functions
\[
\Delta_{\tau}(\mu)\ \equiv\ c'(\mu)\ -\ \Big(\alpha\big[q(\tau)+H P_Q(\mu)\big]\ +\ \alpha\,\mu\,H\,P'_Q(\mu)\ +\ B\,P'_Q(\mu)\Big).
\]
As in Lemma~\ref{lem:gap_monotone}, each $\Delta_{\tau}$ is continuous and strictly increasing in $\mu$ under Assumption~A1 (with $(P,P')$ replaced by $(P_Q,P'_Q)$). Moreover, for all $\mu$,
\[
\Delta_{\tau^\uparrow}(\mu)\ \le\ \Delta_{\tau}(\mu),
\]
with strict inequality when $q(\tau^\uparrow)>q(\tau)$. Therefore, the unique zero of $\Delta_{\tau^\uparrow}(\cdot)$, if interior, occurs at a weakly larger $\mu$ than the unique zero of $\Delta_{\tau}(\cdot)$; if the solution is at a boundary, the same monotone shift implies the boundary solution under $\tau^\uparrow$ is weakly larger. Hence $\mu^\star(\tau^\uparrow)\ge \mu^\star(\tau)$ for all $\tau$ with the same $Q$, establishing the incentive part of Theorem~\ref{thm:frontload}.

To compare platform objectives, note that for any schedule $\tau$ and corresponding equilibrium $\mu^\star(\tau)$,
\[
W\big(\mu^\star(\tau);\tau,H,B\big)\ =\ \mu^\star(\tau)\,\Big(q(\tau)+H P_Q\big(\mu^\star(\tau)\big)\Big)\ -\ B\,P_Q\big(\mu^\star(\tau)\big).
\]
Because $\mu^\star(\tau^\uparrow)\ge \mu^\star(\tau)$ and $q(\tau^\uparrow)\ge q(\tau)$ while $P_Q(\cdot)$ is increasing, it follows that
\[
W\big(\mu^\star(\tau^\uparrow);\tau^\uparrow,H,B\big)\ \ge\ W\big(\mu^\star(\tau);\tau,H,B\big),
\]
with strict inequality unless both $q(\tau)$ and $\mu^\star(\tau)$ coincide with their front-loaded counterparts. This proves Theorem~\ref{thm:frontload}.

\noindent\textbf{Remarks.} (i) If per-period feasibility caps allow at most one guaranteed impression per period, define $\tau^\uparrow$ as “as early as allowed by the cap.” Lemma~\ref{lem:majorization} carries through and the dominance result is unchanged. (ii) If the continuation engine is of the generalized form in \eqref{eq:exposure_twoarm}, replace $q(\tau)$ with $q(\tau)+H_0$ and $H$ with $\Delta H$; the proof is identical because the selection terms $P_Q$ and $P'_Q$ are invariant to timing. (iii) If discounting varies deterministically over time, replacing $\gamma^{t-1}$ by a decreasing weight sequence $\{w_t\}_{t\ge 1}$ yields the same conclusion by rearrangement: front-loading maximizes $\sum_{j} w_{t_j}$ for fixed $Q$ and leaves $P_Q$, $P'_Q$ unchanged.
\subsection*{A.6 First-best implementability (Theorem~\ref{thm:fb_implement})}

We provide a complete proof of Theorem~\ref{thm:fb_implement} in the block-continuation specification (testing window $q$, single continuation scale $H$). Recall the planner’s benchmark chooses $\mu$ to maximize $\mu\,[q+H P(\mu)]-c(\mu)$ and hence satisfies
\begin{equation}\label{eq:planner_foc_rep}
q+H P(\mu^{FB})\;+\;\mu^{FB} H P'(\mu^{FB})\;=\;c'(\mu^{FB}),
\end{equation}
cf. \eqref{eq:planner_foc_app}. The entrant’s private FOC under policy $(q,\bar{\mu},B,H,\alpha)$ is
\begin{equation}\label{eq:private_foc_rep}
\alpha\big[q+H P(\mu^\star)\big]\;+\;\alpha\,\mu^\star\,H\,P'(\mu^\star)\;+\;B\,P'(\mu^\star)\;=\;c'(\mu^\star),
\end{equation}
cf. \eqref{eq:foc_app}. Define
\begin{equation}\label{eq:Bstar_rep}
B^\star \;\equiv\; \frac{\big[q+H P(\mu^{FB})+\mu^{FB} H P'(\mu^{FB})\big]\,(1-\alpha)}{P'(\mu^{FB})}\ \ \ge 0.
\end{equation}

\begin{proof}[Proof of Theorem~\ref{thm:fb_implement}]
Evaluate \eqref{eq:private_foc_rep} at $\mu^\star=\mu^{FB}$ and substitute \eqref{eq:Bstar_rep}:
\[
\alpha\!\big[q+H P(\mu^{FB})\big]\;+\;\alpha\,\mu^{FB}\! H P'(\mu^{FB})\;+\;(1-\alpha)\!\big[q+H P(\mu^{FB})+\mu^{FB}\! H P'(\mu^{FB})\big].
\]
Collecting terms yields $q+H P(\mu^{FB})+\mu^{FB}H P'(\mu^{FB})$, which equals $c'(\mu^{FB})$ by \eqref{eq:planner_foc_rep}. Hence $\mu^{FB}$ satisfies the private FOC under $B^\star$. By Lemma~\ref{lem:foc_derivation} and Assumption~A1, the private objective is strictly concave, so the stationary point is unique: $\mu^\star=\mu^{FB}$. Nonnegativity of $B^\star$ follows from $P'(\mu^{FB})>0$ (the informative case) and $\alpha\in(0,1]$. \qed
\end{proof}

\paragraph{Comparative statics of $B^\star$.}
The expression \eqref{eq:Bstar_rep} is transparent: $B^\star$ shrinks with the revenue share and expands with the planner’s marginal value at the target. Differentiating \eqref{eq:Bstar_rep} while holding $(q,\bar{\mu},H)$ fixed and viewing $\mu^{FB}$ as the solution to \eqref{eq:planner_foc_rep}:
\[
\frac{\partial B^\star}{\partial \alpha}\ =\ -\,\frac{q+H P(\mu^{FB})+\mu^{FB}H P'(\mu^{FB})}{P'(\mu^{FB})}\ \le\ 0,
\]
with equality only at the degenerate point where the bracket vanishes (which cannot happen at an interior first-best). Thus, as monetization approaches full revenue sharing ($\alpha\uparrow 1$), the bounty becomes unnecessary. The dependence on $(q,H,\bar{\mu})$ works through both the direct bracket and $\mu^{FB}$; locally, increases in $q$ or $H$ raise the bracket and hence raise $B^\star$ unless offset by movements in $\mu^{FB}$ that increase $P'(\mu^{FB})$ sufficiently. The key design implication remains: if the bar is tuned so that $P'(\mu^{FB})$ is large, $B^\star$ is small because each dollar is spent at a steep margin.

\paragraph{Expected payout at the first best.}
Because the bounty is paid only upon graduation, expected spend equals $B^\star P(\mu^{FB})$. Substituting \eqref{eq:Bstar_rep} gives
\begin{equation}\label{eq:expected_payout_rep}
\mathbb{E}\big[\text{payout}\,\big|\mu^{FB}\big]\ =\ \frac{\big[q+H P(\mu^{FB})+\mu^{FB} H P'(\mu^{FB})\big]\,(1-\alpha)\,P(\mu^{FB})}{P'(\mu^{FB})}.
\end{equation}
Expression \eqref{eq:expected_payout_rep} makes precise the budget logic in the main text: for a given normative target, spend falls when the leverage ratio $P'(\mu^{FB})/P(\mu^{FB})$ rises, i.e., when the pass event is more diagnostic relative to its level.

\paragraph{Edge and boundary cases.}
If $P'(\mu^{FB})=0$ (e.g., the bar sits at an extreme so the pass probability is locally flat), then any finite bounty cannot align incentives at $\mu^{FB}$. In such cases, either the target itself is pathological (the first best lies at a corner where selection is uninformative) or the policy should move the bar to restore diagnosticity. If $P(\mu^{FB})\in\{0,1\}$ the expected payout is zero, but so is its incentive potency; again, the remedy is to reposition the bar into the informative region where $P\in(0,1)$ and $P'>0$.

\subsection*{A.7 Targeting efficiency of hit-based bounties}

We formalize the local cost comparison between a marginal increase in a hit-based bounty and a marginal increase in a uniform testing subsidy that pays per successful testing outcome. Fix a baseline policy $(q,\bar{\mu},H,\alpha)$ and an interior equilibrium $\mu^\star$. Consider two infinitesimal perturbations that increase the left-hand side of \eqref{eq:private_foc_rep} by the same amount $\varepsilon>0$ at $\mu^\star$.

\paragraph{Hit-based increment.}
An increase $\Delta B$ changes the marginal condition by $(\Delta B)P'(\mu^\star)$; to achieve $\varepsilon$, set $\Delta B=\varepsilon/P'(\mu^\star)$. The induced expected spend (to first order) is
\begin{equation}\label{eq:cost_hit}
\text{Cost}_{\text{hit}}\ =\ (\Delta B)\,P(\mu^\star)\ =\ \varepsilon\,\frac{P(\mu^\star)}{P'(\mu^\star)}.
\end{equation}

\paragraph{Uniform testing subsidy.}
Let $\sigma$ be a per-success payment during testing only. Expected testing-stage payments equal $\sigma\,\mathbb{E}[S]=\sigma\,\mu q$, and the marginal condition shifts by $\sigma q$ (a constant in $\mu$). To achieve $\varepsilon$, set $\sigma=\varepsilon/q$. The expected spend (to first order) is
\begin{equation}\label{eq:cost_flat_rep}
\text{Cost}_{\text{flat}}\ =\ \sigma\,\mu^\star q\ =\ \varepsilon\,\mu^\star.
\end{equation}

\begin{proposition}[Local dominance condition]\label{prop:dominance}
At a given interior equilibrium $\mu^\star$ with $P(\mu^\star)\in(0,1)$ and $P'(\mu^\star)>0$, a hit-based increment is locally more cost-effective than a uniform testing subsidy if and only if
\begin{equation}\label{eq:dominance_cond_rep}
\frac{P'(\mu^\star)}{P(\mu^\star)}\ \ge\ \frac{1}{\mu^\star}.
\end{equation}
\end{proposition}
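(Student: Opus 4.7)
The plan is to prove Proposition~\ref{prop:dominance} by a direct comparison of the two local expenditure expressions \eqref{eq:cost_hit} and \eqref{eq:cost_flat_rep} already derived in the setup. Because the two perturbations are calibrated to produce the same shift $\varepsilon$ in the creator's marginal condition at the unperturbed equilibrium $\mu^\star$, the question reduces to a one-line algebraic inequality between their leading-order expected spends.

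First I would verify that the cost expressions are the correct leading-order spends. For the hit-based increment, raising $B$ by $\Delta B$ adds $(\Delta B)P'(\mu)$ to the left-hand side of \eqref{eq:private_foc_rep}; by Lemma~\ref{lem:foc_derivation} and the implicit function theorem, the induced change in $\mu^\star$ is $O(\Delta B)$. The incremental expected transfer is therefore $(\Delta B)P(\mu^\star)+O((\Delta B)^2)$, and setting $\Delta B=\varepsilon/P'(\mu^\star)$ recovers \eqref{eq:cost_hit}. For the flat testing subsidy, a per-success payment $\sigma$ during testing adds the constant $\sigma q$ to the FOC and generates expected testing-stage transfers $\sigma\,\mu^\star q+O(\sigma^2)$ at the perturbed equilibrium; setting $\sigma=\varepsilon/q$ recovers \eqref{eq:cost_flat_rep}.

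Second I would execute the comparison: $\text{Cost}_{\text{hit}}\le \text{Cost}_{\text{flat}}$ if and only if $\varepsilon\,P(\mu^\star)/P'(\mu^\star)\le \varepsilon\,\mu^\star$, and since $\varepsilon,P(\mu^\star),\mu^\star>0$ this rearranges cleanly to $P'(\mu^\star)/P(\mu^\star)\ge 1/\mu^\star$, which is \eqref{eq:dominance_cond_rep}. Strict dominance corresponds to strict inequality; equality identifies the knife-edge case in which the two instruments are locally indistinguishable per expected dollar. The iff direction follows from the same chain of equivalences read in reverse.

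The main obstacle is not the algebra but the bookkeeping for the envelope step: one must confirm that the second-order response of $\mu^\star$ to the perturbation does not contaminate the first-order spend comparison, which is precisely why the proposition is stated locally and why the restriction $P(\mu^\star)\in(0,1)$ and $P'(\mu^\star)>0$ appears in the hypotheses. I would close by noting the operational reading of \eqref{eq:dominance_cond_rep}: the hit-based instrument dominates whenever the leverage ratio $\Lambda(\mu^\star)=P'(\mu^\star)/P(\mu^\star)$ exceeds the inverse of baseline quality, while a collapsing $\Lambda(\mu^\star)$ (bar placed where the tail is flat) flips the inequality and signals that retuning $\bar{\mu}$, rather than substituting instruments, is the correct first move.
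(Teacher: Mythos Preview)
Your approach is essentially identical to the paper's: compare \eqref{eq:cost_hit} and \eqref{eq:cost_flat_rep} directly and simplify the inequality $\varepsilon\,P(\mu^\star)/P'(\mu^\star)\le \varepsilon\,\mu^\star$ to obtain \eqref{eq:dominance_cond_rep}. The paper's proof is a single line that takes the cost expressions as given from the preceding setup; your additional envelope verification and interpretive remarks are correct but go beyond what is needed.
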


\begin{proof}
Compare \eqref{eq:cost_hit} and \eqref{eq:cost_flat_rep}. The hit-based perturbation is cheaper iff $\varepsilon\,P(\mu^\star)/P'(\mu^\star)\le \varepsilon\,\mu^\star$, which simplifies to \eqref{eq:dominance_cond_rep}.
\end{proof}

\paragraph{Interpretation and scope.}
Condition \eqref{eq:dominance_cond_rep} states that targeting wins locally when the pass event is sufficiently diagnostic relative to its level, precisely when the leverage ratio $P'(\mu^\star)/P(\mu^\star)$ is high. This occurs when the bar is placed so that pass rates are neither vanishingly small nor trivially large and the Beta density $P'$ is near its peak. If \eqref{eq:dominance_cond_rep} fails (e.g., because the bar is far from the mass of entrants), the right remedy is to retune the bar; otherwise both instruments buy weak incentives per dollar. Two caveats: (i) a per-\emph{impression} subsidy during testing has zero marginal effect on the first-order condition and therefore no incentive value; it should not be compared to hit-based transfers. (ii) A per-success subsidy that also pays in continuation mimics a temporary increase in the revenue share; it can substitute for $B$ but dilutes targeting by paying in low-information states.

\subsection*{A.8 General exploitation engines and multi-winner cohorts}

We extend the implementability and targeting results to the generalized continuation landscape in which “pass” and “fail” lead to distinct expected discounted pull counts $H_1$ and $H_0$, with spread $\Delta H=H_1-H_0\ge 0$ (Section~\ref{subsec:general_engines}). The exposure aggregator is
\[
q+H_0+\Delta H\,P(\mu),
\]
and the private FOC becomes
\begin{equation}\label{eq:foc_general_rep}
\alpha\big[q+H_0+\Delta H\,P(\mu^\star)\big]\;+\;\alpha\,\mu^\star\,\Delta H\,P'(\mu^\star)\;+\;B\,P'(\mu^\star)\;=\;c'(\mu^\star).
\end{equation}
The planner’s FOC replaces $H$ by $\Delta H$ and adds $H_0$ to the level term:
\begin{equation}\label{eq:planner_general_rep}
q+H_0+\Delta H\,P(\mu^{FB})\;+\;\mu^{FB}\,\Delta H\,P'(\mu^{FB})\;=\;c'(\mu^{FB}).
\end{equation}

\begin{proposition}[Implementability with generalized continuation]\label{prop:implement_general}
Under Assumption~A1, the bounty
\begin{equation}\label{eq:Bstar_general_rep}
B^\star\ =\ \frac{\big[q+H_0+\Delta H\,P(\mu^{FB})+\mu^{FB}\,\Delta H\,P'(\mu^{FB})\big]\,(1-\alpha)}{P'(\mu^{FB})}
\end{equation}
implements $\mu^\star=\mu^{FB}$.
\end{proposition}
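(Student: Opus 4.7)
The plan is to replicate the proof of Theorem~\ref{thm:fb_implement} almost verbatim, exploiting the observation that the generalized private FOC~\eqref{eq:foc_general_rep} and planner FOC~\eqref{eq:planner_general_rep} have exactly the same algebraic shape as their block counterparts, with the role of $H$ played by the prize spread $\Delta H$ and with $H_0$ entering additively alongside $q$ in the level term. The candidate bounty~\eqref{eq:Bstar_general_rep} is then constructed by the same "close the $(1-\alpha)$ wedge" logic as~\eqref{eq:Bstar_rep}, so verification reduces to a short substitution.

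First I would plug $B^\star$ from~\eqref{eq:Bstar_general_rep} into the left-hand side of~\eqref{eq:foc_general_rep} at $\mu=\mu^{FB}$, which yields
\[
\alpha\bigl[q+H_0+\Delta H\,P(\mu^{FB})\bigr]+\alpha\,\mu^{FB}\,\Delta H\,P'(\mu^{FB})+(1-\alpha)\bigl[q+H_0+\Delta H\,P(\mu^{FB})+\mu^{FB}\,\Delta H\,P'(\mu^{FB})\bigr].
\]
The $\alpha$ and $(1-\alpha)$ coefficients on the bracketed planner marginal value sum to one, so this expression collapses to $q+H_0+\Delta H\,P(\mu^{FB})+\mu^{FB}\,\Delta H\,P'(\mu^{FB})$, which equals $c'(\mu^{FB})$ by~\eqref{eq:planner_general_rep}. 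Hence $\mu^{FB}$ satisfies the private FOC under the policy that deploys $B^\star$.

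Next I would establish uniqueness. Adapting Assumption~A1 to the generalized engine requires only using $\Delta H$ in place of $H$ in the curvature bound~\eqref{eq:single_crossing_condition}; since $H_0$ enters the entrant's payoff solely through the linear-in-$\mu$ term $\alpha\mu H_0$, it contributes nothing to the second derivative of $\Pi_C$. Under the adapted bound, Lemma~\ref{lem:foc_derivation} applies verbatim: the private objective is strictly concave, so the stationary point at $\mu^{FB}$ is the unique global maximizer, giving $\mu^\star=\mu^{FB}$. Non-negativity of $B^\star$ follows because $\alpha\in(0,1]$ makes the $(1-\alpha)$ factor nonnegative while $q,H_0,\Delta H,P(\mu^{FB}),P'(\mu^{FB}),\mu^{FB}$ are all nonnegative with $P'(\mu^{FB})>0$ in the informative case.

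The main \emph{potential} obstacle is not algebraic but interpretive: one must confirm that the curvature bound carries over when continuation is split into $(H_0,\Delta H)$. This is routine because $\Delta H\le H_1$, so any uniform bound on benefit curvature that held with block value $H=H_1$ is automatically valid when $H$ is replaced by $\Delta H$. The only genuine edge case is $P'(\mu^{FB})=0$, where $B^\star$ is undefined and no finite bounty implements the first best; as in Theorem~\ref{thm:fb_implement}, this is a signal that the bar has been placed in a region of vanishing diagnosticity and should be retuned, rather than a failure of the implementability construction.
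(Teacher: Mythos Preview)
Your proposal is correct and follows essentially the same approach as the paper: substitute $B^\star$ into the private FOC~\eqref{eq:foc_general_rep} at $\mu^{FB}$, collapse the $\alpha$ and $(1-\alpha)$ terms to recover the planner FOC~\eqref{eq:planner_general_rep}, and invoke strict concavity for uniqueness. Your additional remarks on how Assumption~A1 adapts (replacing $H$ by $\Delta H$ in the curvature bound, noting that $H_0$ contributes nothing to $\Pi_C''$) and on the $P'(\mu^{FB})=0$ edge case are accurate and go slightly beyond the paper's terse one-line proof, but they do not change the underlying argument.
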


\begin{proof}
The proof is identical to Theorem~\ref{thm:fb_implement}: substitute \eqref{eq:Bstar_general_rep} into \eqref{eq:foc_general_rep} at $\mu^\star=\mu^{FB}$ to recover \eqref{eq:planner_general_rep}. Strict concavity yields uniqueness.
\end{proof}

\paragraph{Targeting and budgets under generalized continuation.}
Expected spend at the first best equals $B^\star P(\mu^{FB})$ with $B^\star$ from \eqref{eq:Bstar_general_rep}. The local cost comparison in Proposition~\ref{prop:dominance} is unchanged because it depends only on how the perturbations enter the private FOC at $\mu^\star$; the spread $\Delta H$ scales the slope terms in the same way for both instruments. Intuitively, when engines are “softer” (large $H_0$ and small $\Delta H$), the certain component of return is larger but the decisive power of graduation is smaller; incentives then rely more on $q+H_0$ and on positioning the bar so that $P'$ is steep. When engines are “sharper” (large $\Delta H$), the slope term is strong and hit-based targeting is especially cash-efficient.

\paragraph{Multi-winner cohorts.}
Suppose $K$ seats are intended to be filled after testing from a large pool of entrants. A practical mechanism is to graduate any entrant whose success count $S$ exceeds an integer $s_K$ chosen so that, in expectation, $K$ seats are filled (equivalently, a percentile rule on $S$). For a focal entrant with quality $\mu$, the pass probability and slope are
\[
P^{(K)}(\mu)\ =\ \sum_{k=s_K}^{q}\binom{q}{k}\mu^k(1-\mu)^{q-k},\qquad
{P^{(K)}}'(\mu)\ =\ \frac{\mu^{\,s_K-1}(1-\mu)^{\,q-s_K}}{B(s_K,q-s_K+1)}.
\]
All results above carry through with $(P,P')$ replaced by $(P^{(K)},{P^{(K)}}')$ and with $H$ or $(H_0,\Delta H)$ interpreted \emph{per seat}. In particular, implementability uses the same construction with ${P^{(K)}}'$ in the denominator, and the targeting advantage of hit-based payments is governed by the leverage ratio ${P^{(K)}}'(\mu^\star)/P^{(K)}(\mu^\star)$. The intuition is immediate: a larger cohort size $K$ lowers the implied threshold $s_K$, raises pass rates, and flattens the slope around the margin; a smaller $K$ does the opposite. Choosing $K$ and $s_K$ jointly to keep the pass event diagnostic is therefore part of the same design problem.

 Appendix~\ref{app:proofs} has established (i) existence and uniqueness of the best response under weak curvature, (ii) monotone comparative statics in all levers, (iii) the timing dominance of front-loading for fixed testing counts, and (iv) first-best implementability with a closed-form hit-based bounty, including generalized continuation and multi-winner cohorts. The proofs hinge on two analytic facts that are easy to monitor in practice: the binomial tail $P(\cdot)$ and its slope $P'(\cdot)$. Designing policies that keep $P$ in the interior and $P'$ large around the observed equilibrium is both the mathematical sufficient condition for clean comparative statics and the operational recipe for strong, cost-effective incentives.

\section{Alternative Signal Structures and Robustness}\label{app:signals}

This appendix generalizes the outcome process that underlies the testing stage and shows that the main results depend only on two primitives: the \emph{pass probability} $P(\mu)$ and its \emph{slope} with respect to true quality, $P'(\mu)$. We move beyond the Bernoulli-Binomial baseline of the main text in two directions that occur frequently on platforms. First, we allow the per-trial success probability to be a smooth, increasing transformation of quality (capturing saturation, nonlinearities, or calibration effects). Second, we allow success probabilities to vary across the $q$ testing impressions (capturing heterogeneity across slots, contexts, or audiences), which yields a Poisson-Binomial count. In each case we derive $P'(\mu)$ in closed form, restate the private and planner optimality conditions, and verify that existence/uniqueness, monotone comparative statics, front-loading, and implementability carry over with minimal changes. Throughout, we keep the notation and standing assumptions from Appendix~\ref{app:proofs}.

\subsection*{B.1 Binomial counts with an increasing link from quality to success}

In many applications, pre-entry “quality” is not itself a probability but a latent index that maps to a per-impression success chance through an increasing, differentiable link $\psi:[\underline{\mu},\overline{\mu}]\to(0,1)$. Think of $\psi(\mu)$ as the calibrated likelihood that a randomly matched user engages, given the creator’s true quality $\mu$ and the platform’s measurement stack. During testing, the entrant receives $q$ independent draws with per-trial success probability $p=\psi(\mu)$; let $S\sim\mathrm{Binomial}(q,p)$ be the number of observed successes, and let $s=\lceil q\bar{\mu}\rceil$ be the integer threshold implied by the posted bar. The pass probability is the tail
\[
P(\mu)\;=\;\Pr[S\ge s\mid p=\psi(\mu)]\;=\;I_{\psi(\mu)}\!\big(s,\;q-s+1\big),
\]
where $I_{x}(a,b)$ is the regularized incomplete Beta function. Differentiating by the chain rule and using the identity from Appendix~\ref{app:proofs} (Lemma~\ref{lem:Pprime_identity}) yields a simple closed form for the slope:
\begin{equation}\label{eq:Pprime_link}
P'(\mu)\;=\;\psi'(\mu)\,\frac{\psi(\mu)^{\,s-1}\,\big(1-\psi(\mu)\big)^{\,q-s}}{B\!\left(s,\,q-s+1\right)}\;=\;\psi'(\mu)\,\cdot\,q\,\binom{q-1}{s-1}\,\psi(\mu)^{\,s-1}\big(1-\psi(\mu)\big)^{\,q-s}.
\end{equation}
Equation \eqref{eq:Pprime_link} shows that a nonlinearity between quality and measurable success probability scales the baseline slope by the factor $\psi'(\mu)$. Intuitively, $P'(\mu)$ can be decomposed into (i) how quickly measurable success chances move with true quality ($\psi'(\mu)$) and (ii) how sensitive the pass event is to changes in those measurable chances (the Beta density term). Two immediate consequences follow.

First, all results in Sections~\ref{sec:mechanisms_main}-\ref{sec:resources_bwk} hold verbatim after replacing $P'(\mu)$ by \eqref{eq:Pprime_link}. Existence and uniqueness rely on boundedness and continuity of $P'$ and $P''$; these properties are inherited from the Beta density and the smoothness of $\psi$. Monotone comparative statics use only the sign of $\partial G/\partial\mu$ in the implicit-function argument; since $\psi'(\mu)>0$ by assumption, signs are unchanged. Front-loading depends on the fact that timing does not affect the distribution of the \emph{count} $S$ given $q$; again unchanged. Implementability (Theorem~\ref{thm:fb_implement}) goes through with the same one-line formula, now interpreted with $P$ and $P'$ defined via $\psi$:
\[
B^\star\;=\;\frac{\big[q+H P(\mu^{FB})+\mu^{FB} H P'(\mu^{FB})\big]\,(1-\alpha)}{P'(\mu^{FB})}\,.
\]
Second, the leverage ratio that governs targeting, $\Lambda(\mu)\equiv P'(\mu)/P(\mu)$, factors as
\[
\Lambda(\mu)\;=\;\psi'(\mu)\,\frac{ \psi(\mu)^{\,s-1}\big(1-\psi(\mu)\big)^{\,q-s} }{ B\!\left(s,\,q-s+1\right)\,I_{\psi(\mu)}(s,q-s+1)}\,.
\]
For a fixed bar $(q,s)$, the Beta-density fraction is maximized when $\psi(\mu)$ lies near $(s-1)/(q-1)$. The link contributes the multiplicative term $\psi'(\mu)$: if calibration compresses probabilities near the center of the support (small $\psi'$), the observed pass event becomes less diagnostic for a given movement in true quality; conversely, if $\psi'$ is large at the margin, small quality improvements translate into sizable changes in measurable success and the same bar produces sharper targeting. Managerially speaking, when a surface exhibits saturation or clipping (e.g., success rates top out mechanically), the platform can either retune the bar to lie in a region where $\psi'$ is larger or invest in measurement that restores sensitivity of observable outcomes to true quality.

\subsection*{B.2 Heterogeneous slots: Poisson-Binomial counts and influence weights}

Testing impressions are rarely homogeneous: audiences, placements, and contexts vary within the $q$-slot window. A flexible way to model such heterogeneity replaces a single success probability with a vector of slot-specific probabilities
\[
\mathbf{p}(\mu)\;=\;\big(p_1(\mu),\ldots,p_q(\mu)\big),\qquad p_t(\mu)\in(0,1),\ \ p_t'(\mu)\ge 0,
\]
so that the number of successes is a Poisson-Binomial count $S=\sum_{t=1}^q Y_t$, where $Y_t\sim\mathrm{Bernoulli}(p_t(\mu))$ are independent across $t$. The pass probability is the tail of the Poisson-Binomial distribution,
\[
P(\mu)\;=\;\Pr\!\left[\sum_{t=1}^q Y_t\ \ge\ s\ \Big|\ \mathbf{p}(\mu)\right].
\]
A key identity gives the slope in terms of \emph{influence weights} that have a direct probabilistic interpretation.

\begin{lemma}[Derivative of a Poisson-Binomial tail]\label{lem:PB_derivative}
Let $S_{-t}\equiv\sum_{u\ne t} Y_u$ be the success count excluding slot $t$. Then
\begin{equation}\label{eq:PB_derivative_identity}
\frac{d}{d\mu}\,P(\mu)\;=\;\sum_{t=1}^{q} p_t'(\mu)\,\Pr\!\big[\,S_{-t}=s-1\ \big|\ \mathbf{p}(\mu)\big].
\end{equation}
\end{lemma}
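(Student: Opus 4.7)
The plan is to deduce the derivative formula in two steps: first compute the partial derivative $\partial P/\partial p_t$ by conditioning on the slot-$t$ outcome, and then apply the multivariate chain rule. Because $P$, viewed as a function of the vector $\mathbf{p}=(p_1,\ldots,p_q)$, is a polynomial of degree at most $q$ on $[0,1]^q$, joint smoothness in $\mathbf{p}$ and the interchange of differentiation with finite summation are automatic; the substantive content is locating the partial derivative in the probabilistic form claimed.

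For the partial-derivative step I would fix a slot $t\in\{1,\ldots,q\}$ and exploit the independence of $Y_t$ and $S_{-t}=\sum_{u\ne t} Y_u$. Conditioning on $Y_t$ gives
\[
\Pr[S\ge s]\;=\;p_t\,\Pr[S_{-t}\ge s-1]\;+\;(1-p_t)\,\Pr[S_{-t}\ge s]\;=\;\Pr[S_{-t}\ge s]\;+\;p_t\,\Pr[S_{-t}=s-1],
\]
where the second equality uses the telescoping identity $\Pr[S_{-t}\ge s-1]-\Pr[S_{-t}\ge s]=\Pr[S_{-t}=s-1]$. Both $\Pr[S_{-t}\ge s]$ and $\Pr[S_{-t}=s-1]$ are functions of $(p_u)_{u\ne t}$ alone and do not depend on $p_t$, so the resulting expression is \emph{affine} in $p_t$ and differentiation is immediate:
\[
\frac{\partial}{\partial p_t}\,\Pr[S\ge s]\;=\;\Pr\!\big[S_{-t}=s-1\,\big|\,\mathbf{p}\big].
\]

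The chain-rule step then assembles the claim. Since each coordinate $\mu\mapsto p_t(\mu)$ is differentiable by hypothesis and $P$ is jointly $C^\infty$ in $\mathbf{p}$, the total derivative of $\mu\mapsto P(\mu)$ is the finite sum
\[
\frac{d}{d\mu}\,P(\mu)\;=\;\sum_{t=1}^{q}\frac{\partial P}{\partial p_t}\cdot p_t'(\mu)\;=\;\sum_{t=1}^{q} p_t'(\mu)\,\Pr\!\big[S_{-t}=s-1\,\big|\,\mathbf{p}(\mu)\big],
\]
which is \eqref{eq:PB_derivative_identity}.

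There is no real technical obstacle; the one conceptual step is noticing that integrating out $Y_t$ makes the tail linear in $p_t$, which exposes the pivotal event $\{S_{-t}=s-1\}$ as the natural \emph{influence weight} of slot $t$. As a sanity check I would note that in the homogeneous case $p_t(\mu)\equiv\mu$ (so $p_t'(\mu)\equiv 1$), the right-hand side collapses to $q\cdot\Pr[\mathrm{Bin}(q-1,\mu)=s-1]$, which is exactly the Beta-density closed form in Lemma~\ref{lem:Pprime_identity}; this recovery of the i.i.d.\ case is the main cross-check that the decomposition has been set up correctly.
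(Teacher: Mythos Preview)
Your proof is correct and follows essentially the same approach as the paper: compute $\partial P/\partial p_t$ and then apply the chain rule. The paper differentiates the subset-sum representation $P=\sum_{|A|\ge s}\prod_{u\in A}p_u\prod_{u\notin A}(1-p_u)$ term-wise and telescopes the coefficient of $p_t'$, whereas your conditioning on $Y_t$ exposes the affine dependence on $p_t$ directly---a slightly cleaner route to the same partial derivative.
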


\noindent\emph{Proof.} Differentiate $P(\mu)=\sum_{A\subseteq\{1,\ldots,q\}\,:\,|A|\ge s}\ \prod_{t\in A} p_t(\mu)\ \prod_{t\notin A} (1-p_t(\mu))$ term-wise and collect the coefficient of $p_t'(\mu)$. The only terms that survive are those in which the derivative acts on $p_t$ while the remaining $s-1$ successes are provided by slots in $A\setminus\{t\}$; summing over such sets yields $\Pr[S_{-t}=s-1]$. \qed

Identity \eqref{eq:PB_derivative_identity} is intuitive: the marginal effect of increasing the success chance in slot $t$ equals the probability that the other $q-1$ slots sum to $s-1$, i.e., the event in which a success in slot $t$ flips the outcome from fail to pass. Two corollaries connect the heterogeneous-slot model back to the baseline and to design levers.

First, if all slots share the same link $p_t(\mu)\equiv\psi(\mu)$, then $\Pr[S_{-t}=s-1]=\Pr[\mathrm{Bin}(q-1,\psi(\mu))=s-1]$ and $p_t'(\mu)=\psi'(\mu)$ for all $t$. Lemma~\ref{lem:PB_derivative} reduces to
\[
P'(\mu)\;=\;q\,\psi'(\mu)\,\Pr\!\big[\mathrm{Bin}(q-1,\psi(\mu))=s-1\big],
\]
which is exactly the link-function generalization of \eqref{eq:Pprime_link} (using the combinatorial identity $q\,\Pr[\mathrm{Bin}(q-1,p)=s-1]=p^{s-1}(1-p)^{q-s}/B(s,q-s+1)$). Second, when slots differ, \eqref{eq:PB_derivative_identity} expresses $P'(\mu)$ as a \emph{weighted sum} of slot-specific sensitivities $p_t'(\mu)$ with weights $\Pr[S_{-t}=s-1]$ that depend on the rest of the window. This decomposition makes several design implications transparent.

\emph{Diagnosticity and scheduling.} The influence weights do not depend on the \emph{order} of the slots, only on the multiset of success chances in the window. Thus, as in the homogeneous case, for a fixed collection $\{p_t(\mu)\}_{t=1}^q$ the timing of guaranteed impressions does not affect the distribution of the count statistic $S$ and hence does not affect the selection terms $P(\mu)$ and $P'(\mu)$. The front-loading result from Theorem~\ref{thm:frontload} therefore remains valid: moving testing impressions earlier (increasing the discounted mass $q(\tau)$) weakly strengthens incentives without changing the diagnosticity of the pass event. If guaranteed impressions can be targeted to \emph{slots} with larger $p_t'(\mu)$ or larger influence weights, then the platform can increase $P'(\mu)$ for a given window by preferentially assigning guarantees to those contexts, an interpretation consistent with giving early opportunities in audiences where small quality improvements are most likely to flip the decision.

\emph{Comparative statics and implementability.} With $P'(\mu)$ characterized by \eqref{eq:PB_derivative_identity}, the private and planner optimality conditions retain the same structure as in \S\ref{subsec:best_response} and \S\ref{subsec:implement_fb}. Existence and uniqueness follow from the same curvature argument because $P'$ and $P''$ are bounded and continuous whenever each $p_t$ is $C^1$ with bounded derivative. The implicit-function signs are unchanged: $P(\mu)\in(0,1)$ and $P'(\mu)>0$ continue to hold, so increasing $(q,H,B,\alpha)$ rotates the private marginal-benefit schedule upward and the best response moves weakly right. The implementability bounty keeps its one-line form,
\[
B^\star\;=\;\frac{\big[q+H P(\mu^{FB})+\mu^{FB} H P'(\mu^{FB})\big]\,(1-\alpha)}{P'(\mu^{FB})},
\]
now with $P'$ computed via \eqref{eq:PB_derivative_identity}. Targeting logic remains the same: the expected spend at the first best is $B^\star P(\mu^{FB})$, which shrinks when the leverage ratio $P'(\mu^{FB})/P(\mu^{FB})$ is large. In heterogeneous windows, the leverage ratio can be increased (holding the bar fixed) by concentrating guaranteed impressions in the contexts that have high influence weights, or (holding guarantees fixed) by adjusting the bar so that typical entrants sit where the mixture of slot-specific weights is largest.

\emph{Practical estimation.} The identity \eqref{eq:PB_derivative_identity} has an applied advantage: it suggests an influence-function estimator of $P'(\mu)$ from routine telemetry. For a cohort near the current equilibrium, one can estimate $p_t'(\mu)$ by small, randomized content-quality nudges (e.g., prompts that marginally improve thumbnails or titling) and estimate $\Pr[S_{-t}=s-1]$ by leave-one-out counts within the window. Multiplying and summing across slots yields a nonparametric estimate of the slope that feeds directly into the bounty calculation and the balanced exploration dashboard. Because the same decomposition underlies the homogeneous case (with $p_t'(\mu)\equiv\psi'(\mu)$), this estimator collapses to the familiar binomial formula when slots are exchangeable.

\noindent\textbf{Summary.} Allowing a nonlinear link from quality to measurable success chances and allowing heterogeneity across slots preserves the spine of the analysis. The two objects that matter, $P(\mu)$ and $P'(\mu)$, are easily re-expressed: \eqref{eq:Pprime_link} for link-transformed binomial counts and \eqref{eq:PB_derivative_identity} for Poisson-Binomial counts. All main results continue to hold with these definitions, and the managerial guidance is unchanged: keep the pass event diagnostic around the induced equilibrium, front-load the guaranteed window to maximize its certain return, and concentrate cash where the slope is steepest.

\subsection*{B.3 Exchangeable over-dispersion via latent propensity mixing}

Empirically, early outcomes often display more variability than a Binomial model can rationalize. A convenient and widely used way to capture such \emph{over-dispersion} is to assume that, within the testing window, engagement draws are i.i.d.\ \emph{conditional} on a latent success propensity that varies across entrants or contexts. Concretely, suppose there exists a smooth, increasing map $\psi(\mu,\varepsilon)\in(0,1)$ such that conditional on $(\mu,\varepsilon)$ each testing impression succeeds with probability $p=\psi(\mu,\varepsilon)$ and the $q$ outcomes are independent. The latent factor $\varepsilon$ collects persistent audience or context heterogeneity (e.g., niche-audience fit, time-of-day mix) and is independent of $\mu$ with known distribution $F_\varepsilon$. This \emph{exchangeable} representation nests the Beta-Binomial as a special case (take $\psi(\mu,\varepsilon)\equiv \varepsilon$ with $\varepsilon\sim \mathrm{Beta}(\alpha(\mu),\beta(\mu))$), but it is more general and avoids committing to particular special functions.

Under this specification, the pass probability is a mixture of Binomial tails:
\[
P(\mu)\;=\;\mathbb{E}_{\varepsilon}\!\left[\,\Pr\!\big[S\ge s\ \big|\ p=\psi(\mu,\varepsilon)\big]\,\right]\;=\;\mathbb{E}_{\varepsilon}\!\left[\,I_{\psi(\mu,\varepsilon)}\!\big(s,\;q-s+1\big)\,\right],
\]
where $I_{x}(a,b)$ is the regularized incomplete Beta function (Appendix~\ref{app:proofs}). Differentiating under the expectation (by dominated convergence, justified by boundedness of the Beta density and smoothness of $\psi$) yields a transparent expression for the slope:
\begin{equation}\label{eq:Pprime_mixture}
P'(\mu)\;=\;\mathbb{E}_{\varepsilon}\!\left[\,\frac{\partial I_{x}(s,q-s+1)}{\partial x}\Big|_{x=\psi(\mu,\varepsilon)}\cdot \frac{\partial \psi(\mu,\varepsilon)}{\partial \mu}\,\right]\;=\;\mathbb{E}_{\varepsilon}\!\left[\,\psi_{\mu}(\mu,\varepsilon)\cdot \frac{\psi(\mu,\varepsilon)^{\,s-1}\big(1-\psi(\mu,\varepsilon)\big)^{\,q-s}}{B(s,q-s+1)}\,\right].
\end{equation}
Equation \eqref{eq:Pprime_mixture} shows that all incentive-relevant objects survive the mixture intact: the slope $P'(\mu)$ is the \emph{average} of the baseline Beta density evaluated at the latent success propensity, scaled by the sensitivity of that propensity to true quality. Two design consequences follow.

First, the main results carry over verbatim when $P'(\mu)$ is interpreted via \eqref{eq:Pprime_mixture}. Existence and uniqueness rely on boundedness and continuity of $P'$ and $P''$, which are inherited from the bounded Beta kernel and the smoothness of $\psi$; monotone comparative statics use only the sign of $\partial G/\partial\mu$ and the positivity of the kernel; front-loading depends on the fact that the distribution of the \emph{count} given the window size is order-invariant under exchangeability; and implementability retains the one-line bounty with $P$ and $P'$ computed under the mixture. Operationally, the platform does not need to identify the mixture components; it needs to monitor the \emph{net} slope of the pass event around the induced equilibrium.

Second, over-dispersion \emph{flattens} the pass frontier in the central region. To see the intuition without heavy calculus, consider the core kernel $k(p)\equiv p^{s-1}(1-p)^{q-s}/B(s,q-s+1)$ inside the expectation in \eqref{eq:Pprime_mixture}. When $1\le s\le q-1$, $k(\cdot)$ is log-concave on $(0,1)$; averaging a log-concave function over a mean-preserving spread in $p$ reduces its peak value by Jensen’s inequality. Thus, holding fixed the average sensitivity $\mathbb{E}[\psi_\mu(\mu,\varepsilon)]$, variability in $\psi(\mu,\varepsilon)$ dampens the maximum attainable $P'(\mu)$ near the center of the window, making pass events \emph{less} diagnostic for a given movement in true quality. The practical remedy aligns with the main text: either increase the testing window size (which sharpens the Beta kernel), or reposition the bar so that the typical entrant sits closer to the region where the mixture places more mass on steep parts of $k(\cdot)$, or target guaranteed impressions toward contexts $(\varepsilon)$ where $\psi_\mu(\mu,\varepsilon)$ is larger.

A familiar parametric case illustrates the calculus-free takeaway. Let the latent propensity be $p=\psi(\mu,\varepsilon)=m(\mu)$ (a mean map) with probability one, which collapses the model to the homogeneous link case in \S B.1 and yields the baseline slope $P'(\mu)=m'(\mu)\,k(m(\mu))$. Introducing a mean-preserving spread in $p$ (with the same $m(\mu)$) replaces $k(m(\mu))$ by $\mathbb{E}[k(p)]\le k(m(\mu))$, with strict inequality unless the spread is degenerate. In short, persistent heterogeneity in per-trial success chances makes the pass event less informative around the mean, so more early opportunities or a bar tuned toward the steeper region are required to sustain the same incentive power.

Finally, this exchangeable representation clarifies when the \emph{order} of testing could matter. If outcomes are conditionally i.i.d.\ given a latent propensity (de Finetti mixing), the distribution of the count $S$ depends only on the multiset of propensities and not on their order, so the front-loading optimality in the main text continues to hold: moving guaranteed impressions earlier increases the \emph{discounted} testing mass without changing $P$ or $P'$. If outcomes exhibit serial dependence that cannot be written as a mixture (e.g., transient bursts that change success chances as a function of recent outcomes), then $P$ could depend on schedule; this case falls outside the exchangeable scope and requires engine-specific modeling. In practice, many measurement stacks smooth or bucket outcomes precisely to approach exchangeability within short testing windows, which is the regime our results target.

\subsection*{B.4 Measurement noise, misclassification, and label quality}

Even with careful instrumentation, early outcomes can be noisy in ways that blunt the diagnosticity of the pass event. Two ubiquitous perturbations are \emph{misclassification} (false positives or negatives in recorded successes) and \emph{label smoothing} (systematic dampening or clipping of measured engagement). Both can be folded into our framework with simple transformations of the per-trial success chance, and both yield clean, actionable corrections to the design formulas.

Consider misclassification with a false-negative rate $\eta_0\in[0,1)$ and a false-positive rate $\eta_1\in[0,1)$ that are independent of $\mu$ and stable over the testing window. If the true per-trial success chance is $p$, the \emph{observed} success chance is
\[
\tilde{p}\;=\;\Pr[\text{observed success}]\;=\;(1-\eta_0-\eta_1)\,p\ +\ \eta_1.
\]
In the homogeneous link case of \S B.1 with $p=\psi(\mu)$, the pass probability becomes $\tilde{P}(\mu)=I_{\tilde{p}}(s,q-s+1)$ and its slope scales as
\begin{equation}\label{eq:Pprime_noise}
\tilde{P}'(\mu)\;=\;\frac{d I_{x}(s,q-s+1)}{dx}\Big|_{x=\tilde{p}}\cdot \frac{d \tilde{p}}{d\mu}
\;=\;(1-\eta_0-\eta_1)\,\frac{\tilde{p}^{\,s-1}(1-\tilde{p})^{\,q-s}}{B(s,q-s+1)}\cdot \psi'(\mu).
\end{equation}
Relative to the noise-free slope $P'(\mu)=\psi'(\mu)\,p^{s-1}(1-p)^{q-s}/B(\cdot)$, two effects appear: a constant multiplicative attenuation $(1-\eta_0-\eta_1)$, and a shift in the evaluation point from $p$ to $\tilde{p}$. The first effect is the dominant one for design: misclassification uniformly shrinks the responsiveness of the pass event to true quality. The second is typically smaller for moderate noise and can be controlled by retuning the bar to keep pass rates in the diagnostic interior.

The same scaling logic extends to the exchangeable and heterogeneous-slot settings. If the observed per-trial chance is $\tilde{p}(\mu,\varepsilon)=(1-\eta_0-\eta_1)\,\psi(\mu,\varepsilon)+\eta_1$, then the mixture slope becomes
\[
\tilde{P}'(\mu)\;=\;\mathbb{E}_{\varepsilon}\!\left[\,\psi_{\mu}(\mu,\varepsilon)\cdot (1-\eta_0-\eta_1)\cdot \frac{\tilde{p}(\mu,\varepsilon)^{\,s-1}\big(1-\tilde{p}(\mu,\varepsilon)\big)^{\,q-s}}{B(s,q-s+1)}\,\right],
\]
which is exactly the clean attenuation plus evaluation-point shift described above. Two immediate prescriptions follow.

\emph{Bounty calibration under noisy labels.} Because the implementability bounty scales inversely with $P'(\mu^{FB})$ (Theorem~\ref{thm:fb_implement}), misclassification that reduces the slope by a factor $(1-\eta_0-\eta_1)$ requires an offsetting \emph{increase} in the posted bounty by the factor $1/(1-\eta_0-\eta_1)$ to preserve alignment at the same target. In practice, a rough upper bound on $(\eta_0+\eta_1)$ (e.g., from audit samples) suffices to de-bias the bounty and maintain incentives. If misclassification varies by context, the analysis suggests prioritizing guaranteed impressions in better-measured slots, which both raises $P'(\mu)$ and cuts expected spend at the same target.

\emph{Bar retuning and pass-rate drift.} Misclassification also shifts pass rates: for a fixed bar, moving from $p$ to $\tilde{p}$ generally pushes cohorts toward the center (false negatives make high-$p$ items look less successful; false positives make low-$p$ items look more successful). Because targeting power depends on pass events being neither too rare nor too common, the dashboard should monitor the \emph{observed} pass rate and retune the bar to restore diagnosticity if noise levels change. The leverage ratio $\Lambda(\mu)=P'(\mu)/P(\mu)$ is again the sufficient statistic: when it drifts down due to worsening label quality, either improve measurement, move the bar, or temporarily raise $q$ or $B$ until measurement recovers.

A closely related perturbation is label smoothing or clipping, where the recorded outcome is a dampened version of the underlying binary success (for instance, because the measurement stack thresholds continuous dwell signals or caps credit for rapid repeats). Such smoothing can be modeled as a compression of the link, $\tilde{p}=\phi(\psi(\mu))$ with $\phi'(p)\in(0,1)$ near the margin. The slope formula becomes $\tilde{P}'(\mu)=\phi'(\psi(\mu))\,\psi'(\mu)\cdot \text{BetaDensity}(\tilde{p})$, which is the same attenuation structure: weaker label sensitivity translates one-for-one into weaker incentive leverage and a proportionally larger bounty to reach the same target. From a budgeting standpoint, dollars spent on improving label sensitivity (raising $\phi'$ by better instrumentation or denoising) are \emph{substitutes} for dollars spent on bounties: both purchase slope at the graduation margin.

Finally, a comment on adversarial manipulation. If a fraction $\zeta$ of entrants can inflate observed success on a subset of impressions (e.g., bot traffic) without affecting long run value, the design response is not to abandon thresholds but to \emph{change the pass event}. The analysis in Appendix~\ref{app:proofs} goes through unchanged when the test statistic is built from higher-persistence signals (e.g., delayed likes, deep-view milestones) for which manipulation has lower yield. In formulas, replace the per-trial chance $p$ by a chance $p^{\mathrm{pers}}$ tied to a signal with higher persistence; all slopes and budgets recompute mechanically. The take-away here is simple: the pass event is a design object, choose it so that its slope with respect to \emph{desirable} quality is large and its slope with respect to manipulative effort is small, and then use the same early-exposure and bounty toolkit on that event.

\subsection*{B.5 Temporal dependence, novelty decay, and large-$q$ approximations}

The analysis so far assumed that conditional on primitives the $q$ testing outcomes are exchangeable (the Poisson-Binomial and mixture cases) or, at minimum, that the pass event depends on the window \emph{only} through the multiset of per-trial success probabilities. In many surfaces, however, audience receptivity drifts within the testing window: novelty may be strongest immediately after launch and decay thereafter; time-of-day composition may shift systematically; early algorithmic placements may command higher intent. This subsection shows that the core prescriptions are unchanged under a broad class of \emph{deterministic} drifts, and it provides a normal-approximation formula for $P(\mu)$ and $P'(\mu)$ that is useful for calibration when $q$ is moderate to large.

\paragraph{Deterministic multiplicative drift.}
Let the per-trial success chance in slot $t$ be $p_t(\mu)=\theta_t \,\psi(\mu)$ with $\theta_t\in(0,1]$ capturing deterministic audience drift (e.g., novelty weights) common to all entrants, and $\psi(\mu)$ an increasing link from true quality to baseline success chance. The number of successes is $S=\sum_{t=1}^{q} Y_t$ with $Y_t\sim\mathrm{Bernoulli}(p_t(\mu))$ independent across $t$. The pass probability is a Poisson-Binomial tail, and by Lemma~\ref{lem:PB_derivative} its slope is
\begin{equation}\label{eq:PB_drift_slope}
P'(\mu)\;=\;\sum_{t=1}^{q} \underbrace{\theta_t\,\psi'(\mu)}_{\text{per-slot sensitivity}} \;\cdot\; \underbrace{\Pr\!\big[S_{-t}=s-1\ \big|\ \{p_u(\mu)\}_{u\neq t}\big]}_{\text{influence weight of slot $t$}}.
\end{equation}
When $\{\theta_t\}$ is nonincreasing in $t$ (novelty decay), the largest per-slot sensitivities $\theta_t\,\psi'(\mu)$ occur at the earliest times. For a given \emph{cardinality} $Q$ of guaranteed impressions, selecting the $Q$ earliest slots maximizes the right-hand side of \eqref{eq:PB_drift_slope} by the rearrangement logic that pairs the largest sensitivities with the largest influence weights.\footnote{Formally, when $\theta_t$ is common across entrants and the bar $s$ is fixed, the family of influence weights $\{\Pr[S_{-t}=s-1]\}_t$ is \emph{majorized} by the vector that places more mass on slots with higher $p_t$; see Marshall and Olkin’s majorization theory for Poisson-Binomial tails. The heuristic is straightforward: in windows where some $p_t$ are larger, it is more likely that the other $q-1$ slots already sum to $s-1$, so a marginal increase in a high-$p_t$ slot is especially consequential.} Thus novelty decay strengthens, rather than weakens, the case for front-loaded guarantees: among all $Q$-slot schedules, the earliest schedule (largest $\sum_{j=1}^{Q}\theta_{t_j}$) weakly maximizes both $P(\mu)$ and $P'(\mu)$ for every $\mu$ and therefore weakly increases the induced effort and platform objective relative to any delayed schedule.

\paragraph{Large-$q$ normal approximation.}
For calibration and diagnostics it is often convenient to approximate the Poisson-Binomial tail by a normal CDF when $q$ is moderate to large. Let
\[
m(\mu)\;\equiv\;\sum_{t=1}^q p_t(\mu),\qquad v(\mu)\;\equiv\;\sum_{t=1}^q p_t(\mu)\big(1-p_t(\mu)\big),
\]
and define the standardized threshold $z(\mu)\equiv \frac{s- m(\mu)}{\sqrt{v(\mu)}}$. A continuity-corrected DeMoivre-Laplace approximation gives
\begin{equation}\label{eq:CLT_tail}
P(\mu)\ \approx\ 1-\Phi\!\Big(z(\mu)-\tfrac{1}{2}/\sqrt{v(\mu)}\Big),
\end{equation}
where $\Phi$ and $\phi$ are the standard normal CDF and PDF. Differentiating \eqref{eq:CLT_tail} yields an explicit slope:
\begin{equation}\label{eq:CLT_slope}
P'(\mu)\ \approx\ \phi\!\big(z(\mu)\big)\,\cdot\,\Bigg(\frac{m'(\mu)}{\sqrt{v(\mu)}}\ + \ \frac{(s-m(\mu))\,v'(\mu)}{2\,v(\mu)^{3/2}}\Bigg),
\end{equation}
with
\[
m'(\mu)=\sum_{t=1}^q p_t'(\mu),\qquad v'(\mu)=\sum_{t=1}^q p_t'(\mu)\,(1-2p_t(\mu)).
\]
In the multiplicative-drift case $p_t(\mu)=\theta_t\psi(\mu)$, $m'(\mu)=\psi'(\mu)\sum_t \theta_t$ and $v'(\mu)=\psi'(\mu)\sum_t \theta_t(1-2\theta_t\psi(\mu))$, which makes \eqref{eq:CLT_slope} especially transparent. The approximation captures the two channels emphasized in the main text: moving the mean early-success count relative to the bar (the $m'(\mu)$ term) and changing the dispersion of early outcomes (the $v'(\mu)$ term). It is accurate when $q$ is modestly large, when no single slot dominates ($\max_t p_t$ not too close to 0 or 1), and when $s$ is not in an extreme tail. In practice, plotting the empirical pass-probability curve against the normal approximation is an inexpensive diagnostic; the slope formula \eqref{eq:CLT_slope} then feeds directly into the bounty calculation and the balanced exploration dashboard.

\paragraph{Implications.}
Deterministic drift that is common across entrants can be absorbed into the $p_t(\mu)$ vector without altering any of the structural results; it only sharpens the operational recommendation to deploy guarantees as early as feasible and, when targeting is possible, into high-sensitivity contexts. Normal approximations provide closed-form surrogates for $P$ and $P'$ that are robust enough for planning and sensitivity analysis, especially when budgets or governance require “back-of-the-envelope” estimates before running pilots.

\subsection*{B.6 Estimating $P(\mu)$ and $P'(\mu)$ from telemetry: a practical recipe}

Designing guarantees, bars, and bounties requires estimates of the pass probability and its slope around the current equilibrium. This subsection outlines a practical, low-assumption workflow that converts routine telemetry into the two inputs needed by the formulas in the main text. The emphasis is on methods that an applied team can run with standard analytics tooling; asymptotic details are kept to a minimum.

\paragraph{Step 1: Choose a quality proxy and bin entrants.}
Because true pre-entry quality is latent, begin with a pre-policy proxy that is available at entry and predictive of early success (e.g., an offline content-quality score, a historical creator-level metric, or a lightweight model trained on pre-policy cohorts). Sort recent entrants by this proxy and form coarse bins (e.g., deciles). For each bin, compute the empirical pass rate $\widehat{P}$ under the current $(q,s)$ and the average value of the proxy.

\paragraph{Step 2: Fit a smooth pass-probability curve.}
Regress the pass indicator on the proxy using a flexible but monotone link (e.g., logistic regression with splines, isotonic regression, or a generalized additive model). This yields a smooth estimate $\widehat{P}(\cdot)$ of the pass probability as a function of the proxy. Evaluate $\widehat{P}$ at the cohort median to obtain a point estimate of the current pass rate and read off the estimated slope $\widehat{P}'$ at that point by automatic differentiation or by finite differences. If the proxy is roughly linear in true quality near the median, $\widehat{P}'$ is a good stand-in for $P'(\mu^\star)$; if not, rescale the proxy locally (e.g., z-score within a neighborhood) to make the units interpretable.

\paragraph{Step 3: Cross-check with the influence-weights estimator.}
Independently estimate $P'(\mu^\star)$ using the identity in Lemma~\ref{lem:PB_derivative}. Within the testing window, compute the empirical distribution of leave-one-out counts $\widehat{\Pr}[S_{-t}=s-1]$ by slot. Next, induce small, randomized nudges that marginally improve observable aspects of quality (e.g., prompting a clearer thumbnail or a better title) and estimate the per-slot sensitivity $\widehat{p_t'}$ from the resulting change in observed per-slot success rates. Multiply and sum to obtain
\[
\widehat{P}'_{\text{IF}}\;=\;\sum_{t=1}^{q} \widehat{p_t'} \cdot \widehat{\Pr}[S_{-t}=s-1].
\]
Agreement between the curve-slope estimate from Step 2 and the influence-weights estimate $\widehat{P}'_{\text{IF}}$ is a strong sanity check; discrepancies point to misspecification (e.g., heterogeneity or nonlinearity in the proxy) or measurement drift.

\paragraph{Step 4: Plug in and propagate uncertainty.}
Compute the implementability bounty and budget quantities using the plug-in formulas with $(\widehat{P},\widehat{P}')$. For example,
\[
\widehat{B}^\star\;=\;\frac{\big[q+\widehat{H}\, \widehat{P}+\widehat{\mu}^{FB}\, \widehat{H}\, \widehat{P}'\big]\,(1-\widehat{\alpha})}{\widehat{P}'},\qquad
\widehat{\text{Spend}}\;=\;\widehat{B}^\star\,\widehat{P}.
\]
Obtain uncertainty bands by nonparametric bootstrap over entrants (resampling creators with replacement and recomputing all four steps) or by delta-method linearization when using the smooth curve from Step 2. Report at least percentile intervals for $\widehat{P}$, $\widehat{P}'$, $\widehat{B}^\star$, and expected spend; in dashboards, show how $\widehat{B}^\star$ would change under a conservative attenuation of the slope (e.g., multiply $\widehat{P}'$ by $0.8$ to reflect potential label noise as in \S B.4).

\paragraph{Step 5: Stress-test bar placement and schedule.}
Using the fitted pass-probability curve, simulate counterfactual pass rates and slopes under nearby bars $s\pm 1$ and under modest changes in testing size $q\pm 1$. If novelty decay is present (Step 1 can diagnose this by plotting per-slot success rates), prioritize schedules that move guarantees earlier and recompute the implied $P$ and $P'$ via the normal approximation \eqref{eq:CLT_tail}-\eqref{eq:CLT_slope} or via direct Poisson-Binomial evaluation. Choose the policy that restores the leverage ratio $\widehat{\Lambda}=\widehat{P}'/\widehat{P}$ to the interior region where targeting is potent and expected spend per unit incentive is low.

\paragraph{What not to do.}
Two practices routinely degrade incentives and budgets. First, tuning bars solely to hit a fixed pass-rate target without monitoring the \emph{slope} encourages regimes with low diagnosticity; a 50\% pass rate with a flat slope is not a good target. Always pair pass rates with slopes. Second, paying flat per-impression subsidies during testing has no effect on the first-order condition and only consumes cash; if dollars must be spent for fairness or promotional reasons, route them through guarantees ($q$) or through the hit-based bounty where they purchase slope.

\noindent\textbf{Summary.}
Temporal drift and large-$q$ regimes can be folded into the same two primitives that drive the theory: $P(\mu)$ and $P'(\mu)$. Deterministic novelty decay strengthens the case for early guarantees and, when targeting is available, for placing them in high-sensitivity contexts. Normal approximations provide quick, interpretable surrogates for pass probabilities and slopes. Finally, a simple telemetry workflow, fit a pass curve, cross-check with influence weights, plug in, bootstrap, and stress-test, delivers the quantities needed to set guarantees, bars, and bounties in a way that is rigorous, auditable, and easy to operationalize.

\section{General Exploitation Engines and Continuation Mapping}\label{app:engines}

This appendix clarifies how the main-text exposure aggregator
\[
\Xi(\mu)\;=\;q\;+\;H_0\;+\;\Delta H\cdot P(\mu),\qquad \Delta H\equiv H_1-H_0\ge 0,
\]
arises from a wide class of post-testing allocation engines that operate with posterior-based priorities (indices, sampling rules, or tiered bands). The goal is twofold. First, we provide a clean reduction from the mechanics of a realistic ranking engine to two scalars $(H_0,\Delta H)$ that summarize \emph{expected discounted continuation exposure} conditional on \emph{failing} versus \emph{passing} the testing bar. Second, we describe how to estimate these objects in practice by simulation or replay on logged data, emphasizing simple diagnostics that ensure the mapping is reliable.

Throughout, we keep the testing stage exactly as in the main text: the platform grants $q$ guaranteed impressions and applies a graduation test that \emph{passes} an entrant if and only if the observed success count $S$ exceeds an integer threshold $s=\lceil q\bar{\mu}\rceil$. Write $P(\mu)=\Pr[S\ge s]$ and recall that $P'(\mu)$ is the Beta density (Appendix~\ref{app:proofs}). Let $D\in\{\text{pass},\text{fail}\}$ be the outcome of the test. After $D$ is realized, the engine takes over and allocates impressions stochastically over the remaining horizon based on the entrant’s posterior and on the competing inventory.

\subsection*{C.1 From engines to $(H_0,\Delta H)$: a structural reduction}

We begin with a general engine that, in each period $t\ge q+1$, assigns a probability $x_t\in[0,1]$ that the entrant is surfaced in the focal slot (or, more generally, the expected share of relevant impressions). The state that determines $x_t$ may include (i) the entrant’s posterior over $\mu$ after the testing window and any subsequent outcomes, (ii) the posteriors of other items in the candidate set, and (iii) time or context variables. Define the discounted continuation exposure as
\[
H(D)\ \equiv\ \mathbb{E}\!\left[\ \sum_{t\ge q+1}\gamma^{t-1}\,x_t\ \bigg|\ D\ \right],\qquad H_1\equiv H(\text{pass}),\ \ H_0\equiv H(\text{fail}),
\]
where the expectation averages over the engine’s randomness, the stochastic evolution of posteriors, and any randomness in the competitive set. The following proposition states that the exposure aggregator in the main text is exact whenever post-testing allocation depends on $D$ only through its induced posterior and whenever subsequent play is conditionally independent of the \emph{counterfactual} outcomes that were not observed in testing.

\begin{proposition}[Band-separable continuation]\label{prop:band_separable}
Suppose the engine satisfies:
\begin{enumerate}
\item[(i)] (\emph{Posterior sufficiency}) Given the entrant’s posterior at $t=q+1$, the distribution of future allocations $\{x_t\}_{t\ge q+1}$ is conditionally independent of the \emph{path} of testing outcomes beyond the pass/fail event $D$.
\item[(ii)] (\emph{No treatment on the untreated}) The testing window contains the only guaranteed exposures; after $t=q$, allocation is governed by the engine with no additional policy shocks.
\end{enumerate}
Then
\[
\mathbb{E}\!\left[\ \sum_{t\ge q+1}\gamma^{t-1}\,x_t\ \right]\;=\;H_0\;+\;(H_1-H_0)\cdot P(\mu)\;=\;H_0+\Delta H\cdot P(\mu).
\]
Consequently, the total expected discounted exposure equals $\Xi(\mu)=q+H_0+\Delta H\,P(\mu)$.
\end{proposition}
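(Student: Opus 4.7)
}
My plan is to reduce the expected continuation exposure to a two–point mixture over the pass/fail event by a tower (iterated–expectations) argument, and then show that the two conditional expectations appearing in that mixture are genuine constants under hypotheses (i)--(ii). First, I split total exposure into the guaranteed testing contribution and the post–testing continuation: since every entrant receives $q$ discounted testing impressions with certainty, the testing component contributes $q$ to $\Xi(\mu)$ additively and can be isolated. I then focus on $C\equiv \sum_{t\ge q+1}\gamma^{t-1}x_t$ and define, for $d\in\{\text{pass},\text{fail}\}$, the conditional continuation value $H(d;\mu)\equiv \mathbb{E}_\mu[C\mid D=d]$. By the law of total expectation and the fact that $\Pr_\mu(D=\text{pass})=P(\mu)$ (from the definition of the graduation event in the main text), I obtain
\[
\mathbb{E}_\mu[C]\;=\;H(\text{pass};\mu)\,P(\mu)\;+\;H(\text{fail};\mu)\,\big(1-P(\mu)\big).
\]

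Next I have to argue that $H(\text{pass};\mu)=H_1$ and $H(\text{fail};\mu)=H_0$ are independent of $\mu$; this is the substantive step where hypotheses (i) and (ii) do the real work. The idea is to condition on the posterior $\pi_{q+1}$ carried into the exploitation stage. Assumption (ii) guarantees that no additional policy shock intervenes, so the allocation stream $\{x_t\}_{t\ge q+1}$ is a measurable functional of $\pi_{q+1}$ and the exogenous competitive environment. Assumption (i) further implies that, given $D$, the distribution of $\pi_{q+1}$ (and hence of the entire continuation allocation) does not depend on finer features of the testing path nor, crucially, on $\mu$ directly: true quality only influences continuation through the \emph{distribution of testing data}, and the only channel from that data to the continuation engine is $D$. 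Therefore the conditional law of $C$ given $D$ is the same across all $\mu$, so its expectation is a constant that I denote $H_1$ or $H_0$.

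Substituting these constants into the mixture and rearranging gives $\mathbb{E}_\mu[C]=H_1 P(\mu)+H_0(1-P(\mu))=H_0+\Delta H\cdot P(\mu)$, and adding the testing term yields $\Xi(\mu)=q+H_0+\Delta H\,P(\mu)$, as claimed. The main obstacle I anticipate is precisely the justification that conditioning on $D$ alone is sufficient to purge $\mu$ from $H(\cdot;\mu)$; in full generality a posterior–based engine uses more than the binary $D$, and one must interpret hypothesis (i) either as a genuine band–separable engine (the engine ignores within–band variation in $S$) or as a Rao–Blackwellization in which one averages the engine's allocations over the within–band distribution of $S$ and folds that average into the definitions of $H_0$ and $H_1$. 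In either reading the two–point mixture collapses to the stated formula; I would state the band–separable interpretation in the body and relegate the Rao–Blackwell observation to a short remark to flag that the same aggregator still summarizes average continuation exposure when finer posterior dependence is allowed.
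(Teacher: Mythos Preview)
Your proposal is correct and follows essentially the same route as the paper's proof sketch: apply the law of iterated expectations over the binary event $D\in\{\text{pass},\text{fail}\}$, then invoke assumptions (i)--(ii) to conclude that the conditional continuation values $H(d)$ are constants that do not depend on the testing path. Your additional scrutiny of whether $H(d;\mu)$ could still depend on $\mu$, and the accompanying Rao--Blackwell remark, go slightly beyond the paper's terse sketch but are consistent with its intended reading of $H_0$ and $H_1$ as band-level scalars.
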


\noindent\emph{Proof (sketch).} By the law of iterated expectations,
\[
\mathbb{E}\!\left[\sum_{t\ge q+1}\gamma^{t-1}x_t\right]\ =\ \sum_{d\in\{\text{pass},\text{fail}\}} \Pr[D=d]\cdot \mathbb{E}\!\left[\sum_{t\ge q+1}\gamma^{t-1}x_t\ \Big|\ D=d\right]\ =\ P(\mu)\,H_1+\big(1-P(\mu)\big)\,H_0.
\]
Assumption (i) ensures that $H(d)$ is a well-defined functional of the posterior induced by $D=d$ and not of the full realization of testing outcomes; (ii) ensures that no other policy terms enter continuation. \qed

\begin{figure}[!ht]
  \centering
  \includegraphics[width=.55\linewidth]{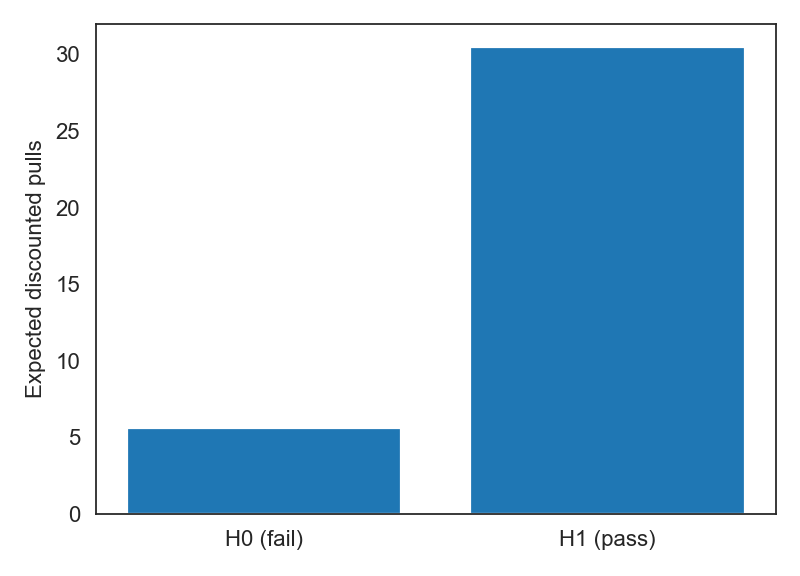}
  \caption{\textbf{Continuation spread from Thompson sampling.}
  One focal slot competes with $20$ fixed competitors. Conditional on pass, the expected discounted pulls $H_1$ exceed $H_0$ after fail; the spread $\Delta H=H_1-H_0$ is the prize in the main text.}
  \label{fig:appC-H}
\end{figure}

Proposition~\ref{prop:band_separable} nests most engines used in practice:

\emph{Tiered-band engines.} The platform assigns items to bands (e.g., “promising” vs. “tail”) based on posterior thresholds; the entrant’s expected discounted exposure within each band is $H_1$ or $H_0$. Crossing the testing bar is the admission rule into the higher band, so the reduction is exact.

\emph{Index policies (Gittins/UCB).} At $t=q+1$ the entrant’s posterior initializes an index; subsequent allocations depend on this posterior and on competing indices but not on the unrealized counterfactual test path. $H_1$ (resp. $H_0$) is the expected discounted pull count starting from the posterior conditional on $D=\text{pass}$ (resp. $D=\text{fail}$).

\emph{Thompson sampling.} Similar logic: $H_1$ and $H_0$ are the expected discounted pull counts when the initial posterior is the one induced by passing or failing the bar. Because the test window is short, these two posteriors are well separated precisely when $s$ is in the diagnostic interior.

Two remarks help with interpretation. First, $H_0$ behaves like an additive contribution to the guaranteed window $q$: even failed entrants are occasionally sampled by the engine, especially under exploration-heavy rules. Second, $\Delta H$ is the \emph{prize spread}, how much more continuation a pass buys over a fail in expectation. It is this spread that fuels the slope term in the entrant’s FOC, and its magnitude can be shaped both by the engine (more distinct priority bands) and by the bar (steeper separation between the two posteriors).

\subsection*{C.2 Computing and estimating $(H_0,\Delta H)$}

We turn to practical methods for obtaining $(H_0,\Delta H)$. The key is to work \emph{conditional on the posteriors} that result from testing, because those are sufficient states for all engines listed above.

\paragraph{Posterior after testing.}
Fix a Beta prior $\mathrm{Beta}(a_0,b_0)$ for expository concreteness (other conjugate families are analogous). After $q$ trials with $S$ successes, the posterior is $\mathrm{Beta}(a_0+S,\,b_0+q-S)$. Conditional on $D=\text{pass}$, $S\ge s$ and the posterior is a mixture over $S\in\{s,\ldots,q\}$ with weights proportional to $\binom{q}{S}\mu^S(1-\mu)^{q-S}$. The \emph{ex ante} continuation values $H_1$ and $H_0$ integrate over these $S$-mixtures; in practice we approximate by (i) conditioning on the \emph{typical} $S$ near the tail mean, or (ii) Monte Carlo over $S$ with the Binomial weights and over $\mu$ with the prior.

\paragraph{Direct simulation (“engine replay”).}
When the engine is available in a sandbox (or can be faithfully simulated), the most reliable approach is Monte Carlo:
\begin{enumerate}
\item Draw a large number of $(\mu,S)$ pairs by sampling $\mu$ from the prior and $S\sim\mathrm{Bin}(q,\mu)$. Separate the draws into $S\ge s$ (pass) and $S<s$ (fail).
\item For each draw, initialize the entrant’s posterior as $\mathrm{Beta}(a_0+S,b_0+q-S)$ at time $t=q+1$ and simulate the engine forward for a long horizon $T$ (or with truncation at discount $\gamma$), logging whether the entrant is shown in each period and the realized outcomes that update its posterior. Keep competitors’ posteriors fixed to a representative sample or draw them from their own steady-state distribution.
\item Compute the discounted pull count $\sum_{t=q+1}^{q+T}\gamma^{t-1}\mathbf{1}\{\text{shown at }t\}$ and average across the pass draws (to get $H_1$) and the fail draws (to get $H_0$).
\end{enumerate}
This procedure yields $(H_0,\Delta H)$ for the current engine and cohort composition. It transparently handles Thompson sampling, UCB variants, or banded ranking rules and allows stress tests (e.g., more aggressive priority spreads).

\paragraph{Offline replay on logs.}
If the engine cannot be simulated but historical logs contain (i) the ranking scores that would have been used to allocate the focal slot and (ii) the realized exposure decision, one can approximate $H_1$ and $H_0$ by \emph{counterfactual reweighting}:
\begin{enumerate}
\item Identify past entrants with $S\ge s$ and with $S<s$ under the same test window. For each, reconstruct the posterior path the engine would have observed (using logged outcomes).
\item Using the logged ranking scores and exposure decisions, estimate a parametric allocation model $x_t=\sigma(\text{score vectors}_t)$ (e.g., a multinomial logit or a calibrated link) that maps scores into exposure probabilities.
\item For each posterior path, predict $\hat{x}_t$ under the estimated allocation model and compute $\sum_{t\ge q+1}\gamma^{t-1}\hat{x}_t$. Average across pass cases to get $\hat{H}_1$ and across fail cases to get $\hat{H}_0$.
\end{enumerate}
Two cautions apply. First, if the historical engine explored differently in pass and fail cases, the allocation model must include the exploration policy to avoid bias. Second, logs that reflect different bars or windows should be filtered to match the policy under evaluation.

\paragraph{Closed-form surrogates for canonical engines.}
In simple environments, $(H_0,\Delta H)$ admits quick surrogates:

\emph{Two-band priority rule.} Suppose passing places the entrant in a high-priority queue that is sampled with probability $\pi_H$ each period and failing places it in a low-priority queue sampled with probability $\pi_L$ (competitor mix held fixed). Then
\[
H_1\approx \frac{\pi_H}{1-\gamma},\qquad H_0\approx \frac{\pi_L}{1-\gamma},\qquad \Delta H\approx \frac{\pi_H-\pi_L}{1-\gamma}.
\]
These expressions are exact if the per-period sampling probabilities are constant and independent of the entrant’s subsequent outcomes; they remain accurate when posterior updates change probabilities slowly relative to discounting.

\emph{Thompson sampling against a stable incumbent.} With a single incumbent of known quality $\bar{\mu}$ and one focal slot per period, the probability that the entrant is shown at $t$ equals $\Pr[\tilde{\mu}_t\ge \bar{\mu}]$ where $\tilde{\mu}_t$ is a Beta draw from the entrant’s posterior at $t$. If the posterior mean starts above $\bar{\mu}$ in the pass case and below in the fail case, a convenient approximation uses the Beta CDF at $\bar{\mu}$ along a deterministic posterior-mean path and sums geometrically; the result is larger by an order of magnitude in the pass case, precisely the prize spread $\Delta H$.

\paragraph{Diagnostics and governance.}
Whichever method is used, three diagnostics make the mapping credible. First, \emph{stability}: re-estimate $(H_0,\Delta H)$ on adjacent cohorts; large swings signal that the engine or competitor mix has changed in ways not captured by the procedure. Second, \emph{band separation}: verify that the posterior means after pass versus fail are well separated under the chosen $(q,s)$; small separations imply a small $\Delta H$ and weak slope incentives. Third, \emph{consistency}: plug the estimated $(H_0,\Delta H)$ into the main-text formulas and check whether predicted graduation and continuation statistics align with realized telemetry in a holdout period.

\noindent\textbf{Implications for design.} The $(H_0,\Delta H)$ decomposition reframes engine tuning as incentive tuning. Increasing baseline resurfacing of failed entrants raises $H_0$ and acts like more guaranteed impressions; increasing priority for passed entrants raises $\Delta H$ and strengthens the slope term that rewards quality at the margin. The testing bar interacts with both: for a fixed engine, choosing $(q,s)$ that produces a clear separation of posteriors maximizes the \emph{effective} spread $\Delta H$ seen by typical entrants, lowering the cash bounty needed for alignment.

 The remaining half of Appendix~\ref{app:engines} extends the mapping to multi-winner (\emph{top-K}) cohorts and rotating carousels, provides a lightweight approximation for $(H_0,\Delta H)$ under UCB-style indices with competitor drift, and details an online calibration loop that co-tunes $(q,s)$ and $(H_0,\Delta H)$ to keep the pass event diagnostic while maintaining budget discipline.

\subsection*{C.3 Multi-winner cohorts and rotating carousels}

Many production surfaces promote more than one item at a time. Feeds show several cards, shelves display a row of thumbnails, and carousels rotate a fixed number of slots. The reduction in Proposition~\ref{prop:band_separable} extends with minimal change once continuation is summarized \emph{per period} by the probability that an entrant appears among the promoted set. Let $K$ be the number of concurrent seats and let $\pi_t(\theta_t)$ denote the probability that an entrant with period-$t$ priority index $\theta_t$ is included in the top-$K$ set given the contemporaneous distribution of competitors’ indices. Define the per-period \emph{effective weight} of seat $r\in\{1,\ldots,K\}$ as $w_r\in(0,1]$ to capture position effects (e.g., higher click-through in earlier slots). The period-$t$ expected share of impressions is then
\[
x_t\;=\;\sum_{r=1}^{K} w_r\,\Pr\!\big[\text{entrant is ranked at position }r\ \big|\ \theta_t,\ \text{competitors}\big].
\]
Aggregating over positions yields $x_t = \omega\cdot \pi_t(\theta_t)$ with $\omega\equiv\sum_{r=1}^K w_r$ whenever the rank distribution is smooth in $\theta_t$ (e.g., Plackett-Luce, softmax). The continuation pair becomes
\[
H_1\;=\;\mathbb{E}\!\Big[\sum_{t\ge q+1}\gamma^{t-1}\omega\,\pi_t(\theta_t^{\text{pass}})\Big],\qquad
H_0\;=\;\mathbb{E}\!\Big[\sum_{t\ge q+1}\gamma^{t-1}\omega\,\pi_t(\theta_t^{\text{fail}})\Big],
\]
and the prize spread is $\Delta H=H_1-H_0$. Two approximations make these objects transparent and easy to calibrate.

A constant-index benchmark assumes that, over the discount horizon, an entrant’s index is well summarized by its post-testing initialization $\theta_0$ (posterior mean plus any engine-specific optimism bonus). Then $\pi_t(\theta_t)\approx \pi(\theta_0)$ and
\[
H_1\ \approx\ \frac{\omega\,\pi(\theta_0^{\text{pass}})}{1-\gamma},\qquad
H_0\ \approx\ \frac{\omega\,\pi(\theta_0^{\text{fail}})}{1-\gamma},\qquad
\Delta H\ \approx\ \frac{\omega\,[\pi(\theta_0^{\text{pass}})-\pi(\theta_0^{\text{fail}})]}{1-\gamma}.
\]
This “frozen-index” surrogate is particularly accurate when testing windows are short and the initial posterior gap created by passing versus failing is large relative to subsequent within-horizon learning.

A relaxation-to-steady-state benchmark allows the inclusion probability to drift toward a cohort-level steady state $\pi_\infty$ at rate $\lambda>0$: $\pi_t(\theta_t)\approx \pi_\infty - (\pi_\infty-\pi_0)\,e^{-\lambda\,(t-q)}$, with $\pi_0$ equal to the inclusion probability implied by the post-testing index. Summing the resulting geometric series gives
\[
H\big(\pi_0,\pi_\infty,\lambda\big)\;=\;\omega\left[\frac{\pi_\infty}{1-\gamma}\ -\ \frac{\pi_\infty-\pi_0}{1-\gamma e^{-\lambda}}\right],
\]
and therefore
\[
\Delta H\ \approx\ H\!\big(\pi_0^{\text{pass}},\pi_\infty,\lambda\big)\ -\ H\!\big(\pi_0^{\text{fail}},\pi_\infty,\lambda\big)\ =\ \omega\,\frac{\pi_0^{\text{pass}}-\pi_0^{\text{fail}}}{1-\gamma e^{-\lambda}}.
\]
The sole new parameter, $\lambda$, measures how quickly inclusion probabilities mean-revert as posteriors update; it is readily estimated from logs by regressing realized inclusion on its own lag and a constant within a short post-testing window. In rotating carousels that serialize $K$ seats over a few time ticks, the same formulas apply with $\gamma$ interpreted over the carousel cycle and $\omega$ equal to the cycle‑averaged position weight.

These approximations preserve the managerial interpretation. The baseline $H_0$ acts like extra guaranteed exposure from ambient resurfacing in the promoted set, while the spread $\Delta H$ captures how much more the engine will include an entrant who cleared the bar. Engines with more distinct priority bands (or larger $K$ with steep position weights) have larger spreads and therefore stronger slope incentives for the same testing policy.

\subsection*{C.4 UCB-style indices with competitor drift: a lightweight surrogate}

Upper-confidence-bound (UCB) engines allocate by an optimism‑corrected estimate: an entrant with posterior mean $\hat{\mu}_t$ and pull count $n_t$ receives index $\theta_t=\hat{\mu}_t + b_t$, where $b_t=c\,\sqrt{(\log t)/(n_t\vee 1)}$ for some exploration constant $c>0$. Passing the bar raises both $\hat{\mu}_{q+1}$ and $n_{q+1}$ relative to failing; subsequently, the optimism term $b_t$ decays as the item is pulled. Competitors’ indices drift as their own counts and means evolve. Directly simulating this coupled system is feasible (and recommended when precision matters), but a tractable surrogate gives intuition and closed forms.

Two simplifications produce an accurate first pass. First, approximate the \emph{selection frontier} formed by competitors’ indices by a slowly varying threshold $\bar{\theta}_t$ such that the entrant is included in the promoted set when $\theta_t\ge \bar{\theta}_t$. Second, replace the hard threshold by a smooth link $\pi_t(\theta_t)\approx \sigma\!\big(\kappa(\theta_t-\bar{\theta}_t)\big)$ with slope parameter $\kappa>0$ (a logistic smoothing of the inclusion event). After testing, take $\theta_{q+1}^{\text{pass}}=\hat{\mu}_{q+1}^{\text{pass}}+b_{q+1}$ and $\theta_{q+1}^{\text{fail}}=\hat{\mu}_{q+1}^{\text{fail}}+b_{q+1}$, where the common optimism bonus reflects the same $q$ trials but different realized $S$; write $\pi_0^{\text{pass}}=\sigma(\kappa(\theta_{q+1}^{\text{pass}}-\bar{\theta}_{q+1}))$ and analogously for fail. As the engine operates, $\hat{\mu}_t$ drifts toward the true mean while $b_t$ shrinks roughly like $1/\sqrt{n_t}$. Because $n_t$ itself grows in proportion to inclusion, a standard linearization yields an exponential relaxation of inclusion probabilities:
\[
\pi_t\ \approx\ \pi_\infty\ -\ \big(\pi_\infty-\pi_0\big)\,e^{-\lambda\,(t-q)},\qquad \lambda\ \approx\ \kappa\,\frac{\partial \theta}{\partial n}\cdot \pi_\infty,
\]
where $\partial \theta/\partial n \approx -\tfrac{1}{2}c\,(\log t)^{1/2} n^{-3/2}$ evaluated near the post-testing count gives a negative slope that translates into positive mean reversion for inclusion. The same relaxation-sum as above then gives
\[
H_1\ \approx\ \omega\left[\frac{\pi_\infty}{1-\gamma}-\frac{\pi_\infty-\pi_0^{\text{pass}}}{1-\gamma e^{-\lambda}}\right],\qquad
H_0\ \approx\ \omega\left[\frac{\pi_\infty}{1-\gamma}-\frac{\pi_\infty-\pi_0^{\text{fail}}}{1-\gamma e^{-\lambda}}\right],
\]
and
\[
\Delta H\ \approx\ \omega\,\frac{\pi_0^{\text{pass}}-\pi_0^{\text{fail}}}{1-\gamma e^{-\lambda}}.
\]
Competitor drift can be folded into $\bar{\theta}_t$ by letting it follow a stable AR(1), $\bar{\theta}_{t+1}=\rho\,\bar{\theta}_t+(1-\rho)\bar{\theta}^\star+\varepsilon_t$, and absorbing the resulting slow motion into the steady-state inclusion $\pi_\infty$ (higher $\bar{\theta}^\star$ implies smaller $\pi_\infty$). In practice, $\pi_0^{\text{pass}}$, $\pi_0^{\text{fail}}$, $\pi_\infty$, and $\lambda$ are estimated directly from logs: fit a smooth curve of post-testing inclusion probability against the post-testing index, read off the two initial probabilities, track the cohort‑average inclusion over the next few dozen periods to estimate $\lambda$, and infer $\pi_\infty$ from the plateau. These four numbers, together with $\omega$ and $\gamma$, determine $(H_0,\Delta H)$ to first order and therefore suffice for bounty and budget calculations.

Two cautions guide use. First, when the promoted set is extremely small (e.g., $K=1$) and the frontier is very sharp, the logistic smoothing may require a large $\kappa$; the formulas remain valid but numerical stability benefits from capping $\kappa$ based on observed variance of score differences. Second, when the engine applies additional resets or “freshness boosts,” those should be treated as part of the policy and reflected in the estimated inclusion path; failing to do so understates $\lambda$ and overstates the spread.

\subsection*{C.5 An online calibration loop that co-tunes $(q,s)$ and $(H_0,\Delta H)$}

In steady operation, the testing policy and the engine should be tuned together so that the pass event is diagnostic, the prize spread is meaningful, and budgets are met. The following loop is designed for weekly execution and uses only quantities that the platform can monitor in dashboards.

Start with current $(q,s)$, measured $(\widehat{H}_0,\widehat{\Delta H})$ from simulation or replay, and observed cohort pass statistics $(\widehat{P},\widehat{P}')$ near the induced equilibrium. Compute the implied implementability bounty
\[
\widehat{B}^{\star}\;=\;\frac{\big[q+\widehat{H}_0+\widehat{\Delta H}\,\widehat{P}+\widehat{\mu}^{FB}\,\widehat{\Delta H}\,\widehat{P}'\big]\,(1-\widehat{\alpha})}{\widehat{P}'},
\]
and the expected per-entrant spends: impressions $\widehat{q}$ (delivered, discounted) and cash $\widehat{B}^{\star}\widehat{P}$. Compare these with budgets $(R,M)$ and compute shadow-price gaps using the balanced exploration rule. At the implementability bounty, the marginal value of an extra discounted early impression equals the entrant’s target quality, while the marginal value of an extra expected payout dollar equals one in magnitude. Thus, a simple and robust control is:
\[
\Delta q\ \propto\ \min\!\left\{\,R-\widehat{q},\ \ \tau_q\,\big(\widehat{\mu}^{FB}-\widehat{\lambda}_{\mathrm{imp}}\big)\right\},\qquad
\Delta B\ \propto\ \min\!\left\{\,M-\widehat{B}^{\star}\widehat{P},\ \ \tau_B\,\big(1-\widehat{\lambda}_{\$}\big)\right\},
\]
with small step sizes $(\tau_q,\tau_B)$ and current estimates of shadow prices $(\widehat{\lambda}_{\mathrm{imp}},\widehat{\lambda}_{\$})$ inferred from budget tightness (e.g., by dual ascent on the Lagrangian in the main text). Update $(q,B)$ by $(q+\Delta q,\ \max\{0,B+\Delta B\})$.

Co-tune the bar to keep the pass event diagnostic by targeting a leverage corridor for the ratio $\widehat{\Lambda}=\widehat{P}'/\widehat{P}$ (or, equivalently, a corridor for the pair of pass rate and slope). If $\widehat{\Lambda}$ falls below the corridor, raise $s$ by one; if it exceeds, lower $s$ by one. Each move prompts a fresh measurement of $(\widehat{P},\widehat{P}')$ the following week. In parallel, adjust the engine to maintain a healthy prize spread: when $\widehat{\Delta H}$ drifts low, strengthen band separation (e.g., widen score gaps for high-priority lanes or increase the promoted set’s sampling frequency) until the inclusion path after a pass sits materially above the path after a fail. Because $(q,s)$ and $(H_0,\Delta H)$ interact, damp adjustments with exponential smoothing and change at most one unit per week to avoid oscillations.

Three safeguards make the loop reliable. First, segment the loop: maintain separate $(q,s,B)$ and $(\widehat{H}_0,\widehat{\Delta H})$ per surface or cohort with materially different monetization $\alpha$ or audience dynamics, while enforcing common shadow prices across segments so that resources flow to where marginal value is highest. Second, cap bar moves to keep pass rates within a governance window (e.g., 25-70\%); outside that window, the diagnosticity-per-dollar curve is typically flat and targeted bounties lose their advantage. Third, validate each change with a small, randomized holdout to detect nonstationarities (e.g., a competing launch that shifts the frontier), and roll back if the measured leverage ratio or prize spread degrades.

Once stabilized, the loop converges to a regime in which (i) guarantees are front-loaded and sized so that effort has a meaningful certain payoff, (ii) the bar sits where the pass event is most informative for the observed cohort, (iii) the engine’s prize spread is large enough that crossing the bar matters, and (iv) the posted bounty is just large enough to close any residual wedge from incomplete revenue sharing. In this regime, the mapping $\Xi(\mu)=q+H_0+\Delta H\,P(\mu)$ is not merely an analytic convenience: it is a live operational contract between the testing policy and the engine that sustains strong, cost-effective incentives week after week.

\section{Resource-Constrained Design and Balanced Exploration Algorithm}\label{app:resources}

This appendix develops the formal backbone of Section~\ref{sec:resources_bwk}. We write the platform’s constrained design problem at the per-entrant level, introduce shadow prices for the two resources (discounted early impressions and expected bounty spend), and derive explicit expressions for the marginal gains that enter the equal-marginal-value rule. The goal is to make the decomposition into \emph{direct} and \emph{indirect} effects fully transparent: one extra early impression raises value immediately by surfacing the entrant once more, and it also nudges investment upward; one extra expected payout dollar buys targeting power where the graduation probability is steep, and it, too, nudges investment. The formulas below quantify these channels and yield clean Kuhn-Tucker conditions for interior and corner solutions.

Throughout we take the generalized continuation representation from Appendix~\ref{app:engines} as primitive: after testing, a fail state earns expected discounted continuation $H_0$ and a pass state earns $H_1$, with spread $\Delta H\equiv H_1-H_0\ge 0$. The exposure aggregator is
\[
\Xi(\mu)\;=\;q\;+\;H_0\;+\;\Delta H\,P(\mu),
\]
and the platform’s per-entrant objective (net of bounty payments) is
\[
W(\mu; q,B)\;=\;\mu\,\Xi(\mu)\;-\;B\,P(\mu)\;=\;\mu\,[\,q+H_0+\Delta H P(\mu)\,]\;-\;B P(\mu).
\]
The entrant chooses quality $\mu$ to maximize
\[
\Pi_C(\mu; q,B)\;=\;\alpha\,\mu\,[\,q+H_0+\Delta H P(\mu)\,]\;+\;B\,P(\mu)\;-\;c(\mu),
\]
with unique best response $\mu^\star(q,B)$ characterized by the private first-order condition
\begin{equation}\label{eq:FOC-private-D}
G(\mu; q,B)\;\equiv\;\alpha\big[q+H_0+\Delta H P(\mu)\big]\;+\;\alpha\,\mu\,\Delta H\,P'(\mu)\;+\;B\,P'(\mu)\;-\;c'(\mu)\;=\;0,
\end{equation}
under Assumption~A1 (Appendix~\ref{app:proofs}). We suppress $(\bar{\mu},s,\alpha,H_0,\Delta H)$ in notation when no confusion arises. The pass probability $P(\mu)$ and slope $P'(\mu)$ are as defined in the main text and Appendix~\ref{app:proofs}.

\subsection*{D.1 Constrained objective and shadow prices}

The platform allocates two scarce resources per entrant: discounted early impressions (the guaranteed window $q$) and expected bounty spend (equal to $B\,P(\mu^\star)$). Let $R$ and $M$ denote the per-entrant budgets for these resources over a planning window. The per-entrant constrained program mirrors \eqref{eq:perentrant_program}:
\begin{equation}\label{eq:program-D}
\max_{q\ge 0,\;B\ge 0}\;\; W\!\left(\mu^\star(q,B);\,q,B\right)
\quad\text{s.t.}\quad
q\;\le\;R,\qquad B\,P\!\left(\mu^\star(q,B)\right)\;\le\;M.
\end{equation}
Introduce shadow prices $\lambda_{\mathrm{imp}}\ge 0$ (per discounted early impression) and $\lambda_{\$}\ge 0$ (per expected payout dollar). The Lagrangian is
\begin{equation}\label{eq:Lagrangian-D}
\mathcal{L}(q,B;\lambda_{\mathrm{imp}},\lambda_{\$})
\;=\;W\!\left(\mu^\star;q,B\right)\;-\;\lambda_{\mathrm{imp}}\,(q-R)\;-\;\lambda_{\$}\,\big(B P(\mu^\star)-M\big),
\end{equation}
where $\mu^\star$ abbreviates $\mu^\star(q,B)$. At an interior optimum the first-order conditions read
\[
\frac{\partial \mathcal{L}}{\partial q}\;=\;0,\qquad \frac{\partial \mathcal{L}}{\partial B}\;=\;0,
\]
with complementary-slackness and nonnegativity conditions on $(\lambda_{\mathrm{imp}},\lambda_{\$})$. We now expand the two derivatives by the chain rule to obtain the marginal-benefit objects that appear in the balanced exploration rule.

\subsection*{D.2 Marginal gains from $q$ and $B$ (direct and indirect channels)}

Write $\mu_q\equiv \partial \mu^\star/\partial q$ and $\mu_B\equiv \partial \mu^\star/\partial B$. By the implicit-function theorem applied to \eqref{eq:FOC-private-D},
\begin{equation}\label{eq:mu-derivs}
\mu_q\;=\;-\frac{\partial G/\partial q}{\partial G/\partial \mu}\;=\;\frac{\alpha}{D(\mu^\star)},\qquad
\mu_B\;=\;-\frac{\partial G/\partial B}{\partial G/\partial \mu}\;=\;\frac{P'(\mu^\star)}{D(\mu^\star)},
\end{equation}
where the \emph{gap curvature}
\begin{equation}\label{eq:denominator-D}
D(\mu^\star)\;\equiv\;c''(\mu^\star)\;-\;\alpha\,\Delta H\Big(2\,P'(\mu^\star)+\mu^\star P''(\mu^\star)\Big)\;-\;B\,P''(\mu^\star)\;>\;0
\end{equation}
by Assumption~A1. Thus, as established in Appendix~A.4, both $\mu_q$ and $\mu_B$ are nonnegative: more guaranteed exposure or a larger bounty induces (weakly) higher quality.

Differentiate $W(\mu^\star;q,B)$ with respect to $q$:
\[
\frac{dW}{dq}\;=\;\underbrace{\frac{\partial W}{\partial q}}_{\text{direct}}\;+\;\underbrace{\frac{\partial W}{\partial \mu}\,\mu_q}_{\text{indirect via investment}}.
\]
The direct term is the certain extra engagement from one more early impression:
\[
\frac{\partial W}{\partial q}\;=\;\mu^\star.
\]
The indirect term multiplies the induced change in quality by the \emph{planner-side marginal value of quality at the current policy},
\begin{equation}\label{eq:MV-planner}
\frac{\partial W}{\partial \mu}\;=\;q+H_0+\Delta H\,P(\mu^\star)\;+\;\mu^\star\,\Delta H\,P'(\mu^\star)\;-\;B\,P'(\mu^\star).
\end{equation}
Combining and substituting \eqref{eq:mu-derivs} gives the marginal gain from one additional discounted early impression:
\begin{equation}\label{eq:MBq}
\mathrm{MB}_q\;\equiv\;\frac{dW}{dq}\;=\;\mu^\star\;+\;\frac{\alpha}{D(\mu^\star)}\Big[q+H_0+\Delta H\,P(\mu^\star)+\mu^\star \Delta H\,P'(\mu^\star)-B\,P'(\mu^\star)\Big].
\end{equation}
The first term is transparent; the second says that $q$ is especially valuable when the current policy leaves a large wedge between the planner’s marginal value of quality and zero (the bracketed term) and when creators are responsive (large $\alpha$ and small $D(\mu^\star)$).

Differentiate $W$ with respect to $B$:
\[
\frac{dW}{dB}\;=\;\underbrace{\frac{\partial W}{\partial B}}_{\text{direct}}\;+\;\underbrace{\frac{\partial W}{\partial \mu}\,\mu_B}_{\text{indirect via investment}}.
\]
The direct term is negative: higher bounty pays more upon graduation,
\[
\frac{\partial W}{\partial B}\;=\;-\,P(\mu^\star).
\]
The indirect term is positive when the planner’s marginal value of quality (the bracket in \eqref{eq:MV-planner}) is positive, because $\mu_B\ge 0$:
\begin{equation}\label{eq:MBB-instrument}
\mathrm{MB}^{\text{(instr)}}_B\;\equiv\;\frac{dW}{dB}\;=\;-\,P(\mu^\star)\;+\;\frac{P'(\mu^\star)}{D(\mu^\star)}\Big[q+H_0+\Delta H\,P(\mu^\star)+\mu^\star \Delta H\,P'(\mu^\star)-B\,P'(\mu^\star)\Big].
\end{equation}
Expression \eqref{eq:MBB-instrument} is the \emph{marginal gain per unit of the \emph{instrument} $B$}. The resource the planner budgets, however, is expected payout $B\,P(\mu^\star)$. One unit increase in $B$ raises expected payout by
\[
\frac{d}{dB}\big[\,B\,P(\mu^\star)\,\big]\;=\;P(\mu^\star)\;+\;B\,P'(\mu^\star)\,\mu_B\;=\;P(\mu^\star)\;+\;B\,\frac{(P'(\mu^\star))^2}{D(\mu^\star)}.
\]
Hence the \emph{marginal gain per expected payout dollar} is the ratio
\begin{equation}\label{eq:MB-dollar}
\mathrm{MB}_{\$}\;\equiv\;\frac{dW/dB}{d\,[B P(\mu^\star)]/dB}\;=\;\frac{-\,P(\mu^\star)\;+\;\dfrac{P'(\mu^\star)}{D(\mu^\star)}\Big[q+H_0+\Delta H\,P(\mu^\star)+\mu^\star \Delta H\,P'(\mu^\star)-B\,P'(\mu^\star)\Big]}{P(\mu^\star)\;+\;B\,\dfrac{(P'(\mu^\star))^2}{D(\mu^\star)}}.
\end{equation}

Two observations make \eqref{eq:MBq}-\eqref{eq:MB-dollar} easy to use. First, both expressions are increasing in the \emph{diagnostic leverage} of the pass event at the current equilibrium, through $P'(\mu^\star)$ and (for the bounty) through the denominator that scales with $B (P')^2/D$. This formalizes the intuition that dollars and impressions are most valuable when the graduation margin is steep. Second, the bracket in \eqref{eq:MBq}-\eqref{eq:MBB-instrument} is the planner’s marginal value of $\mu$ net of the transfer term $B P'(\mu^\star)$. When that bracket is small (because the policy is already close to the benchmark that aligns private and social incentives), the \emph{indirect} channel contributes little and the marginal value of resources is dominated by the direct terms ($\mu^\star$ for impressions, $-P(\mu^\star)$ for dollars). When the bracket is large, the induced quality response materially boosts both marginal values.

\paragraph{Equal-marginal-value rule and Kuhn-Tucker characterization.}
Differentiating the Lagrangian \eqref{eq:Lagrangian-D} yields the necessary conditions
\begin{equation}\label{eq:KKT-equal}
\mathrm{MB}_q\;=\;\lambda_{\mathrm{imp}},\qquad \mathrm{MB}_{\$}\;=\;\lambda_{\$},
\end{equation}
at any interior optimum with both constraints binding. At corners the inequalities reverse in the familiar way:
\[
q=0\ \Rightarrow\ \mathrm{MB}_q\le \lambda_{\mathrm{imp}},\qquad
B=0\ \Rightarrow\ \mathrm{MB}_{\$}\le \lambda_{\$},
\]
and analogously when a budget slackens (shadow price zero). Condition \eqref{eq:KKT-equal} is the \emph{balanced exploration rule}: raise $q$ until its marginal gain per discounted impression equals the impression shadow price, and raise $B$ until its marginal gain per expected dollar equals the cash shadow price. The ratio form,
\[
\frac{\mathrm{MB}_q}{\lambda_{\mathrm{imp}}}\;=\;\frac{\mathrm{MB}_{\$}}{\lambda_{\$}}\;=\;1,
\]
is convenient when shadow prices are known up to a common scale (e.g., from portfolio-level budget constraints).

\textbf{Two instructive special cases:}

\emph{(i) Near-alignment regimes.}
Suppose the policy is tuned so that the planner’s marginal value at $\mu^\star$ is small, i.e., the bracket in \eqref{eq:MV-planner} evaluated at $\mu^\star$,
\[
K(\mu^\star)\;\equiv\;q+H_0+\Delta H\,P(\mu^\star)+\mu^\star\,\Delta H\,P'(\mu^\star)\;-\;B\,P'(\mu^\star),
\]
is near zero. Using \eqref{eq:MB-dollar}, write
\[
\mathrm{MB}_{\$}(B)\;=\;\frac{-\,P(\mu^\star)\;+\;\dfrac{P'(\mu^\star)}{D(\mu^\star)}\,K(\mu^\star)}
{P(\mu^\star)\;+\;B\,\dfrac{(P'(\mu^\star))^{2}}{D(\mu^\star)}}
\;=\;\frac{A - d\,B}{P(\mu^\star)+d\,B},
\]
with $A\equiv -P(\mu^\star)+\dfrac{P'(\mu^\star)}{D(\mu^\star)}K(\mu^\star)$ and $d\equiv \dfrac{(P'(\mu^\star))^{2}}{D(\mu^\star)} > 0$. Under near alignment $K(\mu^\star)\approx 0$, so $A\approx -P(\mu^\star)$ and
\[
\mathrm{MB}_{\$}(B)\ \approx\ \frac{-P(\mu^\star)-d\,B}{P(\mu^\star)+d\,B}\ \equiv\ -1\quad\text{for all }B.
\]
In words, when private and planner marginal values are already aligned at the equilibrium, an extra \emph{expected payout dollar} is approximately a pure transfer and the per-dollar marginal gain is very close to $-1$ throughout.

\emph{(ii) Quadratic costs.} If $c(\mu)=\tfrac{\kappa}{2}(\mu-\underline{\mu})^2$, then $D(\mu^\star)=\kappa-\alpha\,\Delta H(2P'+\mu^\star P'')-B P''$. When $q$ and $B$ are small (early pilots), $D(\mu^\star)\approx \kappa$ and
\[
\mu_q\ \approx\ \frac{\alpha}{\kappa},\qquad \mu_B\ \approx\ \frac{P'(\mu^\star)}{\kappa},\qquad
\mathrm{MB}_q\ \approx\ \mu^\star\;+\;\frac{\alpha}{\kappa}\big[q+H_0+\Delta H P+\mu^\star \Delta H P'-B P'\big],
\]
with an analogous simplification for $\mathrm{MB}_{\$}$. These linearized forms are helpful for back-of-the-envelope sizing in dashboards.

Equations \eqref{eq:MBq} and \eqref{eq:MB-dollar} translate directly into the design choices the platform controls. The more diagnostic the pass event (large $P'$), the more each instrument “buys” in terms of incentives; the more creators internalize engagement ($\alpha$ large), the more $q$ induces quality; the larger the prize spread $\Delta H$, the more both instruments leverage the graduation margin; and the tighter the budgets (large shadow prices), the more aggressively the policy should concentrate resources where the slope is steep. The Kuhn-Tucker conditions provide a practical stop rule: if $\mathrm{MB}_q$ falls below the impression shadow price, stop increasing $q$; if $\mathrm{MB}_{\$}$ falls below the cash shadow price, stop raising the bounty; if one instrument’s marginal value is well above its shadow price while the other’s is below, rebalance accordingly.

\subsection*{D.3 A balanced exploration algorithm: from formulas to weekly tuning}

The Kuhn-Tucker conditions \eqref{eq:KKT-equal} provide a target but not yet an operational procedure. This subsection turns \eqref{eq:MBq}-\eqref{eq:MB-dollar} into a simple weekly loop that co-tunes $(q,B)$ to budgets while keeping the policy near the implementability benchmark. The inputs are (i) estimates of the pass probability and slope around the current equilibrium, $\widehat{P}$ and $\widehat{P}'$ (Appendix~\ref{app:signals}), (ii) estimates of the continuation landscape $(\widehat{H}_0,\widehat{\Delta H})$ (Appendix~\ref{app:engines}), (iii) the revenue share $\widehat{\alpha}$ and the curvature proxy for costs (either a parametric $c''$ or a local second-difference estimate), and (iv) per-entrant budgets $(R,M)$ for discounted early impressions and expected bounty spend.

\emph{Step 1: Warm-start the equilibrium.} Given current $(q,B)$, solve the private FOC \eqref{eq:FOC-private-D} for $\mu^\star$. Because the left-hand side is strictly decreasing in $\mu$ (Assumption~A1), a bracketed root-finder converges rapidly; warm-starting from last week’s solution avoids recomputation. Record $\widehat{\mu}^\star$ and evaluate $\widehat{P}$ and $\widehat{P}'$ at that point.

\emph{Step 2: Compute marginal gains.} Form the curvature denominator $\widehat{D}=c''(\widehat{\mu}^\star)-\widehat{\alpha}\,\widehat{\Delta H}\big(2\widehat{P}'+\widehat{\mu}^\star \widehat{P}''\big)-B\,\widehat{P}''$ (with $\widehat{P}''$ from a smooth pass-curve fit; if unavailable, set $\widehat{P}''\!=\!0$ as a conservative approximation). Plug into \eqref{eq:MBq} and \eqref{eq:MB-dollar} to obtain $\widehat{\mathrm{MB}}_q$ and $\widehat{\mathrm{MB}}_{\$}$.

\emph{Step 3: Update the instruments (primal) and shadow prices (dual).} Let $(\lambda_{\mathrm{imp}},\lambda_{\$})$ denote current shadow-price iterates. A robust primal-dual step is
\begin{align*}
q^{+} &\leftarrow \Pi_{[0,R]}\!\Big(q\ +\ \eta_q\,(\widehat{\mathrm{MB}}_q-\lambda_{\mathrm{imp}})\Big),\\
B^{+} &\leftarrow \Pi_{[0,\overline{B}]}\!\Big(B\ +\ \eta_B\,(\widehat{\mathrm{MB}}_{\$}-\lambda_{\$})\Big),\\[2pt]
\lambda_{\mathrm{imp}}^{+} &\leftarrow \Big[\lambda_{\mathrm{imp}}\ +\ \rho\,\big(q^{+}-R\big)\Big]_{+},\qquad
\lambda_{\$}^{+} \leftarrow \Big[\lambda_{\$}\ +\ \rho\,\big(B^{+}\widehat{P}-M\big)\Big]_{+},
\end{align*}
where $\Pi$ projects to the feasible interval, $[\,\cdot\,]_+$ clips at zero, $\overline{B}$ is an administrative cap, and $(\eta_q,\eta_B,\rho)$ are small step sizes (e.g., $10^{-2}$-$10^{-1}$) with exponential smoothing across weeks to avoid oscillations. Intuitively, increase an instrument when its marginal gain exceeds the current shadow price; increase a shadow price when its budget is exceeded.

\emph{Step 4: Keep the pass event diagnostic and the schedule front-loaded.} Independently of the budget loop, maintain a corridor for diagnostic leverage $\Lambda=\widehat{P}'/\widehat{P}$ (or equivalently for the pair of pass rate and slope). If $\Lambda$ drifts low (pass events too rare or too common), adjust the bar by one unit $s\mapsto s\pm 1$; if novelty decay is present, ensure guarantees are scheduled as early as eligibility permits (Appendix~\ref{app:signals}, B.5; Appendix~\ref{app:proofs}, A.5). Recompute $(\widehat{P},\widehat{P}')$ after any bar or schedule change.

\emph{Step 5: Optional implementability snap-to.} When governance prefers the first-best benchmark, replace $B$ by the implementability bounty
\[
\widehat{B}^{\star}\ =\ \frac{\big[q+\widehat{H}_0+\widehat{\Delta H}\,\widehat{P}+\widehat{\mu}^{FB}\,\widehat{\Delta H}\,\widehat{P}'\big]\,(1-\widehat{\alpha})}{\widehat{P}'},
\]
clipped to satisfy the cash budget ($\widehat{B}^{\star}\widehat{P}\le M$). Near this point the bracket in \eqref{eq:MV-planner} is small, which simplifies marginal-gain monitoring: $\mathrm{MB}_q\approx \widehat{\mu}^{FB}$ and $\mathrm{MB}_{\$}\in(-1,0)$.

\emph{Convergence and stability.} Under standard conditions for subgradient methods (diminishing step sizes; slowly moving telemetry), the iterates converge to a neighborhood of the Kuhn-Tucker set. Two practical safeguards tighten the loop: cap weekly changes to one unit in $q$ or $s$ and to a small fraction (e.g., 10\%) for $B$, and validate each move in a small randomized holdout to detect nonstationarities (e.g., an engine update).

\subsection*{D.4 Estimating shadow prices, trading resources, and segment allocation}

\emph{Shadow prices and the value of budgets.} At the constrained optimum, the envelope theorem gives
\[
\frac{\partial W^{\star}}{\partial R}\;=\;\lambda_{\mathrm{imp}}\ \ \ge 0,\qquad
\frac{\partial W^{\star}}{\partial M}\;=\;\lambda_{\$}\ \ \ge 0,
\]
so the shadow prices are interpretable as the marginal value of one more discounted early impression and one more expected payout dollar, respectively. They can be estimated two ways: (i) \emph{internally} from the dual iterates in the primal-dual loop; or (ii) \emph{experimentally} by running small budget perturbations (e.g., $\pm 5\%$) in otherwise identical cohorts and measuring the induced change in constrained welfare, $\Delta W/\Delta R$ and $\Delta W/\Delta M$.

\emph{An exchange rate between attention and cash.} When attention is cut by $\Delta R<0$ and one wishes to offset the loss by increasing the cash budget $\Delta M>0$ with minimal impact on constrained welfare, a first-order rule is
\[
\lambda_{\mathrm{imp}}\,\Delta R\ +\ \lambda_{\$}\,\Delta M\ \approx\ 0
\quad\Longrightarrow\quad
\Delta M\ \approx\ -\,\frac{\lambda_{\mathrm{imp}}}{\lambda_{\$}}\ \Delta R.
\]
The ratio $\lambda_{\mathrm{imp}}/\lambda_{\$}$ is the \emph{resource exchange rate}: expected payout dollars required to compensate one lost discounted impression. It is high when pass events are weakly diagnostic (cash is inefficient) and low when the slope is steep.

\emph{Sensitivity to monetization.} A change in monetization $\alpha$ shifts both instruments’ potency. Near the implementability bounty, $B^\star$ scales with $(1-\alpha)$ (Appendix~\ref{app:proofs}, A.6). Thus, when $\alpha$ increases (e.g., a surface starts sharing more ad revenue), cash needs fall, and attention becomes the binding resource; when $\alpha$ decreases, a larger bounty is required to maintain alignment, and the cash shadow price rises. Dashboards should therefore track $(\widehat{\alpha},\widehat{B}^{\star}\widehat{P})$ jointly and precompute how much $M$ must rise or can fall to keep the same target $\widehat{\mu}^{FB}$.

\emph{Corners and failure modes.} Two edge cases deserve explicit treatment. First, if the measured slope $\widehat{P}'$ is small at the current equilibrium, both $\mathrm{MB}_q$’s indirect term and $\mathrm{MB}_{\$}$ collapse; targeted dollars do little work. The remedy is to retune the bar (move $s$ toward the cohort’s center) or, if measurement is the culprit, to improve label sensitivity (Appendix~\ref{app:signals}, B.4) before spending cash. Second, if $H_0$ is large and $\Delta H$ is small (a “soft” engine with weak band separation), slope incentives are mechanically weak; increase the prize spread (engine tuning) or lean more on $q$ until the spread is restored.

\emph{Segment allocation with common shadow prices.} Let $s\in\mathcal{S}$ index segments with shares $\pi_s$ and primitives $(\alpha_s,c_s,\bar{\mu}_s,H_{0,s},\Delta H_s)$. For a fixed pair of shadow prices $(\lambda_{\mathrm{imp}},\lambda_{\$})$, each segment chooses $(q_s,B_s)$ to satisfy
\[
\mathrm{MB}_{q,s}(q_s,B_s)\;=\;\lambda_{\mathrm{imp}},\qquad \mathrm{MB}_{\$,s}(q_s,B_s)\;=\;\lambda_{\$},
\]
using the segment-specific analogues of \eqref{eq:MBq} and \eqref{eq:MB-dollar}. Aggregating the resource uses $\sum_s \pi_s q_s\le R$ and $\sum_s \pi_s B_s P_s(\mu^\star_s)\le M$ then pins down $(\lambda_{\mathrm{imp}},\lambda_{\$})$. In discrete operations (integer $q_s$ and bounded $B_s$ grids), a simple greedy scheduler approximates the solution: (i) allocate one guaranteed impression at a time to the segment with the largest current $\mathrm{MB}_{q,s}$ until $R$ is exhausted; (ii) allocate one expected payout dollar at a time to the segment with the largest current $\mathrm{MB}_{\$,s}$ until $M$ is exhausted; (iii) recompute slopes and marginal gains after each allocation to respect diminishing returns. This “two water-fills” heuristic converges quickly and respects common shadow prices.

\emph{What to report and monitor.} To keep the loop auditable, dashboards should display: current $(q,s,B)$ by segment; delivered discounted mass $\hat{q}$; pass rate $\widehat{P}$ and slope $\widehat{P}'$ at the cohort median; continuation spread $\widehat{\Delta H}$; marginal gains $\widehat{\mathrm{MB}}_q$ and $\widehat{\mathrm{MB}}_{\$}$; shadow prices $(\widehat{\lambda}_{\mathrm{imp}},\widehat{\lambda}_{\$})$; and implied exchange rate $\widehat{\lambda}_{\mathrm{imp}}/\widehat{\lambda}_{\$}$. Movement in these quantities should justify all policy changes: bar moves when $\widehat{\Lambda}$ drifts, bounty moves when $\widehat{\alpha}$ or $M$ moves, and early-slot moves when $\widehat{\lambda}_{\mathrm{imp}}$ or novelty decay changes.

\noindent\textbf{Summary.} The resource-constrained calculus yields implementable rules. Compute the two marginal gains, compare them to their shadow prices, and adjust $(q,B)$ until equal-marginal-value holds under the current budgets. Keep the pass event diagnostic, the schedule front-loaded, and the engine’s prize spread meaningful. With these pieces in place, dollars and impressions are deployed precisely where their next unit buys the most incentive.
\end{document}